\documentclass{article}
\usepackage{graphicx} 
\usepackage{amsmath}
\usepackage{amssymb}
\usepackage{color}
\usepackage{url}
\usepackage{algorithm}
\usepackage[noend]{algpseudocode}

\usepackage{todonotes}
\usepackage{amsthm}
\newtheorem{theorem}{Theorem}[section]
\newtheorem{proposition}{Proposition}[section]

\newtheorem{definition}{Definition}[section]

\title{Asymptotic Counting of Binary Phylogenetic Networks}
\author{Hao Yu, Louxin Zhang\thanks{Corresponding author: Department of Mathematics, National University of Singapore, 10 Lower Kent Ridge Road, Singapore 119076. Email: matzlx@nus.edu.sg.}\\
Department of Mathematics\\
National University of Singapore}
\date{}

\begin{document}

\maketitle

\begin{abstract}
Phylogenetic networks provide a powerful framework for modeling reticulate evolutionary processes such as hybridization, recombination, and horizontal gene transfer.  Using edge insertion, we analyze the local structures that affect the number of possible constructions of binary phylogenetic networks. By bounding the contribution of networks with exceptional local configurations and combining these bounds with known asymptotic formulas for tree-child networks, we show that, when $k=o(\sqrt n)$, the number of binary phylogenetic networks with $k$ reticulations on $n$ taxa is asymptotic to
\[
\binom{n}{k}2^{n+k-1/2}n^{n+k-1}e^{-n}.
\]
Using the exact count of tree-child networks, we further show that normal networks have the same asymptotic count. This answers two open problems in counting phylogenetic networks.
\end{abstract}
{\it Keywords:} Phylogenetic networks, tree-child networks, asymptotic count

\section{Introduction}

Phylogenetic trees have been used as the standard model for representing evolutionary relationships among species for several decades \cite{doolittle1999}. 
In a tree model, evolutionary history is assumed to be vertically divergent, with each lineage descending from a unique ancestral lineage. 
However,  inherently reticulate evolutionary processes exist widely, including hybridization, recombination, horizontal gene transfer, and introgression. 
These processes cannot be adequately represented by a phylogenetic tree, and this has motivated the study of phylogenetic networks as a more general framework for modeling evolutionary histories with reticulation events in genomes
\cite{Gusfield_book,HusonRuppScornavacca2010,Steel_book,zhang2019clusters}.

From a mathematical point of view, phylogenetic networks form a much richer and more complex class of structures than phylogenetic trees. The space of phylogenetic networks is huge and therefore inference of phylogenetic networks is much more challenging \cite{zhang2026phylofusion}.
While the number of binary phylogenetic trees on a fixed set of $n$ taxa is given by $(2n-3)!!$, the enumeration of phylogenetic networks is substantially more difficult 
\cite{Semple_15}. 
The difficulty arises from the fact that networks on the same number of taxa and reticulations may have very different local and global structures 
\cite{HaoYu2026_JCB}. 
Understanding such networks through counting is important for phylogenetic inference, because the size of the underlying search space directly affects the design, analysis, and feasibility of reconstruction algorithms.

To make the study of phylogenetic networks more tractable, several biologically and mathematically meaningful subclasses have been introduced. 
 These include tree-child networks \cite{CardonaRosselloValiente2009}, 
   galled trees \cite{Gusfield_04,Wang_01}, galled networks \cite{HusonRuppScornavacca2010}, and tree-based networks \cite{Francis_15}. 
Among these classes, tree-child networks have received particular attention. 
They exclude certain degenerate configurations by requiring every non-leaf node to have at least one child that is a tree node or a leaf. 
This condition leads to favorable combinatorial and algorithmic properties, while still allowing a broad range of reticulate evolutionary scenarios.
Notably, there is an elegant recurrence formula  for counting binary tree-child networks, which was conjectured by Pons and Batle \cite{Conjecture_Pons} and proved by Liu et al. \cite{Conjecture_Proof} recently. Counting for tree-child networks have been extensively studied \cite{bouvel2020counting,cardona2020counting,chang2024,Fuchs_18,fuchs2022counting,Zhang_19}.

In this paper, we study the asymptotic enumeration of binary phylogenetic networks with $k$ reticulations on $n$ taxa, with particular emphasis on the range in which $k$ is non-constant but small relative to $n$. 
The enumeration of binary phylogenetic networks was first studied by McDiarmid, Semple, and Welsh \cite{Semple_15}. 
For fixed $k$, the leading asymptotic term  for the number of binary phylogenetic networks with $k$ reticulations on $n$ taxa is proved to be ${n\choose k} 2^{n+k-1/2}n^{n+k-1}e^{-n}$ in \cite{mansouri2022counting}; see also \cite{HaoYu2026_JCB}. 
The same asymptotic formula also holds for tree-child networks \cite{fuchs2022counting}. 
Our aim is to extend this analysis to a broader range for $k$ using a new technique by showing that, when $k=o(\sqrt{n})$, the same leading asymptotic term holds for counting binary phylogenetic networks. 

We analyze how networks can be generated by edge insertion and identify the structural configurations that affect the number of possible constructions. 
By bounding the contribution of networks with exceptional local structures and combining these bounds with known asymptotic formulas for tree-child networks, we derive asymptotic counting results for the class of binary phylogenetic networks. 
Our results show that, when $k=o(\sqrt{n})$, the same leading asymptotic term holds for counting binary phylogenetic networks and for counting normal networks. This settles open problems posed in \cite{fuchs2024asymptotic}.

\section{Basic Concepts and Notation}

In this section, we introduce the basic definitions and notation used throughout the paper.

\subsection{Binary Phylogenetic Networks}

Let $X$ be a finite set of taxa. A \emph{binary phylogenetic network} (BPN) on $X$ is a rooted directed acyclic graph (DAG) $N$ satisfying the following properties:
\begin{itemize}
    \item The root has indegree $0$ and outdegree $1$.
    \item Each leaf has indegree $1$ and outdegree $0$.  The leaves are bijectively labeled by the elements of $X$.
    \item Every internal node is either:
    \begin{itemize}
        \item a \emph{tree node} with indegree $1$ and outdegree $2$, or
        \item a \emph{reticulation (node)} with indegree $2$ and outdegree $1$.
    \end{itemize}
\end{itemize}
The set of nodes and directed edges in $N$ are denoted by $V(N)$ and $E(N)$, respectively. A BPN is given in Figure~\ref{fig:NtkExamples}a, where the network root is not drawn.

Let $u$ and $v$ be two nodes in $N$. The node $v$ is said to be \emph{below} $u$ if there is a directed path from $u$ to $v$. If $v$ is below $u$, we also say that
$u$ is an ancestor of $v$ and $v$ is a descendant of $u$. 

An edge entering a reticulation node is called a \emph{reticulation edge}, while all other edges are called \emph{tree edges}. The number of reticulations in $N$ is denoted by $r(N)$.

We summarize basic properties of BPNs in the following proposition, which will be used without further mention.

\begin{proposition}
Let $N$ be a BPN with $k$ reticulations on a set of $n$ taxa. Then:
\begin{enumerate}
\item $N$ has $n + k - 1$ tree nodes.

\item $N$ has $2n + 3k - 1$ edges, of which $2n + k - 1$ are tree edges and $2k$ are reticulation edges.
\end{enumerate}
\end{proposition}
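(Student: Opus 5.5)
We prove both parts by double counting with the handshake lemma for directed graphs: the sum of indegrees and the sum of outdegrees over all nodes each equal $|E(N)|$. Let $t$ be the number of tree nodes. The nodes of $N$ are the root, the $n$ leaves, the $t$ tree nodes and the $k$ reticulations, with the following degrees:
\begin{itemize}
\item the root has indegree $0$ and outdegree $1$;
\item each leaf has indegree $1$ and outdegree $0$;
\item each tree node has indegree $1$ and outdegree $2$;
\item each reticulation has indegree $2$ and outdegree $1$.
\end{itemize}

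Summing indegrees gives $|E(N)| = n + t + 2k$. Summing outdegrees gives $|E(N)| = 1 + 2t + k$. Equating the two yields $n + t + 2k = 1 + 2t + k$, so $t = n + k - 1$. This is part 1.

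Substituting $t$ back into the indegree count gives $|E(N)| = n + (n+k-1) + 2k = 2n + 3k - 1$. Every edge has a unique head, so edges are classified by the type of node they enter. By definition, the reticulation edges are exactly the edges entering a reticulation node. Each of the $k$ reticulations has indegree $2$, so there are $2k$ reticulation edges. The remaining $2n + 3k - 1 - 2k = 2n + k - 1$ edges are tree edges, namely those entering leaves or tree nodes; the count $n + t = 2n + k - 1$ confirms this. This is part 2.

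The argument uses only the degree constraints in the definition of a BPN. It needs neither acyclicity nor connectivity, apart from the fact that there is exactly one root.
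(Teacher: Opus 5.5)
Your proof is correct and complete. The paper states this proposition without proof, as a standard fact to be used without further mention, and your double count of in-degrees and out-degrees is the standard argument behind it. Nothing is missing.
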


\begin{figure}[!b]
    \centering
    \includegraphics[width=0.7\textwidth]{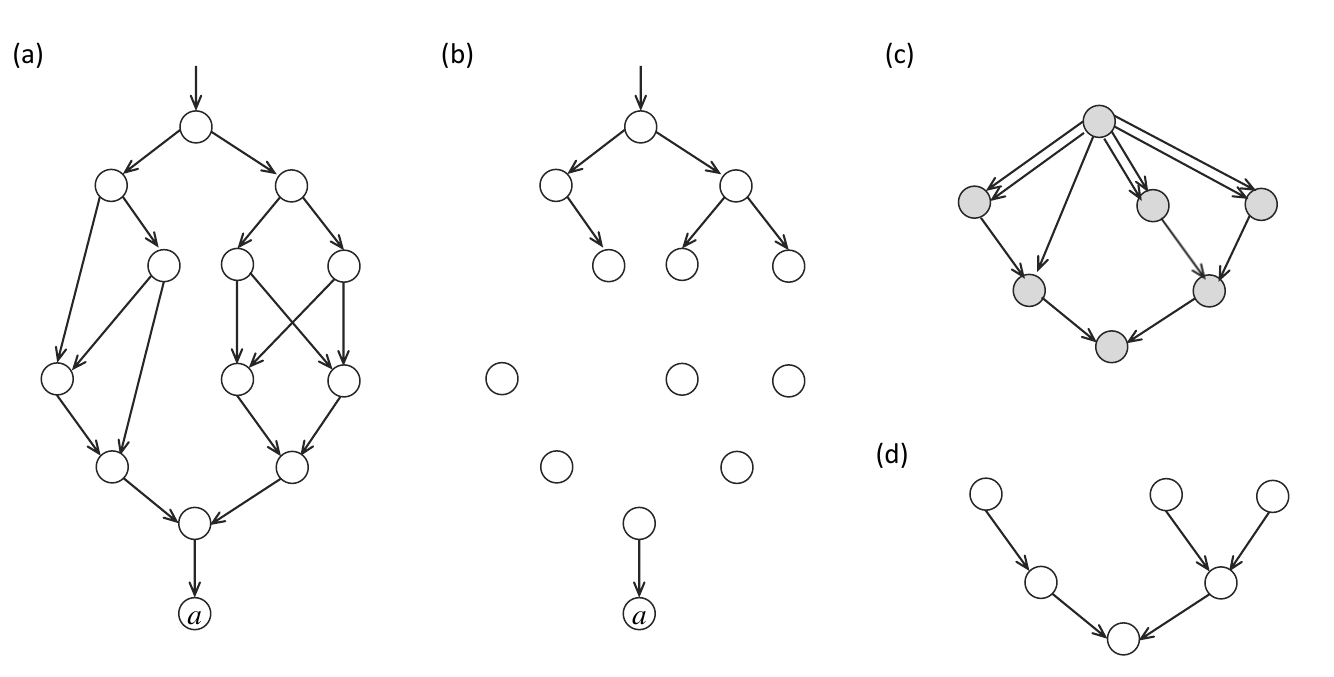}
    \caption{ (a) A binary phylogenetic network $N$ on one taxon $a$. (b) All tree components of $N$. (c) The component graph of $N$.
    (d) The unique reticulation cluster of $N$.
    \label{fig:NtkExamples}
    }
\end{figure}

Let $\Sigma$ be a set of taxa. A \textit{binary phylogenetic tree} (BPT) on $\Sigma$ is a BPN on $\Sigma$ that contains no reticulation nodes.

A \textit{forest} on $\Sigma$ is a disjoint union of BPTs $T_i$ defined on taxon sets $\Sigma_i$, where $\{\Sigma_i\}$ forms a partition of $\Sigma$, that is, $\bigcup_i \Sigma_i = \Sigma$ and $\Sigma_i \cap \Sigma_j = \emptyset$ for $i \neq j$.

\begin{proposition} 
\label{proposition4}
(Proposition 2.8.1, \cite{semple_book})
Let $\Sigma$ be a set of $n$ taxa, and let ${\cal F}_{n,k}$ denote the set of forests on $\Sigma$ consisting of $k$ rooted BPTs, where $n \ge k \ge 1$. Then,
\begin{eqnarray}
|{\cal F}_{n, k}|=\frac{(2n-k-1)!}{2^{\,n-k}(n-k)!(k-1)!}.
\end{eqnarray}
In particular, the number $t_n$ of BPTs on $\Sigma$ is
\begin{eqnarray}
t_n = |{\cal F}_{n,1}| = \frac{(2n-2)!}{2^{\,n-1}(n-1)!}.
\label{eqn22222}
\end{eqnarray}
\end{proposition}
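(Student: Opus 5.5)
The plan is to count forests by building them one leaf at a time, derive a recurrence, and check that the closed form satisfies it. Fix $\Sigma=\{1,\dots,n\}$ and write $f(n,k)=|{\cal F}_{n,k}|$. Here each component is a rooted BPT in the sense of the paper: it has a root of outdegree one, and a single-taxon component is just one root edge ending in a leaf. Such a tree on $m$ taxa has $2m-1$ edges, root edge included. So a forest in ${\cal F}_{n,k}$ has $\sum_i(2|\Sigma_i|-1)=2n-k$ edges in total.

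\textbf{Recurrence.} Remove leaf $n$ from a forest $F\in{\cal F}_{n,k}$. There are two cases.
\begin{itemize}
\item If $n$ forms a singleton component, deleting that component leaves a forest in ${\cal F}_{n-1,k-1}$.
\item Otherwise, delete $n$ and its incident edge, and suppress the resulting degree-two node. This leaves a forest in ${\cal F}_{n-1,k}$.
\end{itemize}
Conversely, every forest in ${\cal F}_{n-1,k}$ has $2(n-1)-k$ edges. Subdividing any one of them and attaching a pendant leaf $n$ gives a distinct preimage, and this inverts the second case. Hence
\[
f(n,k)=f(n-1,k-1)+(2n-2-k)\,f(n-1,k),
\]
with boundary values $f(n,n)=1$ (all singletons) and $f(n,0)=0$ for $n\ge1$, and with the convention $f(0,0)=1$.

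\textbf{Verification.} Set $g(n,k)=\frac{(2n-k-1)!}{2^{n-k}(n-k)!(k-1)!}$. Then $g(n,n)=\frac{(n-1)!}{(n-1)!}=1$. It remains to show that $g$ satisfies the recurrence, and the result follows by induction on $n$. Factor out the common term $\frac{(2n-k-2)!}{2^{n-k}(n-k)!(k-1)!}$:
\begin{itemize}
\item The first term on the right is $g(n-1,k-1)=\frac{(2n-k-2)!}{2^{n-k}(n-k)!(k-2)!}$, which equals the common factor times $(k-1)$.
\item The second term is $(2n-2-k)\,g(n-1,k)=(2n-k-2)\frac{(2n-k-3)!}{2^{n-1-k}(n-1-k)!(k-1)!}$, which equals the common factor times $2(n-k)$.
\end{itemize}
The sum of the two is therefore the common factor times $(k-1)+2(n-k)=2n-k-1$, which is exactly $g(n,k)$. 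Two edge cases need care. When $k=1$, the first term vanishes, matching $f(n-1,0)=0$. When $k=n$, the second term vanishes, since $f(n-1,n)=0$.

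Setting $k=1$ gives the stated formula for $t_n$. The only delicate point is the edge-count convention for root edges: counting $2m-1$ edges per tree, rather than $2m-2$, is what makes the multiplier $2n-2-k$ correct.
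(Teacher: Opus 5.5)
Your argument is correct and complete. The two-case leaf deletion gives $f(n,k)=f(n-1,k-1)+(2n-2-k)f(n-1,k)$, and the factorization through $\frac{(2n-k-2)!}{2^{n-k}(n-k)!(k-1)!}$ checks out. The diagonal $f(n,n)=1$, together with $f(n-1,0)=0$ for $n\ge 2$, anchors the induction; the case $n=k=1$ is simply part of the diagonal.

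The paper gives no proof of this statement. It imports the formula from Semple and Steel, so your route is a self-contained replacement for a citation rather than a variant of an argument in the text. Your version has one real advantage: it works in the paper's own convention. There a BPT has an outdegree-one root, hence $2m-1$ edges on $m$ taxa. Rooted trees in the cited book have no root edge, so your computation confirms that the cited count transfers verbatim to the paper's definition.

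Your closing remark slightly overstates the role of that convention, though. Without root edges, a forest in ${\cal F}_{n-1,k}$ has $2(n-1)-2k$ edges. Attaching the new leaf above each of the $k$ roots supplies the $k$ missing positions, so the multiplier is $2n-2-k$ under either convention. The citation itself buys only brevity.
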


\subsection{Tree Components and Component Graphs}

Let $N$ be a BPN. The \emph{tree components} of $N$ are defined as the connected components of the subgraph obtained by deleting all reticulation edges from $N$.
The subgraph obtained in this way has $r(N)+1$ connected components, each of which is a tree, possibly consisting of a single vertex, as shown in Figure~\ref{fig:NtkExamples}b.
We note that the definition of tree components adopted here differs slightly from those in earlier works \cite{Gambette_15,Gunawan_16,zhang2019clusters}, in order to better suit our purposes.


We distinguish the \emph{top tree component}, which has the same root as the given network $N$, from the remaining tree components, which are rooted at a reticulation node.


Let $\mathcal{C}$ be the set of tree components of $N$. The \emph{component graph} of $N$, denoted by $\mathcal{G}(N)$, is the directed graph defined as follows:
\begin{itemize}
    \item Each vertex of $\mathcal{G}(N)$ corresponds to a tree component in $\mathcal{C}$.
    \item There is a directed edge from a component $C_1$ to a component $C_2$ if and only if the root of $C_2$  is a child of a tree node lying in $C_1$.
\end{itemize}

The component graph $\mathcal{G}(N)$ is a directed acyclic graph (DAG), and it captures the hierarchical relationships among tree components induced by reticulation edges, shown in Figure~\ref{fig:NtkExamples}c. Note that there may be two parallel edges between two nodes in $\mathcal{G}(N)$.



\subsection{Reticulation clusters}

Let $N$ be a BPN. The \emph{reticulation clusters} of $N$ are defined as the connected components of
the subgraph obtained by deleting all tree nodes  (and their incident edges) from $N$. 
Each reticulation cluster can be viewed as an in-arborescence. 

\begin{proposition}
\label{prop3_entering_ret_cluster}
Let $N$ be a BPN, let   $R$ be a reticulation cluster with $a$ ($a\geq 1$) reticulation nodes. Then,
  
  (1) There are $a-1$ reticulation edges between the reticulation nodes in $R$,

  (2) There are $a+1$ edges that enter the cluster $R$ from tree nodes (i.e., edges
from tree nodes into reticulation nodes at the top of $R$).

  (3) The lowest reticulation node has exactly one outgoing edge, and its child is either a tree node or a leaf.
 \end{proposition}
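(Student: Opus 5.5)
The plan is to view $R$ as a finite graph whose vertex set is the $a$ reticulation nodes of the cluster and whose edges are the reticulation edges of $N$ joining two of them. Then I will count edge endpoints using only the degree constraints that define a BPN. First I note what ``connected component of the subgraph obtained by deleting all tree nodes'' means. A reticulation's child may be a leaf, so leaves survive the deletion, and so does the root if one is pedantic. However, a leaf has indegree $1$ and outdegree $0$, and a reticulation has outdegree $1$. Hence a leaf can only hang below a reticulation as a pendant vertex, and I will treat the cluster as the set of reticulation nodes together with the edges among them (matching the paper's figure). Pendant leaves are handled in part (3).

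For (1), every reticulation node has outdegree exactly $1$, so inside $R$ each node has out-degree at most $1$ within the cluster. Since $N$ is acyclic, following out-edges inside $R$ from any node must terminate. Connectedness of $R$ as an undirected graph, combined with out-degree at most one and acyclicity, means the underlying undirected graph is a tree. To see this, note that an undirected cycle in a graph where every vertex has out-degree $\le 1$ would force a directed cycle, which is impossible. Indeed, around an undirected cycle with $m$ edges, each edge is oriented out of one of its endpoints and each vertex contributes at most one out-edge, so all $m$ vertices use their unique out-edge on the cycle. The orientations must then be consistent, giving a directed cycle. A tree on $a$ vertices has $a-1$ edges, which gives (1). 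The same argument shows exactly one node of $R$, the root of the in-arborescence, has its out-edge leaving $R$, and every other node has its out-edge inside $R$.

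For (2), the $a$ reticulation nodes have total indegree $2a$. Of these incoming edges, $a-1$ come from other nodes of $R$ by (1). An incoming edge from a reticulation node outside $R$ is impossible, since that node would then be adjacent to, and hence in, the same component. The root of $N$ has outdegree $1$ and its child cannot be a reticulation (otherwise that reticulation's other parent would be a descendant of the root, creating a cycle? — more simply, under the standard convention the root's child is a tree node when $n\ge 2$, and I will invoke this). So the remaining $2a-(a-1)=a+1$ incoming edges all come from tree nodes. These land on reticulations in $R$ that have at least one parent outside $R$, which are the ``top'' reticulations.

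For (3), the lowest node is the arborescence root identified in (1), and its single out-edge leaves $R$. Its child is not a reticulation, since that child would then belong to $R$. It is also not the network root, which has indegree $0$. So the child is a tree node or a leaf.

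The only delicate step is the acyclicity-to-tree argument in (1). The root-child caveat in (2) I expect to dispose of by convention.
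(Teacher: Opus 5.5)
Your proof is correct. The paper gives no proof of this proposition; it only remarks just before it that each reticulation cluster ``can be viewed as an in-arborescence.'' Your argument justifies that remark: out-degree at most $1$ inside the cluster plus acyclicity makes the underlying undirected graph a tree. The in-degree count for (1)--(3) then follows, so you are filling in what the paper takes for granted rather than taking a different route.

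You should not defer one step to ``convention'': the fact that the root's child is never a reticulation. It follows from acyclicity and needs no hypothesis $n\ge 2$. The root is the unique node of in-degree $0$, so by finiteness and acyclicity every node is reachable from it, and hence from its unique child $c$. Suppose $c$ were a reticulation. Its second parent $p$ is not the root, since the root has out-degree $1$. Then $p$ would be reachable from $c$, and $c\to\cdots\to p\to c$ would be a directed cycle. Together with the observation that leaves, having out-degree $0$, cannot be parents, this shows the $a+1$ external in-edges in (2) all come from tree nodes.
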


 \begin{proposition}
  Let $N$ be a BPN with $k$ reticulations.
  If $N$ has $c$ reticulation clusters, then 

  (1) There are $k+c$ edges entering reticulation clusters from tree nodes.

  (2) There are $k-c$ edges between the reticulation nodes.


\end{proposition}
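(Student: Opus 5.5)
The plan is to sum the per-cluster counts of Proposition~\ref{prop3_entering_ret_cluster} over all clusters. The reticulation clusters partition the set of $k$ reticulation nodes, since each reticulation lies in exactly one connected component of the graph obtained by deleting the tree nodes. Write $a_1,\dots,a_c$ for the sizes of the $c$ clusters, so that $\sum_{i=1}^c a_i = k$ and each $a_i\ge 1$.

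\textbf{Part (2).} Every reticulation edge whose tail is a reticulation node has both endpoints among the reticulation nodes. Such an edge survives the deletion of the tree nodes, so its two endpoints lie in the same cluster. Hence the edges between reticulation nodes are exactly the internal edges of the clusters. By Proposition~\ref{prop3_entering_ret_cluster}(1), cluster $i$ has $a_i-1$ internal edges, so the total is $\sum_i (a_i-1)=k-c$.

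\textbf{Part (1).} By Proposition~\ref{prop3_entering_ret_cluster}(2), cluster $i$ receives $a_i+1$ edges from tree nodes. These edge sets are disjoint across clusters, because each edge has a unique head and that head lies in a unique cluster. Summing gives $\sum_i(a_i+1)=k+c$.

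\textbf{Consistency check.} Every reticulation edge enters a reticulation node, and its tail is either a tree node or a reticulation node. The root has outdegree~1, and I would treat it as a tree-type tail here; if the root's child is a reticulation, then that reticulation has indegree~1, which is impossible. So the reticulation edges split as $(k+c)+(k-c)=2k$, matching the count of reticulation edges from the first proposition. This confirms that no edge is double counted or missed.

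The only step needing real care is justifying that the clusters partition the reticulations and that internal edges never cross between clusters. Both facts follow directly from the definition of connected components, so I do not expect a genuine obstacle.
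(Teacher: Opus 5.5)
Your argument is correct and matches what the paper intends: it states this proposition without proof as a direct consequence of Proposition~\ref{prop3_entering_ret_cluster}. The clusters partition the $k$ reticulations, the per-cluster edge sets are disjoint, and summing $a_i-1$ and $a_i+1$ over the $c$ clusters gives $k-c$ and $k+c$.

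One small correction to your optional consistency check, which the main argument does not need. If the root's child were a reticulation $r$, then $r$ would still have indegree $2$. What actually rules this out is acyclicity: every node lies below the root, so the other parent of $r$ would lie below $r$ and close a directed cycle.
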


\subsection{Tree-Child Networks}

A BPN $N$ is called a \emph{tree-child network} if every internal node has at least one child that is a tree node or a leaf. Equivalently, each internal node is connected to a leaf via a path consisting of tree edges.

This condition ensures that from every internal node there exists a directed path to a leaf that avoids reticulation nodes except possibly at the starting vertex. It also implies that 
the leaves of each tree component are leaves of $N$ if $N$ is tree-child.

A tree-child network is said to be a \textit{simplex} network if the child of every reticulation node is a leaf. 

\begin{proposition}
\label{proposition5_complex}
Let $\Sigma=\{1, 2, ..., n\}$. The number of simplex networks with $k$ reticulations on $\Sigma$ in which the children of $k$ reticulations are 
leaves 1 to $k$ is $\frac{(2n - 2)!}{2^{\,n-1}(n - k - 1)!}$.
\end{proposition}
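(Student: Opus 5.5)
The plan is a bijection that cuts each reticulation into two leaves, followed by a direct count of the resulting trees.

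\textbf{Step 1 (splitting).} Let $N$ be a simplex network with reticulations $r_1,\dots,r_k$, where the child of $r_i$ is leaf $i$. For each $i$, delete $r_i$ and leaf $i$. The two parents of $r_i$ then each have a free child slot; attach there two new leaves labelled $i_a$ and $i_b$, assigned to the two parents in either order. The result is a BPT $T$ on the $n+k$ leaves $\{k+1,\dots,n\}\cup\{i_a,i_b : 1\le i\le k\}$. Call the $2k$ new leaves \emph{special}. Each network yields exactly $2^k$ such labelled trees, one for each choice of orientation of the $k$ pairs.

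\textbf{Step 2 (characterising the image).} I claim the image is exactly the set of BPTs on these $n+k$ leaves that have no cherry made of two special leaves. Such a cherry can arise in two ways, and both are excluded in $N$:
\begin{itemize}
\item $\{i_a,i_b\}$ would mean the two parents of $r_i$ coincide, giving parallel edges, which $N$ cannot have.
\item $\{i_a,j_b\}$ with $i\ne j$ would mean a tree node both of whose children are reticulations, violating the tree-child condition.
\end{itemize}
Conversely, take any such tree and glue each pair $i_a,i_b$ into a reticulation with child leaf $i$. The result is acyclic, since all edges still point away from the root of a tree. Every tree node has a child that is not a reticulation, because of the cherry condition. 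Every reticulation has a leaf child. So the result is a simplex network of the required type. The two maps are inverse up to the $2^k$ orientation choices.

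\textbf{Step 3 (counting).} Delete all special leaves from such a tree and suppress the resulting degree-two nodes. Because no tree node has two special children, this gives a BPT $T'$ on the $n-k$ ordinary leaves, with a root edge and $m=2(n-k)-1$ edges in total. Conversely, the tree is recovered from $T'$ by hanging the $2k$ distinguishable special leaves onto edges of $T'$. Several leaves may be hung on the same edge, in a linear order along that edge. Every such placement satisfies the cherry condition: each special leaf's sibling is either the next node down the subdivided edge or the original lower endpoint of that edge, and neither is a special leaf.

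Inserting the special leaves one at a time, the $j$-th leaf has $m+j-1$ available edges. So the number of placements is
\[
m(m+1)\cdots(m+2k-1)=\frac{(2n-2)!}{(2n-2k-2)!}.
\]
Multiplying by $t_{n-k}=\frac{(2n-2k-2)!}{2^{n-k-1}(n-k-1)!}$ from Proposition~\ref{proposition4} and dividing by $2^k$ gives
\[
\frac{(2n-2)!}{2^{n-1}(n-k-1)!}.
\]

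The main point needing care is Step 2, and in particular confirming that the only obstructions are the two kinds of special cherry listed there, so that the correspondence is exactly $2^k$-to-one. The rest is routine.
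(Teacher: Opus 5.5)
The paper does not prove this proposition. It quotes it as a known count and uses it only later, in Appendix C, so there is no in-paper argument to compare against.

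Your proof is correct, and it is essentially the standard argument for this count. The top tree component of such a simplex network is a BPT on the leaves $k+1,\dots,n$ with $2k$ unary nodes (the reticulation parents) placed in linear order along its $2(n-k)-1$ edges. This gives $t_{n-k}\,(2n-2k-1)(2n-2k)\cdots(2n-2)$ configurations, and a factor $2^{-k}$ accounts for the unordered parent pairs. Your detour through BPTs on $n+k$ leaves with no cherry of two special leaves is a clean repackaging of exactly this count.

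The one claim you assert rather than prove is that the $2^k$ orientations of a fixed network give pairwise distinct leaf-labelled trees. It does hold. A label-preserving isomorphism between two different orientations is a nontrivial automorphism of the underlying tree that swaps some pairs $i_a\leftrightarrow i_b$ and fixes every other leaf. At a highest node it moves, it exchanges two sibling subtrees. Every leaf in those subtrees is moved, hence special, and this forces a special cherry. Equivalently, you can note that tree-child networks have no nontrivial automorphism fixing the leaves.

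You should also state the implicit hypothesis $n>k$. Step 3 needs it for $T'$ to exist, and the formula needs it for $(n-k-1)!$ to make sense.
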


A tree-child network is said to be {\it normal} if there is no directed path from one parent to another for each reticulation node.




\begin{figure}[!t]
    \centering
    \includegraphics[width=0.7\textwidth]{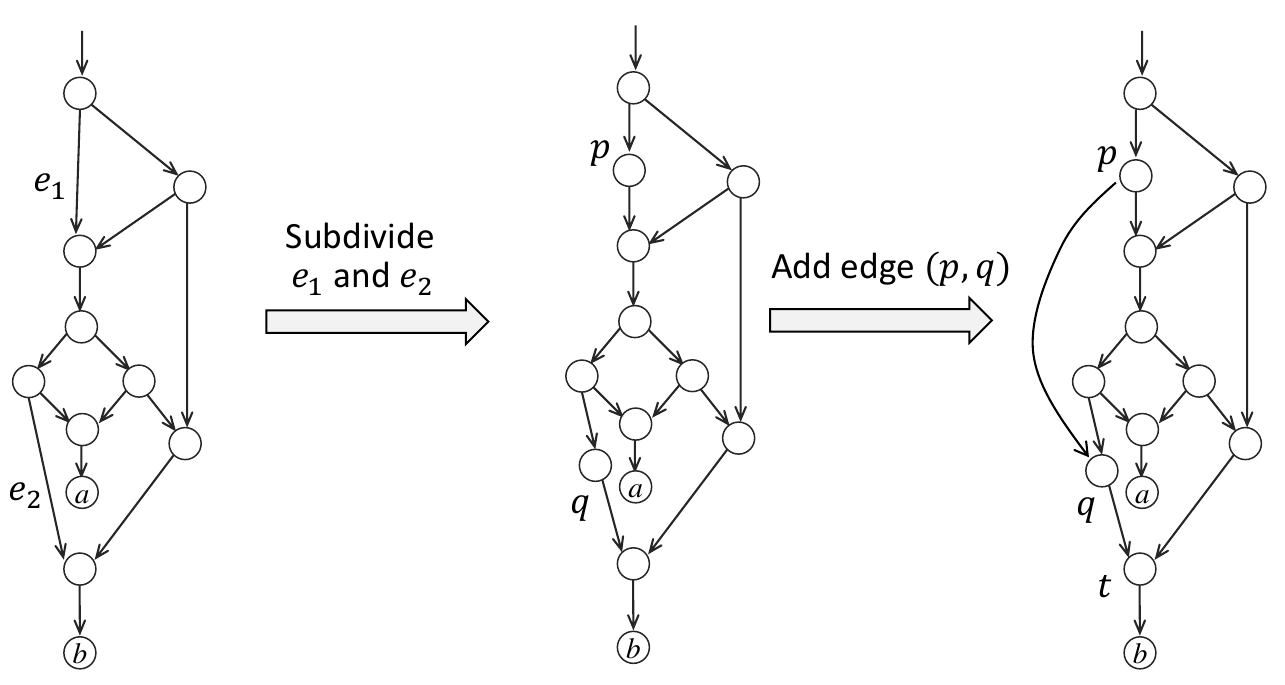}
    \caption{ Illustration of edge insertion in a phylogenetic network.
    \label{fig:Edge_insertion}
    }
\end{figure}

\section{Generating BPNs by Edge Insertion}

\subsection{Edge Insertion}

If $e=(u,v)$ is a directed edge in a directed graph, \emph{edge subdivision} replaces it by two edges:
$$(u,w), (w,v),$$
preserving the direction. If $w$ is a node of indegree 1 and outdegree 1 in a directed graph,  \emph{contraction} replaces the unique incoming edge $(u, w)$ and the unique outgoing edge $(w, v)$ of $w$ with 
$(u, v)$.

Consider a phylogenetic network $N$ with $k-1$ reticulation nodes. Let $e_1=(u, v)$ and 
$e_2=(s, t)$ be two edges of $N$ such that 
$t$ is not an ancestor of $u$. 
We construct a phylogenetic network $N'$ by inserting a directed edge from $e_1$ to $e_2$ as follows:
\begin{itemize}
\item Subdivide $e_1$ into two edges $(u, p)$ and $(p, v)$,
\item Subdivide $e_2$ into two edges $(s, q)$ and $(q, t)$, and
\item Add a directed edge $(p, q)$.
\end{itemize}
The resulting $N'$ is a network with $k$ reticulation nodes on the same set of taxa. This edge insertion operation is illustrated  in 
Figure~\ref{fig:Edge_insertion}.  In the rest of this paper,  $N'$ is said to be a network obtained from $N$ by inserting an edge ``from $e_1$ to $e_2$".

Assume that a BPN $N$ can be obtained from $N'$ by inserting an edge from $e'_1$ to $e'_2$, 
and also from $N''$ by inserting an edge from $e''_1$ to $e''_2$. 
These two edge-insertion operations are said to be \emph{different} if at least one of the following conditions holds:
\begin{itemize}
\item $N'$ and $N''$ are distinct.
\item $e'_1 \neq e''_1$.
\item $e'_2 \neq e''_2$.
\end{itemize}

A BPN with $k \, (\geq 1)$ reticulations can typically be obtained from a BPN with $k-1$ reticulations via edge insertion in multiple ways. 
However, a small subset of  BPNs cannot be constructed in this manner.  Such a BPN is shown in Figure~\ref{fig:symmetric_Case}c.


\begin{figure}[htbp]
    \centering
    \includegraphics[width=0.7\textwidth]{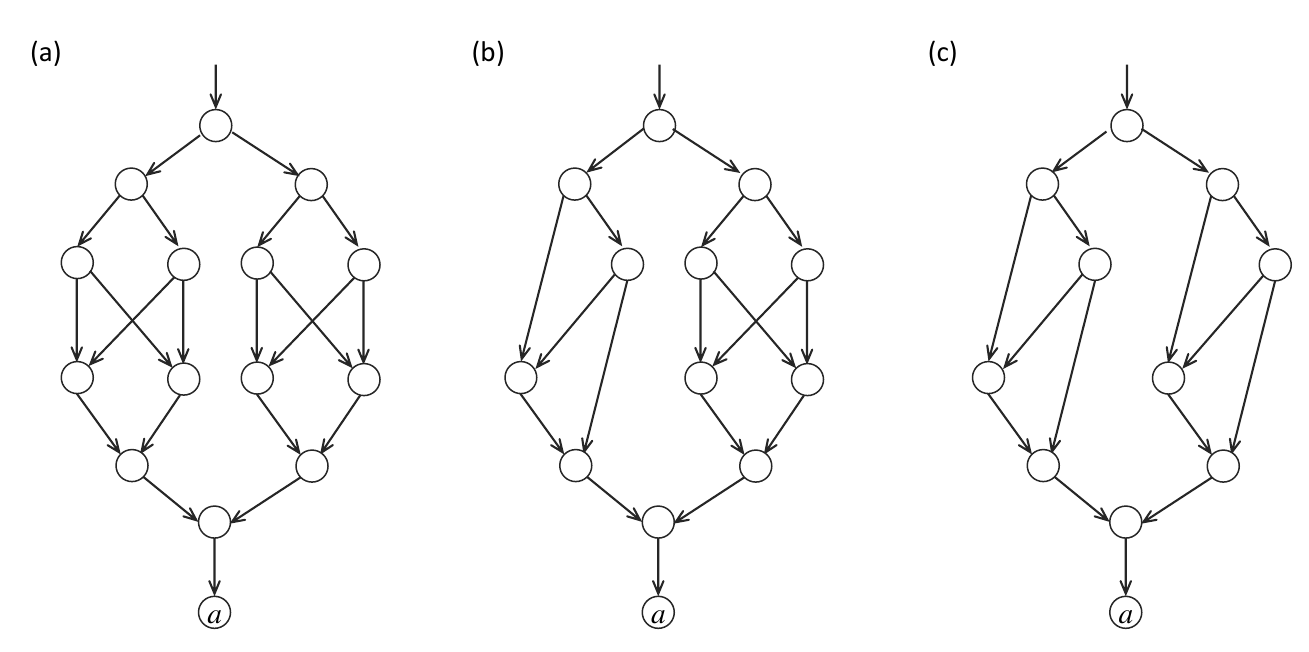}
    \caption{Special cases. (a) A highly symmetric BPN, which can only be generated by edge insertion from the BPN in (b) in one way. (c) A  BPN that cannot be generated by edge insertion.
    \label{fig:symmetric_Case}
    }
\end{figure}

\subsection{Conjugate Edges}

Let $N$ be a BPN with $k$ reticulations. 
Each reticulation node has two parents, not necessarily tree nodes.
Let $e_i=(p_i, r_i)$ ($1 \leq i \leq t$) be all edges of $N$ from tree nodes to reticulation nodes.
It is easy to see that $N$ can be obtained from a BPN with $k-1$ reticulations via edge insertion in at most $t$ ways.

If $N$ cannot be obtained in $t$ distinct ways by edge insertion, then at least one of the following two cases occurs.

\begin{enumerate}
\item There exists an edge $e_j$ such that deleting $e_j$ and contracting its endpoints does not result in a BPN (see Figure~\ref{fig:symmetric_Case}c).

\item There exist two distinct edges $e_a=(p_a,r_a)$ and $e_b=(p_b,r_b)$ such that deleting $e_a$ and contracting $p_a$ and $r_a$ yields the same BPN $N'$ with $k-1$ reticulations as deleting $e_b$ and contracting $p_b$ and $r_b$. Conversely, $N$ can be obtained from $N'$ by inserting an edge between a unique pair of edges, as illustrated in Figures~\ref{fig:DeletionC1} and~\ref{fig:DeletionC2}.
\end{enumerate}

In the rest of this subsection, we introduce concepts that will be useful for studying the second case.

\begin{figure}[!t]
    \centering
    \includegraphics[width=0.9\textwidth]{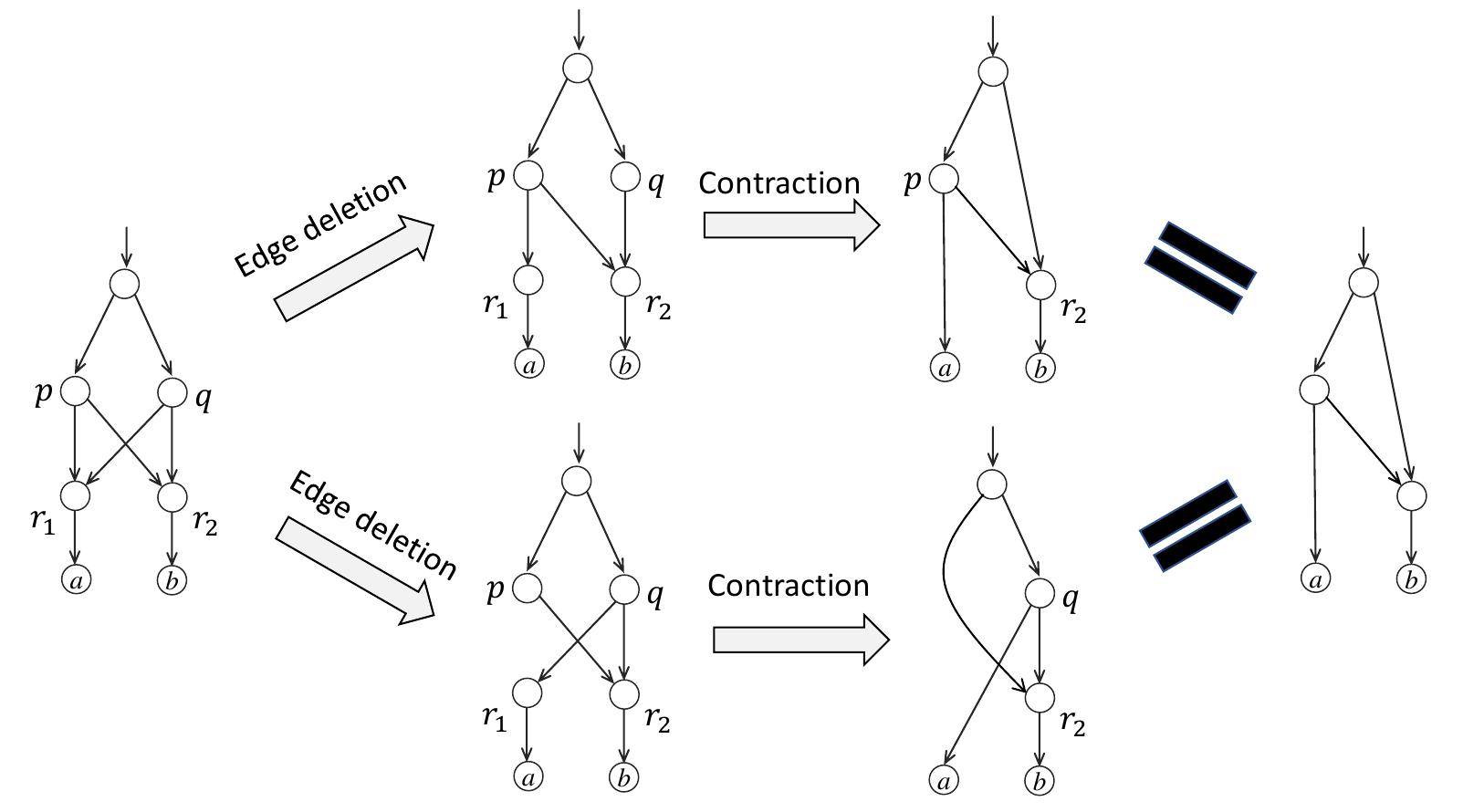}
    \caption{A binary phylogenetic network (BPN) $N$ on taxa $a$ and $b$ (left). 
Deleting $(p, r_1)$ and contracting $p$ and $r_1$ yields the same BPN (right) as deleting $(q, r_1)$ and contracting $q$ and $r_1$. 
Conversely, $N$ can be obtained from the right BPN by a unique edge insertion.
    \label{fig:DeletionC1}
    }
\end{figure}

\begin{figure}[!b]
    \centering
    \includegraphics[width=0.7\textwidth]{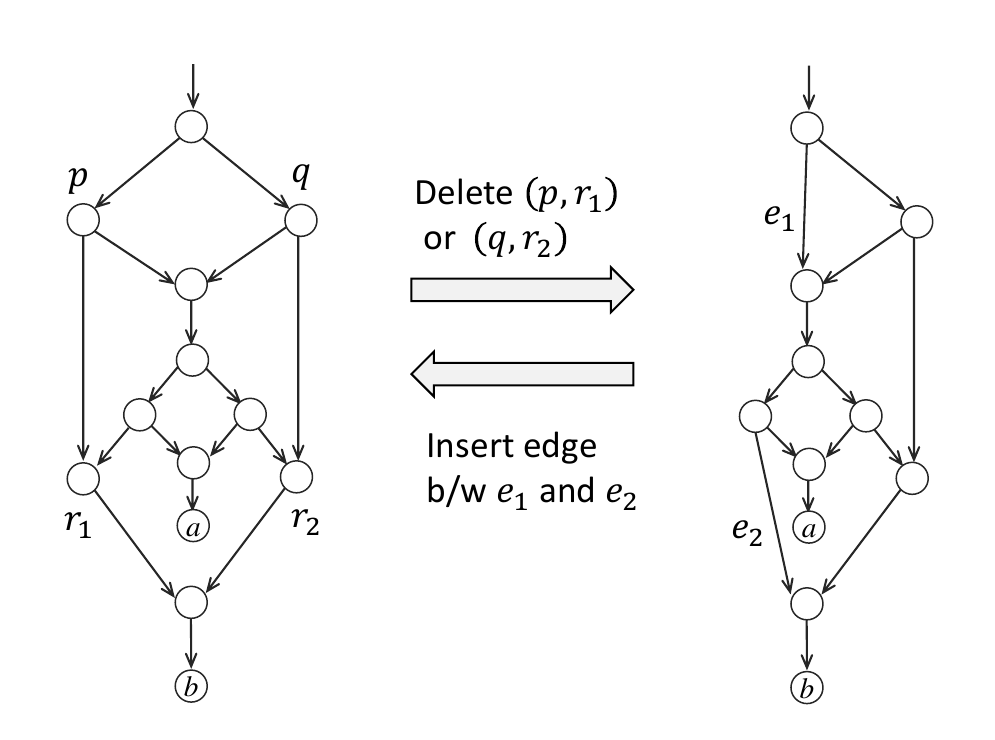}
    \caption{Another binary phylogenetic network (left) that can be obtained by edge insertion in only four distinct ways, instead of eight. Here,
    $r_1, r_2$ and $r_3$ form a reticulation cluster. The edges entering into this cluster from tree nodes are $(p, r_1), (s, r_1), (t, r_2)$, and $(q, r_2)$.
    \label{fig:DeletionC2}
    }
\end{figure}

\newpage
\clearpage

\begin{definition}
Let $N$ be a BPN.

\begin{enumerate}
\item An edge $e=(p,r)\in E(N)$ is called \emph{indispensable} if $p$ is a tree node, 
$r$ is a reticulation node, and either  $p$ or $r$ is contained in a 
3-node (undirected) cycle.

\item An edge $e=(p,r)\in E(N)$ is called \emph{removable} if $p$ is a tree node that is not contained in any 3-node  (undirected)
cycle and $r$ is a reticulation node that is not contained in any 3-node (undirected) cycle.
\end{enumerate}
\end{definition}

It is straightforward to verify the following result.
\begin{proposition}
Let $N$ be a BPN
and $e$ be a directed edge from a tree node to a reticulation node. 
If removing $e$ and contracting its endpoints does not result in a BPN, then
$e$ is indispensable.
\end{proposition}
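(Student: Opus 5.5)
We argue by contrapositive: assuming neither $p$ nor $r$ lies on a 3-node undirected cycle, we show that deleting $e=(p,r)$ and contracting both endpoints yields a BPN. Name the neighbours: let $u$ be the parent of $p$ and $x$ its other child, and let $y$ be the other parent of $r$ and $z$ its child. Deleting $e$ leaves $p$ with indegree and outdegree $1$ and $r$ with indegree and outdegree $1$. Contraction replaces $(u,p),(p,x)$ by $(u,x)$ and $(y,r),(r,z)$ by $(y,z)$. All other nodes keep their degrees, so the only ways the result can fail to be a BPN are (i) a loop, (ii) a pair of parallel edges, which is forbidden since a BPN is a digraph and tree/reticulation nodes must have distinct children/parents, or (iii) a directed cycle. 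There is no degenerate case where the endpoints coincide, because $p$ is a tree node and $r$ a reticulation.

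\emph{Step 1 (acyclicity and loops).} Every path in the new graph maps to a path in $N$: replace $(u,x)$ by $u\to p\to x$ and $(y,z)$ by $y\to r\to z$. So a directed cycle in the new graph would give one in $N$, which is impossible. By the same substitution, $u=x$ or $y=z$ would give a directed cycle in $N$, so no loops arise.

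\emph{Step 2 (parallel edges).} This is the main case analysis. A duplicated edge $(u,x)$ occurs exactly when $(u,x)$ was already an edge of $N$. Then $u,p,x$ form a 3-node cycle containing $p$, contradicting the assumption on $p$. Similarly, a duplicated $(y,z)$ means $y,r,z$ is a 3-node cycle through $r$.

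The remaining possibility is that the two new edges coincide, i.e.\ $u=y$ and $x=z$. In that case $u=y$ is adjacent to both $p$ and $r$, and $(p,r)$ is an edge, so $u,p,r$ form a triangle containing $p$, again a contradiction. I expect this to be the only slightly delicate point: one must enumerate all coincidences among $u,x,y,z$ that can create parallel edges and note that each produces a triangle through $p$ or $r$.

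Other coincidences are harmless. For example, $x=y$ only means $p$'s other child is $r$'s other parent, which yields edges $(u,x)$ and $(x,z)$ and creates no multi-edge. Finally, the root and the leaves keep their degrees, and the leaf labels are unchanged, so the result is a BPN on the same taxa with $k-1$ reticulations. This contradicts the hypothesis, so $e$ must be indispensable.
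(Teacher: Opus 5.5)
Your argument is correct, and it is the direct check the paper leaves to the reader (the paper only says the proposition is straightforward to verify): degrees are preserved, and acyclicity and the absence of loops follow by lifting paths back to $N$. Each way a parallel edge can arise, namely $(u,x)\in E(N)$, $(y,z)\in E(N)$, or $u=y$ together with $x=z$, forces a triangle through $p$ or $r$. One small remark: the coincidence $x=y$, which you call harmless, cannot occur under your hypothesis at all, since $p$, $x$, $r$ would then span a triangle; this does not affect the proof.
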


Let $N$ be a BPN. An \emph{automorphism} of $N$ is a bijection 
$\varphi: V(N) \to V(N)$ such that for all $u,v \in V(N)$,
\[
(u,v) \in E(N) \;\Longleftrightarrow\; (\varphi(u), \varphi(v)) \in E(N),
\]
and  
$\varphi(\ell)=\ell$ for each leaf $\ell$.


\begin{definition}
Let $N$ be a BPN. Let $e_1=(p_1, r_1)$ and $e_2=(p_2, r_2)$ be two distinct edges such that 
$p_1,p_2$ are tree nodes and $r_1,r_2$ are reticulation nodes.
The edges $e_1$ and $e_2$ are said to be \emph{conjugate} if there exists an automorphism $\varphi$ of $N$ such that
$\varphi(e_1)=e_2$.
\end{definition}

Conjugate edges may share an endpoint or have disjoint endpoints. For example, the edges $(p, r_1)$ and $(q, r_1)$
in Figure~\ref{fig:DeletionC1} are conjugate; the edges
$(p, r_1)$ and $(q, r_2)$ in Figure~\ref{fig:DeletionC2} are conjugate.

\begin{proposition}
\label{prop555}
Let $N$ be a  BPN, and let $e_1$ and $e_2$ be edges from tree nodes to reticulation nodes that are removable in $N$.
If the BPN obtained by deleting \(e_1\) and contracting its endpoints is the same as the BPN obtained by deleting \(e_2\) and contracting its endpoints, then \(e_1\) and \(e_2\) are conjugate.
\end{proposition}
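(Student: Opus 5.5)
The plan is to build the automorphism explicitly from an isomorphism between the two reduced networks. Write $e_i=(p_i,r_i)$ for $i=1,2$, and let $N_i$ be the BPN obtained from $N$ by deleting $e_i$ and contracting $p_i$ and $r_i$. Since both $N_i$ are BPNs on the same leaf-labelled taxon set, the hypothesis $N_1=N_2$ means there is a leaf-fixing isomorphism $\psi:N_1\to N_2$. The goal is to lift $\psi$ to an automorphism $\varphi$ of $N$ with $\varphi(e_1)=e_2$.

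First record how $N$ is recovered from $N_i$. When $p_i$ and $r_i$ are suppressed, two new edges appear in $N_i$:
\begin{itemize}
\item $f_i=(u_i,v_i)$, where $u_i$ is the parent of $p_i$ and $v_i$ is the other child of $p_i$;
\item $g_i=(s_i,t_i)$, where $s_i$ is the other parent of $r_i$ and $t_i$ is the child of $r_i$.
\end{itemize}
Thus $N$ is obtained from $N_i$ by inserting an edge from $f_i$ to $g_i$. Every other node of $N_i$ is a node of $N$, so $V(N)=V(N_i)\cup\{p_i,r_i\}$.

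The central claim is that $\psi(f_1)=f_2$ and $\psi(g_1)=g_2$. Granting it, define $\varphi$ on $V(N)$ by
\[
\varphi|_{V(N_1)}=\psi,\qquad \varphi(p_1)=p_2,\qquad \varphi(r_1)=r_2 .
\]
Then $\varphi$ carries the subdivision of $f_1,g_1$ together with the inserted edge onto the subdivision of $f_2,g_2$ together with its inserted edge. It is therefore an automorphism of $N$ fixing the leaves, and $\varphi(e_1)=e_2$, as required.

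To prove the claim I would use that edge insertion is determined by the local picture it creates. Removability says $p_i$ and $r_i$ lie on no 3-node cycle. Hence:
\begin{itemize}
\item $u_i\neq v_i$ and $s_i\neq t_i$ hold trivially;
\item $u_i$ is not adjacent to $v_i$ in $N$, and $s_i$ is not adjacent to $t_i$ in $N$;
\item $p_i$ is not adjacent to $s_i$ or $t_i$ except through $r_i$, and symmetrically for $r_i$.
\end{itemize}
So $f_i$ and $g_i$ are genuine edges of $N_i$ that are not parallel to existing edges. They are also distinct edges whose endpoints are not created by other coincidences.

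Now compare $N_1$ and $N_2$ inside $N$. The nodes of $N_1$ are exactly $V(N)\setminus\{p_1,r_1\}$, so both $N_i$ are obtained from $N$ by forgetting a four-element configuration. I would split into two cases.

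In the first case, $\{p_1,r_1\}$ and $\{p_2,r_2\}$ are disjoint. Then the isomorphism $\psi$ must identify the unique pair of edges of $N_1$ whose joint insertion reproduces $N$ with the corresponding pair in $N_2$. I would argue this by a counting and structure argument: $\psi$ composed with the insertion gives an isomorphism from $N$, obtained by insertion at $(f_1,g_1)$, to the network obtained from $N_2$ by insertion at $(\psi(f_1),\psi(g_1))$. I then need this to equal $N$ itself, inserted at $(f_2,g_2)$.

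In the second case the two edges share an endpoint, as in Figure~\ref{fig:DeletionC1}. There I would handle the shared reticulation or tree node directly, checking adjacencies.

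The main obstacle is the first case. It requires showing that $\psi$ respects the insertion edges rather than merely that $N_1\cong N_2$ abstractly. The reason is subtle: the assertion is really that $N$ is the image of $N_1$ under insertion at $(f_1,g_1)$ and also at $(\psi^{-1}(f_2),\psi^{-1}(g_2))$, and one must deduce that these two pairs lie in the same orbit. I would first try a canonical-labelling argument. The subdivision nodes $p_1,r_1$ are identifiable in $N$ as the unique removable pair that yields $N_1$, and the no-3-cycle conditions prevent alternative edge pairs in $N_1$ from producing an isomorphic $N$ through a different local identification. If that fails, I would induct on the number of nodes, peeling leaves upward along $\psi$.
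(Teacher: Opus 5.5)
Your lifting step is correct, and it is how the paper concludes: given a leaf-fixing isomorphism $\psi\colon N_1\to N_2$ with $\psi(f_1)=f_2$ and $\psi(g_1)=g_2$, extending by $p_1\mapsto p_2$ and $r_1\mapsto r_2$ gives an automorphism of $N$ carrying $e_1$ to $e_2$. This is the paper's passage from $\phi$ to $\varphi$.

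The gap is the ``central claim''. You never prove it, and under your reading of the hypothesis (that it only supplies some leaf-fixing isomorphism $N_1\cong N_2$) it is false. Even for a fixed $\psi$ the claim is misstated, since $\psi$ can be composed with automorphisms of $N_1$; at best one can hope that a suitable $\psi$ exists. Your canonical-labelling idea is circular: $(p_1,r_1)$ is not ``the unique removable pair that yields $N_1$'', because by hypothesis $(p_2,r_2)$ yields it too. Removability is also only a local condition. Whether some isomorphism matches the two insertion pairs depends on the global structure of $\mathrm{Aut}(N_1)$; it is a pseudo-similarity question.

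Here is a counterexample. Index by $i\in\mathbb{Z}_4$, and let $M_0$ be the BPN on $\{a,b,c\}$ with the following edges:
\begin{itemize}
\item $\rho\to R$; $R\to S^*,R'$; $R'\to T^*,c$;
\item $S^*\to S_0,S_1$ and $T^*\to T_0,T_1$;
\item $S_j\to w_j,w_{j+2}$ and $T_j\to v_j,v_{j+2}$ for $j=0,1$;
\item $w_i\to A_i,B_i$ and $v_i\to A_i,B_{i+1}$;
\item $A_i\to P_{i \bmod 2}$ and $B_i\to Q_{i\bmod 2}$;
\item $P_0,P_1\to P^*\to a$ and $Q_0,Q_1\to Q^*\to b$.
\end{itemize}
The leaves pin $R',T^*,R,S^*,P^*,Q^*$. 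Hence every automorphism preserves the classes $\{w_i\},\{v_i\},\{A_i\},\{B_i\}$. These sixteen nodes induce the cycle $w_0A_0v_0B_1w_1A_1v_1B_2\cdots w_3A_3v_3B_0$. Any reflection of this cycle would swap an $A$ with a $B$, a $w$ with a $v$, or a tree node with a reticulation. So $\mathrm{Aut}(M_0)=\langle\sigma\rangle\cong\mathbb{Z}_4$, where $\sigma$ shifts every index by $1$.

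Now build $N$ by subdividing $(w_0,A_0)$ with $p_1$ and $(w_1,A_1)$ with $p_2$, and adding a reticulation $r$ with parents $p_1,p_2$ and a new leaf $\ell$ as its child. Then:
\begin{itemize}
\item both $(p_1,r)$ and $(p_2,r)$ are removable;
\item $N-(p_1,r)\cong N-(p_2,r)$ via the extension of $\sigma^{-1}$;
\item an automorphism of $N$ sending $(p_1,r)$ to $(p_2,r)$ must fix $r$ and swap $p_1,p_2$, so it would induce an automorphism of $M_0$ swapping $w_0$ and $w_1$, which $\langle\sigma\rangle$ does not contain.
\end{itemize}
This example falls in your shared-endpoint case, which you expected to settle by local adjacency checks.

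The paper avoids this because its proof obtains the isomorphism $N_1\to N_2$ from the fact that $N$ is generated from both ``in the same way''. That is, its identification of $N_1$ with $N_2$ carries the inserted-edge pair to the inserted-edge pair. This is exactly what it means for the two deletions not to be \emph{different} edge insertions, and it is the situation in which Theorem~\ref{theorem111} invokes the proposition. If you adopt that hypothesis, your central claim is given, your first two paragraphs are already a complete proof, and the case analysis is unnecessary.
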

\begin{proof}
Let $e_1=(p_1, r_1)$ and $e_2=(p_2, r_2)$, where $p_i$ is a tree node and $r_i$ is a reticulation node for $i=1,2$.
We consider the following cases.

\medskip
\noindent {\bf Case 1.} $p_1=p_2=p$ and $r_1\neq r_2$.

Let $q$ be the (unique) parent of $p$. For $i=1,2$, let $a_i$ be the unique child of $r_i$ and let $b_i$ be the other parent of $r_i$.
Let $N_i$ be the BPN obtained from $N$ by removing $e_i$ and contracting $p$ and $r_i$.

Then
\begin{eqnarray*}
&&V(N_1) = V(N)\setminus \{p,r_1\}, \\
&&E(N_1) = \left[ E(N)\setminus \{(p,r_1)=e_1,(q,p),(b_1,r_1),(r_1,a_1), (p, r_2)\} \right]
          \cup \{(b_1,a_1),(q,r_2)\}, \\
&&V(N_2) = V(N)\setminus \{p,r_2\}, \\
&&E(N_2) = \left[ E(N)\setminus \{(p,r_2)=e_2,(q,p),(b_2,r_2),(r_2,a_2), (p, r_1)\} \right]
          \cup \{(b_2,a_2),(q,r_1)\}.
\end{eqnarray*}

Since inserting an edge between
$(q, r_2)$ and $(b_1, a_1)$ in $N_1$ generates $N$ in the same way as inserting an edge between $(q, r_1)$ and $(b_2, a_2)$ in $N_2$, 
the mapping defined as 
\[
  \phi (x)=\left\{\begin{array}{ll}
        a_2 & \mbox{if } x=a_1 \\
         a_1 & \mbox{if }  x=a_2 \\
            b_2 &\mbox{if }   x=b_1 \\
         b_1 & \mbox{if } x=b_2 \\
         r_1 & \mbox{if } x=r_2 \\
         x &  \mbox{ otherwise}.
    \end{array} \right.
\]
is an isomorphism from $N_1$ to $N_2$.

We extend $\phi$ to a mapping $\varphi: V(N)\to V(N)$ by
\[
\varphi(x)=
\begin{cases}
r_2 & \text{if } x=r_1,\\
p & \text{if } x=p,\\
\phi(x) & \text{otherwise}.
\end{cases}
\]
Then $\varphi$ is an automorphism of $N$, and clearly $\varphi(e_1)=e_2$.

\medskip
\noindent {\bf Case 2.} $p_1\neq p_2$ and $r_1=r_2$.

\medskip
\noindent {\bf Case 3.} $p_1\neq p_2$ and $r_1\neq r_2$.

In both cases, the existence of an automorphism mapping $e_1$ to $e_2$ can be established by arguments analogous to Case~1.
\end{proof}

\begin{figure}[b!]
    \centering
    \includegraphics[width=\textwidth]{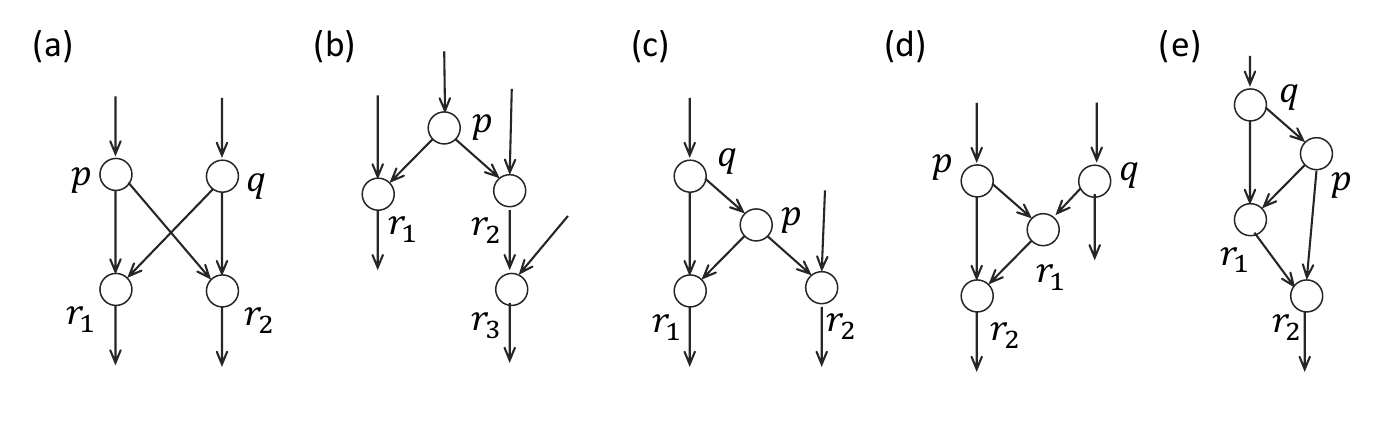}
    \caption{(a) ({\bf Type-1 motif}) Two reticulation nodes $r_1$ and $r_2$ have the same parents.
    (b) ({\bf Type-2 motif}) A tree node $p$ is a common parent of two reticulation nodes $r_1$ and $r_2$. 
The other parents of $r_1$ and $r_2$ are distinct, and $r_2$ is a parent of another reticulation node $r_3$. 
The node $r_3$ may or may not be a child of $r_1$.
    (c) ({\bf Type-3 motif}) A tree node $p$ is a common parent of two reticulation nodes $r_1$ and $r_2$, the parent of $p$ is another parent of $r_1$,  and $r_2$ is not the child of $r_1$.
    (d) ({\bf Type-4 motif}) 
     The children of a tree node $p$ are reticulation nodes $r_1$ and $r_2$, $r_1$ is the other parent of $r_2$, and the other parent $q$ of $r_1$ is not the parent of $p$. 
     (e) ({\bf Type-5 motif})  
      The children of a tree node $p$ are reticulation nodes $r_1$ and $r_2$, $r_1$ is another parent of $r_2$, and 
      the parent $q$ of $p$ is another parent of $r_1$.
    \label{fig:4types}
    }
\end{figure}

We define five types of subgraphs in BPNs as shown in Figure~\ref{fig:4types}.
It is straightforward to verify the following result.
\begin{proposition}
\label{prop4}
Let $N$ be a BPN. Then $N$ contains an indispensable edge if and only if it contains a subgraph of type~3, type~4, or type~5, defined in Panels~(c)--(e) of Figure~\ref{fig:4types}, respectively.
\end{proposition}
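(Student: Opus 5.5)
The argument is a local case analysis of 3-node undirected cycles in a BPN. Every node has total degree at most 3, and the root and leaves have degree 1, so a 3-cycle (a triangle) uses only internal nodes. Since $N$ is acyclic, each triangle has a unique source, a unique sink, and a middle node. Its arcs are $x\to y$, $y\to z$ and $x\to z$. The sink $z$ has two in-arcs inside the triangle, so it is a reticulation whose parents are $x$ and $y$. The middle node $y$ has one in-arc and one out-arc in the triangle. The source $x$ has two out-arcs, so $x$ is a tree node whose children are $y$ and $z$.

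For the direction ``indispensable $\Rightarrow$ motif'', let $e=(p,r)$ be indispensable, so $p$ is a tree node, $r$ is a reticulation, and $p$ or $r$ lies on a triangle $\{x,y,z\}$. I would split on the type of the middle node $y$.

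\begin{itemize}
\item If $y$ is a tree node, then $x$ is the parent of $y$ and also a parent of the reticulation $z$. Since $y$ has out-degree 2 and $z$ is a child of $y$, $y$ is a common parent of $z$ and of its other child $w$. Renaming $y\mapsto p$, $x\mapsto q$ and $z\mapsto r_1$, the configuration is that of Type~3, provided the other child $w$ of $y$ is a reticulation not equal to the child of $z$. If $w$ is instead a leaf or a tree node, I first check whether $e$ still forces a motif elsewhere. I expect this subcase to be the main obstacle: the motif definitions in Figure~\ref{fig:4types} fix the second child of $p$ as a reticulation. So I would have to either show that this degenerate triangle is excluded or absorbed by the definition of indispensable, or interpret the figures as allowing it.
\item If $y$ is a reticulation, then $x$ is a tree node with two reticulation children $y$ and $z$, and $y$ is the other parent of $z$. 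This is Type~4 or Type~5, according to whether the other parent of $y$ is the parent of $x$. These two cases are exhaustive and complementary by construction.
\end{itemize}

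Here $e$ need not be a triangle edge; I only need $N$ to contain the motif, which follows once $p$ or $r$ lies on some triangle. I would still check that a triangle always exists in the cases above, which is immediate from the assumption.

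For the converse direction, each of the motifs of Type~3, 4 and 5 contains a triangle: $q,p,r_1$ in Type~3, $p,r_1,r_2$ in Type~4, and in Type~5 both. In each case the edge $(p,r_1)$ goes from a tree node to a reticulation and has an endpoint on that triangle, so it is indispensable.

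The remaining care point is the degenerate case noted above, where the sibling of the sink in the tree-middle triangle is not a reticulation. I would handle it by re-reading the motif definitions, or by observing that the triangle's sink $z$ then has tree-node parents $x$ and $y$, and showing directly that this still yields Type~3 with the roles of $x$ and $y$ assigned appropriately.
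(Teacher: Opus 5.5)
\textbf{There is a genuine gap, and it is in the subcase you flagged.} In fact that subcase cannot be closed, because it is a counterexample to the ``only if'' direction as stated. The paper gives no proof, only the remark that the claim is straightforward to verify.

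The parts of your case analysis that you carry out are correct. A triangle has a tree-node source $x$ and a reticulation sink $z$ with parents $x,y$. A reticulation middle node $y$ gives Type~4 or Type~5, and your converse direction is fine. There is one small omission: if $w$ is a reticulation equal to the child of $z$, the configuration is not Type~3. It is still covered, though: $\{y,z,w\}$ is then a triangle whose middle node $z$ is a reticulation with other parent $x$, the parent of $y$, so it is Type~5.

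Now the counterexample. Take $X=\{a,b\}$ and the BPN with edges $(\rho,q)$, $(q,p)$, $(q,r)$, $(p,r)$, $(p,a)$, $(r,b)$. Here $q$ and $p$ are tree nodes, $r$ is the only reticulation, and $\{q,p,r\}$ is a triangle. So $(p,r)$ and $(q,r)$ are indispensable. Yet each of Types~3, 4 and 5 needs two distinct reticulation nodes, so $N$ contains none of them. Your proposed fixes do not help. The definition of indispensable does not exclude this triangle. Reassigning roles cannot yield Type~3, since neither $x$ nor $y$ has two reticulation children.

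What your argument does prove is the following chain of equivalences. $N$ has an indispensable edge if and only if it has a triangle. That holds if and only if $N$ contains a Type~4 or Type~5 motif, or a configuration $q\to p$, $q\to r_1$, $p\to r_1$ with $p$ a tree node, $r_1$ a reticulation, and no condition on the other child of $p$. The proposition becomes true once Type~3 is weakened in this way. You should state and prove that corrected version rather than the literal one.

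The repair has a downstream cost. The bound $2(k-1)\,|{\cal P}(n,k-1)|$ in part~(3) of Proposition~\ref{prop777} relies on $r_2$ being a reticulation. For the weakened motif, the same reconstruction inserts an edge from $(q,w)$ to $(q,s_1)$, where $q$ is an arbitrary tree node of the smaller network. That gives only $2(n+k-2)\,|{\cal P}(n,k-1)|$.
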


\begin{theorem}
\label{theorem111}
Let $N$ be a  BPN with $k$ reticulations that form $c$ clusters, where $c \leq k$. 
If $N$ cannot be obtained in  $k+c$ distinct ways by inserting an edge in a BPN with $k-1$ reticulations, 
then $N$ must contain one of the subgraphs defined in Figure~\ref{fig:4types}.
\end{theorem}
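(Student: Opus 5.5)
We argue by contraposition: assume $N$ contains none of the five motifs of Figure~\ref{fig:4types}, and show that $N$ arises from BPNs with $k-1$ reticulations in exactly $k+c$ distinct ways. By the proposition on reticulation clusters, there are exactly $t=k+c$ edges $e_1,\dots,e_t$ from tree nodes into reticulation nodes. Every edge insertion producing $N$ creates a new edge $(p,q)$ with $p$ a tree node of $N$ and $q$ a reticulation node of $N$. Hence every insertion is the reverse of deleting some $e_i$ and contracting its endpoints. So the number of distinct insertions is at most $t$, and equals $t$ exactly when (i) each deletion yields a BPN and (ii) the $t$ deletions yield pairwise different triples $(N',e_1',e_2')$.

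For (i), Proposition~\ref{prop4} shows that absence of type-3, type-4 and type-5 motifs means $N$ has no indispensable edge. Then every $e_i$ is removable by definition, since neither endpoint lies in a 3-node cycle. By the preceding proposition, deleting $e_i$ and contracting gives a BPN $N_i$ with $k-1$ reticulations.

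For (ii), suppose $e_a\neq e_b$ give the same BPN $N'$. Two different deletions giving different $N_i$ are already distinct operations. Also, if $N_a=N_b$ as labeled networks, the insertion positions $(e_1',e_2')$ are determined by where $e_a$, $e_b$ sat, so equality of operations forces the setup of Proposition~\ref{prop555}. Since both edges are removable, that proposition gives an automorphism $\varphi$ of $N$ with $\varphi(e_a)=e_b$. The main step, and the expected obstacle, is to show that such a nontrivial conjugate pair forces a type-1 or type-2 motif.

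The plan for that step is as follows. Automorphisms fix all leaves, so $\varphi$ fixes every node having a leaf-descendant path that $\varphi$ cannot permute. We look at a lowest node moved by $\varphi$ (in a topological order from the leaves), or equivalently at a highest pair of distinct nodes swapped. Then $\varphi$ swaps two nodes $x\neq\varphi(x)$ with the same children, or the same child, after they have been identified. Two distinct tree nodes cannot share both children, since in-degree at most 2 would force those children to be reticulations with identical parent sets. Thus the minimal swapped pair consists of two reticulations $r_1,r_2$ with the same parent multiset, which is a type-1 motif. Alternatively, the pair consists of reticulations $r_1,r_2$ sharing a tree parent $p$, with $r_2$ (or its image) feeding a further reticulation $r_3$, which is a type-2 motif.

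Cases~1--3 of Proposition~\ref{prop555} guide the split. In Case~1 ($p_1=p_2$) the automorphism swaps $r_1,r_2$, which share parent $p$. Either their other parents coincide, giving type~1, or the asymmetry must be absorbed below, through $r_2$ being a parent of a reticulation, giving type~2. Cases~2 and~3 reduce similarly, by following $\varphi$ downward until two moved nodes share a parent. Verifying exhaustively that no other local configuration yields a nontrivial automorphism fixing leaves is the delicate part. I would first try an induction on the depth of the lowest moved node, using that automorphisms preserve the component graph $\mathcal{G}(N)$ and the reticulation clusters.
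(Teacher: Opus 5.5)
\paragraph{Verdict: genuine gap in the main step.} Your reduction is sound and matches the paper's. With no type-3/4/5 motif, Proposition~\ref{prop4} and the proposition preceding it make all $k+c$ tree-to-reticulation edges removable, so each deletion yields a BPN. Two coinciding deletions then give a conjugate pair by Proposition~\ref{prop555}.

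The gap is the step you yourself call the main one. You never show that a conjugate pair forces a type-1 or type-2 motif, and the sketch you offer does not work.
\begin{itemize}
\item A lowest node $x$ moved by $\varphi$ only guarantees that $x$ and $\varphi(x)$ have the same children. If $x$ is a tree node, its two children are reticulations with parent set $\{x,\varphi(x)\}$, which \emph{is} a type-1 motif. So two distinct tree nodes certainly can share both children.
\item If $x$ is a reticulation, then $x$ and $\varphi(x)$ merely share a reticulation child. Nothing forces them to share a tree parent, so your claim that the swapped pair consists of reticulations sharing a tree parent $p$, with one of them feeding a reticulation, is unsupported.
\item The same problem breaks your Case~1 reasoning: the asymmetry need not appear as $r_2$ being a parent of a reticulation. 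For instance, $r_1,r_2$ may have tree-node children $w_1,w_2=\varphi(w_1)$ whose children are the same two reticulations $m_1,m_2$, with the other parents of $r_1,r_2$ distinct. Then there is no type-1 or type-2 motif at $p$ at all; the motif (here $w_1,w_2,m_1,m_2$) sits strictly lower.
\end{itemize}

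\paragraph{The missing idea: an extremal choice that forces the local analysis.} The paper takes a conjugate pair $(p',r'),(p'',r'')$ such that no conjugate pair lies below $p'$. It then follows a path from $r'$ (or from the other child of $p'$) to a leaf. The first $\varphi$-fixed node $v_a$ on this path is a reticulation with parents $v_{a-1}\neq\varphi(v_{a-1})$. If a tree node occurs on the moved part of the path below $p'$, the last such node $v_b$ gives a conjugate pair $(v_b,v_{b+1}),(\varphi(v_b),\varphi(v_{b+1}))$ below $p'$, a contradiction. So the moved part consists of reticulations, and a type-1 or type-2 motif appears at $p'$.

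In your node-based language the same effect comes from taking a lowest $\varphi$-moved \emph{tree} node $y$.
\begin{itemize}
\item A fixed child of $y$ would have the two parents $y\neq\varphi(y)$, so any leaf or tree-node child of $y$ would itself be moved. Leaves are fixed, and a moved tree-node child contradicts minimality, so both children of $y$ are reticulations.
\item If both children are fixed, you have type-1.
\item Otherwise, let $c$ be a moved child and $w$ its child. A fixed $w$ has two distinct parents, and a moved $w$ cannot be a tree node by minimality, so $w$ is a reticulation. This gives type-1 or type-2, according as the other parents of the two children of $y$ coincide or not. Here $w$ differs from the sibling of $c$, since otherwise there is a type-4/5 triangle.
\item If no tree node is moved at all, run the same argument at $p_a=p_b$, whose children $r_a\neq r_b$ are already reticulations.
\end{itemize}
Without some such minimality step, the proposal does not prove the theorem.
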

\begin{proof}
Consider a reticulation cluster $C$ in $N$. Assume that $C$ consists of $t$ reticulation nodes. 
Then, by Proposition~\ref{prop3_entering_ret_cluster}, there are $t+1$  edges entering $C$ from tree nodes, that is, edges from tree nodes to  reticulation nodes in $C$ (see Figure~\ref{fig:DeletionC2}). 

Let these edges be
$e_1, e_2, \cdots, e_{t+1}$, where
$e_i=(p_i, r_i)$ for $1\leq i\leq t+1$, 
and let $N_i=N-(p_i, r_i)$. 
If any of these edges is indispensable, then, by Proposition~\ref{prop4},  $N$ contains a subgraph of type-3, type-4, or type-5 defined in Figure~\ref{fig:4types}.

In the rest of the proof, we assume that 
$N$ does not contain any type-3, type-4 or type-5 subgraph.

Since $k$ reticulation nodes form $c$ reticulation clusters, there are 
$k+c$ edges from tree nodes to reticulation nodes that are in these reticulation clusters.
Assume that $N$ cannot be generated from BPNs with $k-1$ reticulations by edge insertion  in $k+c$ ways. There must be two removable edges $e'=(p', r')$ and $e''=(p'', r'')$ such that  $N'=N-e'$ and $N''=N-e''$
are identical, as shown in Figures~\ref{fig:DeletionC1} and \ref{fig:DeletionC2}. Therefore, by Proposition~\ref{prop555},  $e'$ and $e''$ are conjugate. 

Since $N$ is finite, without loss of generality, we may assume that there is no  pair of  conjugate edges $(q', s')$ and $(q'', s'')$ such that $(q', s')$ is below $p'$. 

Let $\varphi$ be an automorphism associated with the conjugation of $e'$ and $e''$, that is,   $\varphi(p')=p''$ and $\varphi(r')=r''$.
Clearly, a leaf below $p'$ if and only if it is below 
$p''$.  
We consider the following two possible cases.

{\bf Case 1}.  $p'=p''$ and $r'\neq r''$. 

Note that  $r'\neq \varphi(r') = r''$.
Let  $\ell$ be a leaf below $r'$. Consider a path $P$ from $r'$ to $\ell$: 
$$P:  r'=v_0, v_1, ..., v_m=\ell, m\geq 1.$$
and its image path $\varphi(P)$  from $r''$ to $\ell$:
$$\varphi(P):  r''=\varphi(v_0), \varphi(v_1), ..., \varphi(v_m)=\ell, m\geq 1.$$
Since each reticulation node has a unique child,  that $r'\neq r''$ implies that 
$v_1\neq \ell $ and $m \geq 2$. 
We consider the smallest index $a$ ($\geq 1$) such that 
$v_{a-1} \neq \varphi(v_{a-1})$ and 
$v_a=\varphi(v_a)$.  This implies that 
$v_i\neq \varphi(v_i)$ for each $i< a$.
Note that $v_a$ is a reticulation node and its parents are $v_{a-1}$ and $\varphi(v_{a-1})$.

If $v_1$ is a reticulation node, then, 
$p'$, $r'=v_0$, $r''$ and $v_1$ induce a subgraph of type-2 in Figure~\ref{fig:4types}. 

If $v_1$ is a tree node, we consider the last tree node $v_b$ in the subpath from $v_1$ to 
$v_a$. 
Because of the choice of $a$,  $v_b\neq \varphi(v_b)$ or 
$v_{b+1}\neq \varphi(v_{b+1})$. This implies that 
$(v_b, v_{b+1})$ and $(\varphi(v_b), \varphi(v_{b+1}))$ are a pair of conjugate edges,
contradicting the hypothesis that there is no pair of conjugate edges below $e'$ and $e''$.

{\bf Case 2.} 
$p'\neq p''$. 
Let 
$\ell$ be a leaf below the other child $x$ of $p'$. Consider a path $P$ from $p'$ to 
$\ell$ :
$$P:  p'=v_0, v_1=x, ..., v_m=\ell, m\geq 1,$$
 and its image path:
$$\varphi(P):  p''=\varphi(v_0), \varphi(v_1), ..., \varphi(v_m)=\ell.$$
Then, there exists $a$ such that 
$v_{i}\neq \varphi(v_{i})$ for each $i< a$
and $v_a=\varphi(v_a)$ in the path. 
Clearly,  
$v_a$ is a reticulation node and its parents are $v_{a-1}$ and $\varphi(v_{a-1})$. 

If $a=1$, then, $p', p'', v_1, r'$ induce a subgraph of type-1 if $r'=r''$. 
If $r'\neq r''$, by 
arguments analogous to Case 1, we can show that the child $s'$ of $r'$ must be a reticulation node and thus
$p', v_1, r'$ and $s'$ induce a subgraph of type-2.

Assume $a>1$. Consider the last tree node $v_t$ in the subpath of $P$ from $v_0$ to  $v_a$, where $0\leq t\leq a-1$. 

If $t=0$,  then $v_i$ is a reticulation node for each $i$ from 1 to $a$. The nodes
\(p'\), \(r'\), \(v_1=x\), and \(v_2\)
induce a subgraph of type-2.


If $t>0$, then, $v_{t+1}$ is a reticulation node and 
$(v_{t}, v_{t+1})$ and $(\varphi(v_{t}), 
\varphi(v_{t+1}))$ are a pair of conjugate edges. This is a contradiction to the hypothesis that there is no edge below $p'$ that forms  a conjugate pair with another edge. 

This completes the proof of the theorem.
\end{proof}

\subsection{Counting BPNs Containing Special Subgraphs}

\begin{proposition}
\label{prop777}
Let ${\cal P}(n, k)$ be the set of BPNs with $k$ reticulations on a set of $n$ taxa.  
For each $i \in \{1,2,3,4,5\}$, let ${\cal C}_i(n, k) \subseteq {\cal P}(n, k)$ be the set of BPNs that contain a node-induced subgraph of type-$i$ defined in Figure~\ref{fig:4types}. Then
\begin{enumerate}
\item $|{\cal C}_1(n, k)| \leq 2(k-1) \, |{\cal P}(n, k-1)|$.
\item $|{\cal C}_2(n, k)| \leq (k-1)(4k-6) \, |{\cal P}(n, k-1)|$.
\item $|{\cal C}_3(n, k)| \leq 2(k-1) \, |{\cal P}(n, k-1)|$.
\item $|{\cal C}_4(n, k)| \leq 2(k-1) \, |{\cal P}(n, k-1)|$.
\item $|{\cal C}_5(n, k)| \leq 2k \, |{\cal P}(n, k-1)|$.
\end{enumerate}
\end{proposition}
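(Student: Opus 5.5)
We bound each $|{\cal C}_i(n,k)|$ by a charging (injection-type) argument. To each network in ${\cal C}_i(n,k)$ we attach a smaller network in ${\cal P}(n,k-1)$ together with a small amount of extra data: a choice among $O(k)$ or $O(k^2)$ local positions plus a few bits. We then check that the network can be reconstructed from the smaller network and these data. The smaller network is obtained by deleting one reticulation edge inside the motif and contracting the resulting degree-two nodes. In every motif we choose the deleted edge so that its removal is legal, i.e.\ it produces a BPN. Such an edge is never indispensable in the sense of the earlier definition, or else it is an edge between two reticulation nodes, whose removal also gives a BPN. After the deletion, one reticulation of the motif becomes an ordinary node, or disappears, and the rest of the motif stays attached to a specific reticulation of the smaller network. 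The factor $k-1$ therefore counts the choice of that reticulation among the $k-1$ reticulations of the smaller network. The remaining constant factor counts the few ways to reinsert the edge at that reticulation, for example which of its two incoming edges, or the orientation.

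Concretely, for type-1 we delete one of the two edges entering $r_2$, say from the common parent $p$. After contraction, $r_1$ still has parents $p$ and $q$, and $r_2$ has been suppressed onto the edge from $q$. The motif is recovered from $r_1$ (one of $k-1$ reticulations) and one bit saying which parent of $r_1$ carried the suppressed node. This gives $2(k-1)$. Type-3 and type-4 work the same way. We delete the edge $(p,r_2)$ and record the reticulation $r_1$ together with one of its two incoming edges, where the other endpoint structure is forced. Again this gives $2(k-1)$. For type-5, the motif on $q,p,r_1,r_2$ collapses after deleting $(p,r_2)$, and the surviving reticulation may be new. So we index by one of $k$ positions, the $k-1$ old reticulations plus the single new configuration, with two orientations, giving $2k$.

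Type-2 is the expensive case and the one I expect to be the main obstacle. Its data involve $r_1,r_2,r_3$ and two distinct "other parents". After deleting $(p,r_1)$, we must record two things: the reticulation $r_2$, which remains one of the $k-1$ reticulations, and where $r_1$ was suppressed. That suppression point lies on an edge entering some other reticulation, or on an edge adjacent to the reticulation cluster of $r_2$. This gives $(k-1)$ times at most $2(2k-3)=4k-6$ incoming edges of the remaining $2k-3$-ish candidate reticulations. The delicate point is to verify reconstructibility, including when $r_3$ is also a child of $r_1$. We must also make sure that the suppressed location is always an edge entering a reticulation, so that the count stays $O(k)$ rather than $O(n)$. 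I would check this by examining where $r_1$'s unique child goes: by the type-2 structure, $r_1$ sits directly above a reticulation, so the suppressed node lies on a reticulation edge, of which there are $2(k-1)$ minus the ones already fixed. Finally, since a network may contain several motifs, each map is only required to be a surjection-free charging. A bound on preimage sizes then suffices, and overcounting only helps the inequalities.
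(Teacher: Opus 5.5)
Your overall scheme matches the paper's: delete one edge of the motif, suppress the two resulting degree-two nodes, and encode the motif by a reticulation of the smaller network plus a few edge choices. It works for types 1 and 4. In type 4 you delete the edge into the lower reticulation $r_2$, whereas the paper deletes the one into $r_1$; both are legal because the other parent of $r_1$ is not the parent of $p$.

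The legality claim your framework rests on fails for types 3 and 5. In type 3 the parent $q$ of $p$ is already a parent of $r_1$. Deleting $(p,r_2)$ and suppressing $p$ therefore creates a second edge $q\to r_1$, and the result is not a BPN. Indeed $(p,r_2)$ is indispensable, since $p$ lies on the triangle $q,p,r_1$. By Proposition~\ref{prop4}, types 3--5 are exactly the motifs carrying indispensable edges, so legality there must be checked by hand. The correct choice, as in the paper, is to delete $(p,r_1)$. Then $q$ becomes a parent of $r_2$ whose other child is the former child of $r_1$, and $N$ is recovered from $r_2$ plus the choice of which parent of $r_2$ is $q$.

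In type 5 no single deletion works:
\begin{itemize}
\item removing $(p,r_2)$ doubles $q\to r_1$;
\item removing $(p,r_1)$ doubles $q\to r_2$;
\item removing $(q,r_1)$ doubles $p\to r_2$;
\item removing $(r_1,r_2)$ leaves $r_1$ with no child.
\end{itemize}
The last case also shows that your remark that edges between two reticulations can always be removed is false. So the motif does not ``collapse'' to anything in ${\cal P}(n,k-1)$, and your count of $k$ positions with two orientations has no basis.

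The missing idea is the paper's node relocation. It moves the top node $q$ onto the other outgoing edge of its parent $u$ (with an analogous move when $u$ is a reticulation) and deletes $(p,r_1)$. This yields a network in ${\cal P}(n,k-1)$ in which $u,q,r_2$ form a triangle, counted by $2(k-1)$ as in the earlier parts. When $q$ is the child of the root, these networks are in bijection with ${\cal P}(n,k-2)$, and that term is absorbed into $2k\,|{\cal P}(n,k-1)|$.

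In type 2 you have interchanged $r_1$ and $r_2$. In the motif it is $r_2$ whose child is the reticulation $r_3$; $r_1$ sits above a reticulation only in the special case where $r_3$ is also its child. If $r_3$ is not a child of $r_1$ and you delete $(p,r_1)$, the suppressed $r_1$ lands on the edge from its other parent to its child, which can be an arbitrary tree node or leaf. That gives $\Theta(n)$ positions and a useless $O(kn)$ factor.

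Instead, delete the edge from $p$ to a parent of $r_3$, i.e.\ $(p,r_2)$. Then $p$ lands on one of the two incoming edges of the surviving reticulation $r_1$, and $r_2$ lands on an incoming edge of the reticulation $r_3\neq r_1$. This gives at most $(k-1)\cdot 2\cdot 2(k-2)\le (k-1)(4k-6)$ choices and covers the special case as well. The paper handles that case separately, deleting $(p,r_1)$ and counting stacked pairs.

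A smaller slip in type 1: $p$ is suppressed too. So in $N'$ the parents of $r_1$ are $q$ and the parent of $p$, not $p$ and $q$. Your bit should record which of these two is $q$, and then the bound $2(k-1)$ is correct.
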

\begin{proof}
(1)
Let $N \in {\cal C}_1(n,k)$, and suppose that $p', p'', r', r''$ are four nodes that induce a subgraph of type-1 in $N$, where $p'$ and $p''$ are tree nodes and $r'$ and $r''$ are reticulation nodes.  
Let $q''$ be the parent of $p''$, and let $s'$ be the child of $r'$.

Delete the edge $(p'', r')$ and then contract the nodes $p''$ and $r'$. 
The resulting BPN $N'$ is a BPN in ${\cal P}(n,k-1)$. 
In $N'$, the node $r''$ remains a reticulation node with parents $p'$ and $q''$, and $s'$ is the sibling of $r''$.
Conversely, inserting an edge from $(q'', r'')$ to $(p', s')$ in $N'$ reconstructs $N$. 


Based on this observation, all BPNs in ${\cal C}_1(n,k)$ can be generated from
BPNs of ${\cal P}(n,k-1)$  by the following edge insertion procedure:
\newpage

\begin{center}
\rule{13.5cm}{0.5pt}
\begin{algorithmic}
\For{each BPN $B \in {\cal P}(n,k-1)$}
    \For{each reticulation node $r$ in $B$ with parents $u$ and $v$}
        \If{$u$ is a tree node with the other child $x$}
            \State Construct a BPN by inserting an edge from $(v,r)$ to $(u,x)$.
        \EndIf
        \If{$v$ is a tree node with the other child $y$}
            \State Construct a BPN by inserting an edge from $(u,r)$ to $(v, y)$.
        \EndIf
    \EndFor
\EndFor
\end{algorithmic}
\rule{13.5cm}{0.5pt}
\end{center}

Since each BPN of ${\cal P}(n,k-1)$ contains $k-1$ reticulation nodes and each reticulation node has at most two parents that are tree nodes, we obtain the upper bound on $\vert {\cal C}_1(n,k)\vert$.
\\

(2) Let $N \in {\cal C}_2(n,k)$, and suppose that $p, r_1, r_2, r_3$ are four nodes of $N$ that induce a subgraph of type-2, as shown in Figure~\ref{fig:4types}b. 
We say that two reticulation nodes form a \emph{stacked reticulation pair} if one is the child of the other, as in the case of $r_2$ and $r_3$.

First, consider the special case in which $r_3$ is also the unique child of $r_1$. 
In this case, deleting the edge $(p,r_1)$ and contracting $p$ and $r_1$ yields a BPN $N' \in {\cal P}(n,k-1)$ in which $r_2$ and $r_3$ form a stacked reticulation pair. Conversely, $N$ can be reconstructed from $N'$ by inserting an edge from an incoming edge of $r_2$ to the incoming edge of $r_3$ that is not from $r_2$. 
Therefore, the number of BPNs containing a type-2 induced subgraph with this special property is at most
\[
2 \times (k-2) \times \vert {\cal P}(n,k-1)\vert 
= 2(k-2)\vert {\cal P}(n,k-1)\vert.
\]
This is because $k-1$ reticulation nodes can form at most $k-2$ stacked pairs.

Next, suppose that $r_3$ is not a child of $r_1$. 
Deleting the edge $(p,r_2)$ and contracting $p$ and $r_2$ yields a BPN $N' \in {\cal P}(n,k-1)$ in which $r_1$ and $r_3$ remain two unrelated reticulation nodes. Conversely, $N$ can be reconstructed from $N'$ by inserting an edge between an incoming edge of $r_1$ and an incoming edge of $r_3$. 
Therefore, the number of BPNs containing a type-2 induced subgraph in which the bottom reticulation in the stacked pair is not a child of the other reticulation node is at most
\[
2\times 4 \times {k-1 \choose 2} \times \vert {\cal P}(n,k-1)\vert 
= 4(k-1)(k-2)\vert {\cal P}(n,k-1)\vert.
\]

Combining the above two bounds yields the desired upper bound.
\\

(3) Let $N \in {\cal C}_3(n,k)$, and suppose that $p, q, r_1, r_2$ are four nodes of $N$ that induce a subgraph of type-3, as shown in 
Figure~\ref{fig:4types}c. 
Let  $s_1$ be the child of $r_1$.  

Delete the edge $(p, r_1)$ and then contract the nodes $p$ and $r_1$. 
The resulting BPN $N'$ is a BPN in ${\cal P}(n,k-1)$. In $N'$, the node $r_2$ remains a reticulation node, $q$ replaces $p$ as a parent of $r_2$ and replaces $r_1$ as  a parent of $s_1$. 
Conversely, inserting an edge from 
$(q, r_2)$ to $(q, s_1)$ in $N'$ reconstructs $N$.

Based on this observation, all BPNs in ${\cal C}_3(n,k)$ can be generated from
BPNs of ${\cal P}(n,k-1)$  by the following edge insertion procedure:
\\

\begin{center}
\rule{13.5cm}{0.5pt}
\begin{algorithmic}
\For{each BPN $B \in {\cal P}(n,k-1)$}
    \For{each reticulation node $r$ in $B$ with parents $u$ and $v$}
        \If{$u$ is a tree node with the other child $x$}
            \State Construct a BPN by inserting an edge from $(u,r)$ to $(u,x)$.
        \EndIf
        \If{$v$ is a tree node with the other child $y$}
            \State Construct a BPN by inserting an edge from $(v,r)$ to $(v, y)$.
        \EndIf
    \EndFor
\EndFor
\end{algorithmic}
\rule{13.5cm}{0.5pt}
\end{center}

Since each BPN of ${\cal P}(n,k-1)$ contains $k-1$ reticulation nodes and each reticulation node has at most two parents that are tree nodes, $\vert {\cal C}_3(n,k)\vert
\leq 2(k-1) \, |{\cal P}(n, k-1)|$.
\\

(4) By arguments analogous to Parts (1) and (3), all BPNs in ${\cal C}_4(n,k)$ can be generated from BPNs of ${\cal P}(n,k-1)$  using  the following edge insertion procedure:
\begin{center}
\rule{13.5cm}{0.5pt}
\begin{algorithmic}
\For{each BPN $B \in {\cal P}(n,k-1)$}
    \For{each reticulation node $r$ in $B$ with parents $u$ and $v$}
        \If{$v$ is a tree node}
            \State Construct a BPN by inserting an edge from $(u,r)$ to $(v, r)$.
        \EndIf
         \If{$u$ is a tree node}
            \State Construct a BPN by inserting an edge from $(v,r)$ to $(u, r)$.
        \EndIf
    \EndFor
\EndFor
\end{algorithmic}
\rule{13.5cm}{0.5pt}
\end{center}

Since there are $k-1$ reticulation nodes, this edge insertion procedure leads to an upper bound for counting BPNs in ${\cal C}_4(n,k)$.
\\

(5) Let $N \in {\cal P}(n,k)$ be a BPN that contains a subgraph $G$ of type-5, induced by $\{p, q, r_1, r_2\}$ as shown in Figure~\ref{fig:4types}.

First, suppose that the top node $q$ of $G$ is a child of a tree node $u$, as illustrated in the top-right panel of Figure~\ref{t5_transfer1}. By relocating $q$ to another outgoing edge of $u$ and deleting the edge from $p$ to $r_1$, we obtain a BPN $N' \in {\cal P}(n,k-1)$ in which $u$, $q$, and $r_2$ induce a triangle. Conversely, $N$ can be recovered uniquely from $N'$ by reversing this procedure.

Next, suppose that the top node $q$ of $G$ is a child of a reticulation node $s$. Applying the same transformation yields a BPN $N' \in {\cal P}(n,k-1)$ in which $s$, $q$, and $r_2$ induce a triangle. Again, the reverse operation reconstructs $N$ from $N'$.

Therefore, by Part (4),  the number of BPNs in which a type-5 induced subgraph has its top node below an internal node is at most $2(k-1)\,\vert {\cal P}(n,k-1) \vert$. 

Moreover, BPNs in which the top node of a type-5 induced subgraph is the child of the root are in one-to-one correspondence with BPNs in ${\cal P}(n,k-2)$.

In total, the number of BPNs containing a type-5 induced subgraph is at most
\[
2(k-1)\,\vert {\cal P}(n,k-1) \vert
+ \vert {\cal P}(n,k-2) \vert
\leq 2k\,\vert {\cal P}(n,k-1) \vert.
\]
The proof is done.
\end{proof}

\begin{figure}[htbp]
    \centering
    \includegraphics[width=0.4\textwidth]{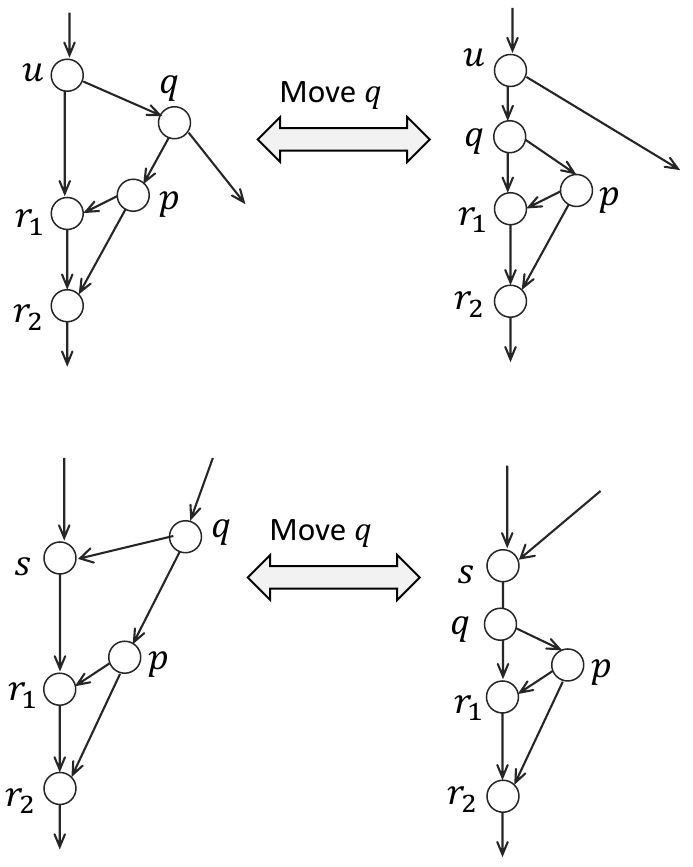}
    \caption{
    Two methods for constructing BPNs that contain an induced subgraph of type-5 via node relocation. 
{\bf Top}: A construction in which the top node $q$ of the induced type-5 subgraph is a child of a tree node $u$. 
{\bf Bottom}: A construction in which the top node $q$ is a child of a reticulation node $s$. 
    \label{t5_transfer1}}
\end{figure}


\section{The Main Theorem}

By Proposition~\ref{prop777}, we obtain the following inequality:
\begin{eqnarray}
 && \vert \cup^{5}_{i=1} {\cal C}_i(n, k)\vert  \nonumber \\
 &\leq &
 \{6(k-1)+(k-1)(4k-6)+2k\} \cdot \vert {\cal P}(n, k-1)\vert \nonumber  \\
 &=& 2(2k^2-k) \vert {\cal P}(n, k-1)\vert  \label{eqn33333}
 \end{eqnarray}

Let ${\cal Q}(n, k)={\cal P}(n, k)\setminus \left(\cup^{5}_{i=1} {\cal C}_i(n, k)\right)$.
Let ${\cal Q}(n,k,c)$ denote the set of BPNs in ${\cal Q}(n,k)$ with exactly $c$ reticulation clusters.
By Theorem~\ref{theorem111}, each BPN in ${\cal Q}(n, k, c)$ can be obtained from BPNs in ${\cal P}(n, k-1)$ by 
inserting an edge in $k+c$ different ways. 
Since every BPN in ${\cal P}(n,k-1)$ has $2n+3k-4$ edges, 
\begin{eqnarray}
  \sum^{k}_{c=1}(k+c) \vert {\cal Q}(n, k, c) \vert 
  \leq (2n+3k-4)(2n+3k-5)\vert {\cal P}(n, k-1) \vert.
  \label{eqn222}
\end{eqnarray}

In addition, 
suppose that $N\in {\cal Q}(n,k,c)$  contains 
$c'$ reticulation clusters with sizes $t_1,\ldots,t_{c'}$ and 
$c-c'$ reticulation clusters of size 1, where 
$t_i>1$ for each $i$. Then we can obtain $N$ in 
$\sum_{i=1}^{c'}t_i$ different ways. Note that
$\sum_{i=1}^{c'}t_i\geq (c'-c)(1-1)+\sum_{i=1}^{c'}(t_i-1) \geq k-c$. 
Since each BPN in ${\cal P}(n,k-1)$ has at most $2(k-1)$ edges entering 
reticulation clusters, we obtain
\begin{eqnarray}
  \sum^{k}_{c=1}(k-c) \vert {\cal Q}(n, k, c) \vert 
  \leq 2(k-1)(2n+3k-4)\vert {\cal P}(n, k-1) \vert.
  \label{eqn333}
\end{eqnarray}

Combining Inequalities (\ref{eqn222}) and 
(\ref{eqn333}), we obtain the following inequalities:
\begin{eqnarray*}
& & 2k \vert {\cal Q}(n, k) \vert  \nonumber \\
&= & \sum^{k}_{c=1}(k+c) \vert {\cal Q}(n, k, c) \vert 
 + \sum^{k}_{c=1}(k-c) \vert {\cal Q}(n, k, c) \vert  \nonumber \\
&\leq & [(2n+3k-4)(2n+3k-5)+2(2n+3k-4) (k-1)]\;\vert {\cal P}(n, k-1) \vert \\
&= & (2n+3k-4) (2n+5k-7) \vert {\cal P}(n, k-1) \vert.
\end{eqnarray*}
and, by (\ref{eqn33333}), 
\begin{eqnarray}
  && 2k \vert{\cal P}(n, k)\vert \nonumber \\
  &\leq & 2k \vert {\cal Q}(n, k) \vert +  2k \vert \cup^{5}_{i=1} {\cal C}_i(n, k)\vert \nonumber\\
  &\leq & (2n+3k-4) (2n+5k-7) \vert {\cal P}(n, k-1) \vert
  + 2k(4k^2-2k) \vert {\cal P}(n, k-1) \vert  \nonumber \\
  &\leq & (2n+3k-4) (2n+9k-2) \vert {\cal P}(n, k-1) \vert  \label{inequality555}
\end{eqnarray}
for any $1\leq k < \sqrt{n}$.

\begin{theorem}
\label{thm2}
   Let ${\cal P}(n, k)$ denote the set of BPNs with $k$ reticulation nodes on a set of $n$ taxa. 
   For any $k$ such that $1\leq k < \sqrt{n}$, 
    \begin{equation}\label{eqn322}
        |{\cal P}(n,k)| \leq {n\choose k} \frac{2^{n+k-1/2}n^{n+k-1}}{e^n}e^{\frac{7k^2}{n} 
    +\frac{1}{12n}+\frac{1}{12(n-k)}}.
    \end{equation} 
\end{theorem}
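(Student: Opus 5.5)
The plan is to iterate the one-step recursion (\ref{inequality555}) down to $k=0$, where the count is known exactly, and then estimate the resulting product by elementary exponential and Stirling-type bounds. Since $1\le j\le k<\sqrt n$ for every intermediate index $j$, inequality (\ref{inequality555}) applies at every step and gives
\[
|{\cal P}(n,j)| \le \frac{(2n+3j-4)(2n+9j-2)}{2j}\,|{\cal P}(n,j-1)|, \qquad 1\le j\le k .
\]
Chaining these $k$ inequalities and using $|{\cal P}(n,0)|=t_n=\frac{(2n-2)!}{2^{n-1}(n-1)!}$ from Proposition~\ref{proposition4}, I obtain
\[
|{\cal P}(n,k)|\le t_n\prod_{j=1}^{k}\frac{(2n+3j-4)(2n+9j-2)}{2j}
= t_n\,\frac{2^{k}n^{2k}}{k!}\prod_{j=1}^{k}\Bigl(1+\frac{3j-4}{2n}\Bigr)\Bigl(1+\frac{9j-2}{2n}\Bigr).
\]

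Next I bound the product using $1+x\le e^{x}$:
\[
\prod_{j=1}^{k}\Bigl(1+\frac{3j-4}{2n}\Bigr)\Bigl(1+\frac{9j-2}{2n}\Bigr)\le \exp\Bigl(\sum_{j=1}^k\frac{12j-6}{2n}\Bigr)=e^{3k^2/n}.
\]
To produce the binomial coefficient I write $\frac{n^{k}}{k!}=\binom nk\prod_{i=0}^{k-1}(1-i/n)^{-1}$. For $0\le x\le 1/2$ one has $-\log(1-x)\le x+x^2$, so
\[
\prod_{i<k}(1-i/n)^{-1}\le \exp\Bigl(\frac{k^2}{2n}+\frac{k^3}{3n^2}\Bigr)\le e^{k^2/n},
\]
where the last step uses $k<\sqrt n$. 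So far this gives
\[
|{\cal P}(n,k)|\le \binom nk\,2^k n^k\,t_n\,e^{4k^2/n}.
\]
This leaves slack of $3k^2/n$ in the exponent relative to the target $7k^2/n$. The slack can absorb any small correction terms, such as the $O(k/n)$ terms I discarded above and the factor $2n/(2n-1)$ that will appear below.

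It remains to show that
\[
t_n\le 2^{n-1/2}n^{n-1}e^{-n}\cdot e^{\frac1{12n}+\frac1{12(n-k)}}
\]
up to a factor absorbable in the slack. I rewrite $t_n=\frac{(2n)!}{2^n n!\,(2n-1)}$ and apply Robbins' form of Stirling's formula: an upper bound for $(2n)!$, namely $\sqrt{4\pi n}(2n/e)^{2n}e^{1/(24n)}$, and a lower bound for $n!$, namely $\sqrt{2\pi n}(n/e)^n$. This gives
\[
t_n\le \frac{2^{n+1/2}n^n e^{-n}}{2n-1}\,e^{1/(24n)}=2^{n-1/2}n^{n-1}e^{-n}\cdot\frac{2n}{2n-1}\,e^{1/(24n)}.
\]
The factor $\frac{2n}{2n-1}\le e^{1/(2n-1)}$ is the main obstacle. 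It is not obviously covered by $\frac{1}{12n}+\frac1{12(n-k)}$ alone, so I would absorb it into the unused $3k^2/n$ margin, which works since $k\ge1$ and $\frac{1}{2n-1}\le\frac{3}{n}\le \frac{3k^2}{n}$. If one prefers the error terms exactly as stated, one can instead carry the Stirling bounds through $\binom nk$ (which is presumably where the $\frac1{12(n-k)}$ comes from) rather than through the crude product bound. Either way the total exponent is at most $\frac{7k^2}{n}+\frac1{12n}+\frac1{12(n-k)}$, which yields (\ref{eqn322}).
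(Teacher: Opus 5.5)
Your proof is correct. It shares the paper's skeleton: iterate (\ref{inequality555}) down to $|{\cal P}(n,0)|=t_n$, then estimate the product. The estimation step, however, is done differently.

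\textbf{The paper's estimate.} It bounds $(2n+3j-4)(2n+9j-2)\le(2n+6j-3)^2\le(2n+6j-3)(2n+6j-2)$ and dominates the product by the falling factorial $(2n+6k-2)!/(2n+4k-2)!$. It then rewrites everything as $\binom nk$ times a ratio involving $(2n+6k-2)!$, $(2n)!$, $(n-k)!$, $(2n+4k-2)!$ and $n!^2$, and applies Stirling to all of them. That is where $\frac1{12n}$ and $\frac1{12(n-k)}$ come from, the latter from $(n-k)!$.

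\textbf{Your estimate.} You pull $(2n)^2$ out of each factor and get the exact exponent $\sum_j\frac{12j-6}{2n}=\frac{3k^2}{n}$. You convert $n^k/k!$ to $\binom nk$ at the elementary cost $e^{k^2/n}$, and use Stirling only on $t_n$.

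\textbf{What your route buys.} It is shorter and actually proves a stronger inequality. Your total exponent is at most $\frac{4k^2}{n}+\frac1{2n-1}+\frac1{24n}\le\frac{5k^2}{n}+\frac1{24n}$, using $\frac1{2n-1}\le\frac1n\le\frac{k^2}{n}$.

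\textbf{Small corrections to your write-up.}
\begin{itemize}
\item Your closing hedge about carrying Stirling through $\binom nk$ is unnecessary, since a smaller upper bound already implies (\ref{eqn322}).
\item Your remark about discarded $O(k/n)$ terms is vacuous, because your product estimate discarded nothing.
\item The factor $\frac{2n}{2n-1}$ you single out as the main obstacle appears in the paper's argument too. There it is absorbed only implicitly, by the $-\frac{2k}{n}$ inside $\frac{2k(3k-1)}{n}$. The paper's stated last step, replacing $e^{2k(3k-1)/n}$ by $e^{6k^2/n}$, discards exactly that term. Your explicit handling of this factor is the cleaner of the two.
\end{itemize}
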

\begin{proof}
  For $n\ge 2$, by applying Equation~(\ref{eqn22222}) and Inequality (\ref{inequality555}) repeatedly, we obtain:

    \begin{eqnarray*}
    \vert {\cal P}(n,k) \vert &\le& \prod_{j=1}^k \left[\frac{(2n+3j-4)(2n+9j-2)}{2j}\right] \vert {\cal P}(n,0)\vert    \\
    &=& \frac{\prod^k_{j=1}(2n+3j-4)(2n+9j-2)}{2^k k!}\frac{(2n-2)!}{2^{n-1}(n-1)!}   \\
    &\leq & \frac{\prod^k_{j=1}(2n+6j-3)^2}{2^k k!}\frac{(2n-2)!}{2^{n-1}(n-1)!}   \\
    &\leq & \frac{\prod^k_{j=1}(2n+6j-3)(2n+6j-2)}{2^k k!}\frac{(2n-2)!}{2^{n-1}(n-1)!}   \\
    &\leq & \frac{\prod^{2k-1}_{j=0}(2n+6k-2-j)}{2^k k!}\frac{(2n-2)!}{2^{n-1}(n-1)!}   \\
    &=& {n\choose k}  \frac{(2n+6k-2)!(2n)!(n-k)!}{2^{n+k}(2n-1)(2n+4k-2)!n!n!}  
    \end{eqnarray*}
Using  
    \begin{eqnarray}
\sqrt{2\pi m}\left(\frac{m}{e}\right)^m
< m!
<
\sqrt{2\pi m}\left(\frac{m}{e}\right)^m e^{\frac{1}{12m}} \label{factorial_to_power}
\end{eqnarray}
for any natural number $m$, we further have:  
    \begin{eqnarray*}
   \vert {\cal P}(n,k) \vert  &\leq & \binom{n}{k} \frac{(2n+6k-2)^{2n+6k-2+1/2}(2n)^{2n+1/2}(n-k)^{n-k+1/2}e^{(2n+4k-2)+n+n}}{2^{n+k}(2n-1)(2n+4k-2)^{2n+4k-2+1/2}n^{2n+1}e^{(2n+6k-2)+2n+(n-k)}}\\
    && \times e^{\frac{1}{24(n+3k-1)}+\frac{1}{24n}+
    \frac{1}{12(n-k)}}\\
    &\leq & \binom{n}{k} \frac{2^{n+k+1/2}}{(2n-1)n^{1/2}e^{n+k}}\frac{(n+3k-1)^{2n+6k-2+1/2}(n-k)^{n-k+1/2}}{(n+2k-1)^{2n+4k-2+1/2}}\\ 
    && \times e^{\frac{1}{12n}+
    \frac{1}{12(n-k)}} \\
    &=& \binom{n}{k} \frac{2^{n+k+1/2}n^{n+k}}{(2n-1)e^{n+k}} \left(1+ \frac{3k-1}{n} \right)^{2k} \left(1+\frac{k}{n+2k-1}\right)^{2n+4k-3/2} \left(1-\frac{k}{n}\right)^{n-k+1/2}\\
    && \times e^{\frac{1}{12n}+\frac{1}{12(n-k)}}\\
    &\leq& \binom{n}{k} \frac{2^{n+k+1/2}n^{n+k}}{(2n-1)e^{n+k}} e^{\frac{2k(3k-1)}{n}}e^{\frac{k(2n+4k-3/2)}{n+2k-1}} e^{-\frac{k(n-k+1/2)}{n}}\\
    && \times e^{\frac{1}{12n}+ \frac{1}{12(n-k)}}\\
    &=& \binom{n}{k} \frac{2^{n+k+1/2}n^{n+k}}{(2n-1)e^{n}} e^{\frac{2k(3k-1)}{n}}e^{\frac{k}{2(n+2k-1)}} e^{\frac{k(2k-1)}{2n}}\\
    && \times e^{\frac{1}{12n}+\frac{1}{12(n-k)}}\\
    &\leq & {n\choose k} \frac{2^{n+k-1/2}n^{n+k-1}}{e^n} e^{\frac{7k^2}{n} 
    +\frac{1}{12n}+\frac{1}{12(n-k)}}
    \end{eqnarray*}
where 
the second inequality follows by combining common terms and replacing 
    $e^{1/[24(n+3k-1)]}$ with $e^{1/(24n)}$, 
    the 3rd inequality using 
\begin{eqnarray}
1+x\le e^{x} 
\label{bounds_power}
\end{eqnarray}
for any real $x$ such that $1\geq x \geq -1$, and 
the last inequality by replacing both $e^{k/(2(n+2k-1))}$ with $e^{k/(2n)}$ and $e^{2k(3k-1)/n}$ with $e^{6k^2/n}$.
\end{proof}

Let ${\cal TC}(n,k)$ denote the set of tree-child networks with $k$ reticulations on $n$ taxa.  
In their study of the enumeration of tree-child networks, Pons and Batle posed a conjecture stating that the number of tree-child networks with $k$ reticulations on $n$ taxa satisfies the following recurrence relation \cite{Conjecture_Pons}:
\[
\vert {\cal TC}(n,k)\vert
=
(n+1-k)\,\vert {\cal TC}(n,k-1)\vert 
+
\frac{n(2n+k-3)}{n-k}\,\vert {\cal TC}(n-1,k)\vert,
\]
where $n>k\geq 1$.
Recently, Lin et al. proved an equivalent form of this conjecture using Young tableaux \cite{chang2024,Conjecture_Proof}.

This recurrence relation immediately implies the following asymptotic formula for the number of tree-child networks; its proof is given in Appendix~A. 
The formula was previously proved for fixed $k$ in \cite{Conjecture_Pons}.

\begin{proposition}
\label{prop10}
For any $n$ and $k$ such that $1\leq k=o(\sqrt{n})$,
\[
|{\cal TC}(n, k)| =
\binom{n}{k}\frac{2^{n+k-1/2}\,n^{n-1+k}}{e^{n}}
\left\{
1-\frac{\sqrt{\pi}}{2}\left(\frac{k}{n^{1/2}}\right)
+{O\left(\frac{k^2}{n}
\right)}
\right\}.
\]
\end{proposition}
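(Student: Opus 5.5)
The plan is to avoid solving the Pons--Batle recurrence directly. Instead, we use the exact closed-form enumeration that the Young-tableaux proof supplies, namely the ``equivalent form'' of Lin et al.~\cite{chang2024,Conjecture_Proof}. It reads
\[
|{\cal TC}(n,k)| = \frac{n!}{(n-k)!}\cdot \frac{(2n-2)!}{2^{n-1}(n-1)!}\cdot \frac{\,b(n,k)\,}{\,\cdot\,}
\]
for an explicit sum $b(n,k)$. Concretely, $|{\cal TC}(n,k)| = \binom{n}{k}\frac{(2n-2)!}{2^{n-1}(n-1)!}\, k!\, a_{n,k}$, where $a_{n,k}$ is a sum over compositions arising from simplex networks, which generalizes Proposition~\ref{proposition5_complex}. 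If that closed form is not convenient, we take the recurrence itself as the starting point and proceed by an ansatz. Write $|{\cal TC}(n,k)| = \binom{n}{k} A_n B_k(n)$, where $A_n=\frac{2^{n-1/2}n^{n-1}}{e^n}(1+O(1/n))$ is the asymptotic of $t_n$ from Stirling, (\ref{factorial_to_power}). The task then reduces to showing that the normalized quantity $R(n,k) := |{\cal TC}(n,k)|\big/\bigl[\binom{n}{k}2^{n+k-1/2}n^{n+k-1}e^{-n}\bigr]$ satisfies $R(n,k)=1-\frac{\sqrt\pi}{2}\frac{k}{\sqrt n}+O(k^2/n)$ uniformly for $k=o(\sqrt n)$.

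Step 1 is to divide the recurrence by the normalizing factor, which yields a linear recurrence for $R$:
\[
R(n,k)=\alpha(n,k)\,R(n,k-1)+\beta(n,k)\,R(n-1,k),
\]
with explicit coefficients. We expand these using $\binom{n}{k}/\binom{n}{k-1}=\frac{n-k+1}{k}$ and $\binom{n-1}{k}/\binom{n}{k}=\frac{n-k}{n}$, together with $(1-1/n)^{n}=e^{-1}(1-\frac1{2n}+O(n^{-2}))$. This should give $\alpha\approx \frac{k}{2n}\bigl(1+O(k/n)\bigr)$ and $\beta \approx 1-\frac{k}{2n}+\frac{c\,k}{n^{3/2}}\cdots$, up to corrections. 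The exact form needs to be computed carefully, since $\alpha+\beta=1+O(\cdot)$ must hold to the right order.

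Step 2 is to unroll the recurrence in $n$ for fixed $k$:
\[
R(n,k)=\sum_{m\le n}\Bigl(\prod_{m<j\le n}\beta(j,k)\Bigr)\alpha(m,k)R(m,k-1),
\]
with base values $R(k+1,k)$ small. We then argue by induction on $k$ that $R(n,k-1)=1-\frac{\sqrt\pi}{2}\frac{k-1}{\sqrt n}+E$. Inserting this, we approximate the sum by an integral in $m$. The main term should reproduce $1-\frac{\sqrt\pi}{2}\frac{k}{\sqrt n}$. The $\sqrt\pi/2$ emerges from a Gamma-type integral $\int (m/n)^{k/2}\,\frac{dm}{\ldots}$, and equivalently from the known fixed-$k$ expansion in \cite{Conjecture_Pons}, which we would use to calibrate the constant.

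The main obstacle is error control uniform in $k$. Each induction step adds an error, and after $k$ steps the error must total $O(k^2/n)$, so each step may contribute only $O(k/n)$. I would first try to establish explicit two-sided bounds of the form
\[
1-\tfrac{\sqrt\pi}{2}\tfrac{k}{\sqrt n}-C\tfrac{k^2}{n}\le R(n,k)\le 1-\tfrac{\sqrt\pi}{2}\tfrac{k}{\sqrt n}+C\tfrac{k^2}{n},
\]
proved by a double induction on $(n,k)$ directly through the recurrence, with a fixed constant $C$. This reduces each step to checking a polynomial inequality in $k/\sqrt n$, which holds for $k\le \epsilon\sqrt n$. The base cases $k=0$, where $R(n,0)=1+O(1/n)$, and $n=k+1$ are handled separately. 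The delicate point is the term $\frac{\sqrt\pi}{2}k/\sqrt n$ itself, because the recurrence must preserve the square-root behavior. To verify this, I would plug the function $f(n,k)=k/\sqrt n$ into the recurrence and check that it is an approximate solution up to an $O(k/n^{3/2})$ residual per step, which is summable to $O(k^2/n)$.
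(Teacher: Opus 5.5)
There is a genuine gap. The induction you propose cannot close, and the step meant to produce $\frac{\sqrt\pi}{2}$ does not determine it. (The closed form in your first paragraph is never specified, so only the recurrence route is a plan.)

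\textbf{The induction region is not closed.} In your normalization, $\alpha(n,k)=\frac{k}{2n}$ and $\beta(n,k)=\frac{2n+k-3}{2n}\,e\,(1-\frac1n)^{n+k-2}$ exactly, so $\alpha+\beta\neq 1$ and there is no maximum principle. Your two-sided bound is asserted, and its inductive step checked, only for $k\le\epsilon\sqrt n$. Yet the recurrence at $(n,k)$ calls $(n-1,k)$, and your unrolled sum calls $R(m,k-1)$ for all $m$ down to about $k$. The real boundary of that region is the curve $k\approx\epsilon\sqrt n$, not $n=k+1$. Knowing $R$ to within $O(\epsilon^2)$ on that curve is the statement you are trying to prove.

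Near the diagonal your normalization is also on the wrong scale. $R$ equals the bounded quantity $A_{n,k}$ described below times $(2(n+k)-3)!!/(2^{n+k-1/2}n^{n+k-1}e^{-n})$. That factor is about $e^{-1}(4/e)^k$ at $n=k+1$. So a base case at $n=k+1$ cannot feed a fixed-$C$ induction whose step is verified only for $k\le\epsilon\sqrt n$.

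\textbf{The local check cannot fix the constant.} For $k\le\sqrt n$, both $1$ and $k/\sqrt n$ satisfy your recurrence up to residuals $O(k/n^2)$ and $O(k^2/n^{5/2})$. Hence every $1-c\,k/\sqrt n$ passes the test of plugging in $f=k/\sqrt n$. The value $c=\frac{\sqrt\pi}{2}$ is produced near the diagonal by the boundary condition $|{\cal TC}(k,k)|=0$. For $k=1$, it comes from the terms $m=O(1)$ of your unrolled sum. Calibrating against the fixed-$k$ expansion of Pons and Batle gives no control uniform in $k$.

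\textbf{The missing idea.} The paper normalizes instead by $\binom nk(2(n+k)-3)!!$. Then $A_{n,k}=|{\cal TC}(n,k)|/\bigl[\binom nk(2(n+k)-3)!!\bigr]$ satisfies the exact convex combination $A_{n,k}=\frac{k}{2(n+k)-3}A_{n,k-1}+\frac{2n+k-3}{2(n+k)-3}A_{n-1,k}$ on the whole triangle $1\le k\le n-1$. The boundary values are $A_{n,0}=1$ and $A_{k,k}=0$, so $0\le A_{n,k}\le 1$.

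The functions $H_i(n,k)=\binom ki\frac{(2(n+k)-i-3)!!}{(2(n+k)-3)!!}$ are \emph{exact} solutions of this recurrence, so the boundary layer is handled exactly. The difference $A-(H_0-H_1)$ vanishes for $k\le 1$, and on the diagonal it is $O(\sqrt k)\le C\,H_2(k,k)$. The maximum principle then gives $|A_{n,k}-1+H_1(n,k)|\le C\,H_2(n,k)=O(k^2/n)$ for all $(n,k)$, with no restriction to $k\le\epsilon\sqrt n$.

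The hypothesis $k=o(\sqrt n)$ enters only at the end, through $H_1(n,k)=\frac{\sqrt\pi}{2}\frac{k}{\sqrt n}+O(k^2/n)$ and $(2(n+k)-3)!!=2^{n+k-1/2}n^{n+k-1}e^{-n}\bigl(1+O(k^2/n)\bigr)$. The paper carries the expansion to $H_0,\dots,H_4$ with error $O(H_5)$, but $H_0-H_1$ with error $O(H_2)$ already gives the stated proposition.
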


Combining the upper bound in Theorem~\ref{thm2} and 
Proposition~\ref{prop10}, we obtain the following result from the fact that 
${\cal TC}(n, k) \subseteq {\cal P}(n, k)$.

\begin{theorem}
\label{thm3}
Let ${\cal P}(n,k)$ denote the set of BPNs with $k$ reticulations on $n$ taxa. 
If $1\leq k=o(\sqrt{n})$, then
\[
\vert {\cal P}(n,k)\vert
\sim
\binom{n}{k}\frac{2^{n+k-1/2}\,n^{n-1+k}}{e^{n}} \sim \vert{\cal TC}(n, k)\vert.
\]
\end{theorem}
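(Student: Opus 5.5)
This is a sandwich argument: take the lower bound from tree-child networks and the upper bound from Theorem~\ref{thm2}. Write
\[
A(n,k)=\binom{n}{k}\frac{2^{n+k-1/2}\,n^{n-1+k}}{e^{n}}.
\]
The goal is to show $|{\cal P}(n,k)|/A(n,k)\to 1$ as $n\to\infty$ whenever $k=o(\sqrt n)$. The second relation $A(n,k)\sim|{\cal TC}(n,k)|$ then follows directly from Proposition~\ref{prop10}.

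\textbf{Lower bound.} Every tree-child network is a BPN, so ${\cal TC}(n,k)\subseteq{\cal P}(n,k)$. By Proposition~\ref{prop10},
\[
\frac{|{\cal P}(n,k)|}{A(n,k)}\;\ge\;\frac{|{\cal TC}(n,k)|}{A(n,k)}\;=\;1-\frac{\sqrt\pi}{2}\frac{k}{\sqrt n}+O\!\left(\frac{k^2}{n}\right).
\]
Since $k/\sqrt n\to 0$, both $k/\sqrt n$ and $k^2/n$ tend to $0$. Hence the liminf of the ratio is at least $1$.

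\textbf{Upper bound.} Because $k=o(\sqrt n)$, we have $k<\sqrt n$ for all sufficiently large $n$, so Theorem~\ref{thm2} applies and gives
\[
\frac{|{\cal P}(n,k)|}{A(n,k)}\le \exp\!\left(\frac{7k^2}{n}+\frac{1}{12n}+\frac{1}{12(n-k)}\right).
\]
Here $k^2/n\to 0$ by hypothesis, $1/(12n)\to 0$, and $n-k\ge n-\sqrt n\to\infty$. So the exponent tends to $0$ and the limsup of the ratio is at most $1$.

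Combining the two bounds gives $|{\cal P}(n,k)|\sim A(n,k)$, and Proposition~\ref{prop10} gives $|{\cal TC}(n,k)|\sim A(n,k)$.

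The argument has no real obstacle once Theorem~\ref{thm2} and Proposition~\ref{prop10} are available. The only points to handle are that $k$ may vary with $n$, so the asymptotics should be read along any sequence $k=k(n)$ with $k/\sqrt n\to 0$, and that the condition $k<\sqrt n$ in Theorem~\ref{thm2} holds eventually. The substantive work sits in those two earlier results.
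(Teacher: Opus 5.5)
Your proof is correct and is the paper's argument. The lower bound comes from ${\cal TC}(n,k)\subseteq{\cal P}(n,k)$ together with Proposition~\ref{prop10}, the upper bound comes from Theorem~\ref{thm2} (whose factor $e^{7k^2/n+1/(12n)+1/(12(n-k))}\to 1$ when $k=o(\sqrt n)$), and your remarks that $k<\sqrt n$ holds eventually and that the asymptotics are read along sequences $k=k(n)$ make explicit what the paper leaves implicit.
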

Note that, by Proposition~\ref{prop10}, the range of $k$  in Theorem~\ref{thm3} is tight.

\begin{figure}[htbp]
    \centering
    \includegraphics[width=0.3\textwidth]{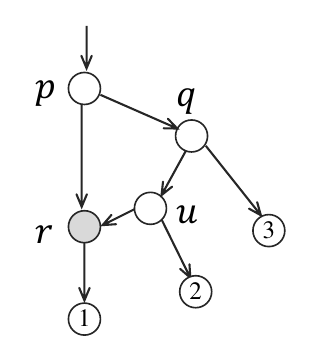}
    \caption{Illustration of free tree nodes and tree edges in tree-child networks. The displayed tree-child network has only one free tree node $q$; the two free edges are $(q, u)$ and $(q, 3)$. 
    \label{free_edge}
    }
\end{figure}

\section{Asymptotic Analysis for Normal Networks}

In this section, we prove the same asymptotic result for normal networks by comparing the number of tree-child networks with the number of normal networks on the same taxon set. 
Throughout this section, we use ${\cal NN}(n,k)$ to denote the set of all normal networks with $k$ reticulations on a fixed set of $n$ taxa, and ${\cal TC}(n,k)$ to denote the set of all tree-child networks with $k$ reticulations on the same set of taxa.

Let $N$ be a tree-child network. 
A tree node is said to be {\it free} if it has out-degree 2 and both of its children are tree nodes or leaves \cite{fuchs2021asymptotic}. 
The two edges from a free tree node to its children are called {\it free tree edges}, as shown in Figure~\ref{free_edge}. 
For the rest of this section, ${\rm FE}(N)$ denotes the set of free tree edges in $N$.

\begin{proposition}
\label{prop11}
In every tree-child network $N\in {\cal TC}(n,k-1)$, there exist (i) $2(n-k)$ free tree edges,  (ii) 
$3k-2$ non-free tree edges and (iii) $(n-k)$ free tree nodes.
\end{proposition}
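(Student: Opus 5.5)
The plan is to count nodes and edges of a tree-child network with $k-1$ reticulations on $n$ taxa and then subtract. By the first proposition of Section~2 applied with $k-1$ reticulations, $N\in{\cal TC}(n,k-1)$ has $n+k-2$ tree nodes. It also has $2n+3(k-1)-1=2n+3k-4$ edges, of which $2n+k-2$ are tree edges and $2(k-1)$ are reticulation edges. The root, with outdegree $1$, is not counted as a tree node. The whole argument then reduces to showing that exactly $2(k-1)$ tree nodes are non-free.

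The key structural step uses the tree-child property twice.

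First, the unique child of a reticulation node must be a tree node or a leaf. Hence no reticulation is a parent of another reticulation, so every reticulation edge starts at a tree node. (The root's child cannot be a reticulation either, since a reticulation has indegree $2$ and would need a second parent above the root's only child; this is immediate from acyclicity.)

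Second, every tree node has at least one child that is a tree node or a leaf. So each tree node is the tail of at most one reticulation edge.

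Together these give a bijection between the $2(k-1)$ reticulation edges and the tree nodes having a reticulation child. Thus exactly $2k-2$ tree nodes are non-free, and the remaining $(n+k-2)-(2k-2)=n-k$ tree nodes are free, which is (iii).

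Each free tree node contributes two free tree edges, and free tree edges are by definition exactly these, so there are $2(n-k)$ of them, which is (i).

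For (ii), subtract from the total number of tree edges: $(2n+k-2)-2(n-k)=3k-2$. As a consistency check, I would also count the non-free tree edges directly. They are the root edge ($1$), the outgoing edges of the $k-1$ reticulations ($k-1$), and the non-reticulation out-edge of each of the $2k-2$ non-free tree nodes ($2k-2$). These total $3k-2$, agreeing with the subtraction.

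There is no real obstacle here. The only point needing care is the claim that each non-free tree node has exactly one reticulation child. This requires both halves of the tree-child condition: one applied at reticulation nodes to rule out reticulation-to-reticulation edges, and one applied at tree nodes to rule out two reticulation children. I would state these two facts explicitly before doing the arithmetic.
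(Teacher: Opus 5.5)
Your proof is correct and essentially matches the paper's. The paper counts the $3k-2$ non-free tree edges directly (the root edge, the $k-1$ reticulation out-edges, and the $2(k-1)$ sibling edges of reticulation parents) and subtracts from $2n+k-2$, which is exactly your ``consistency check''; you instead first count the $2k-2$ non-free tree nodes, and your explicit argument that the $2(k-1)$ reticulation edges have pairwise distinct tree-node tails (including that the root's child is not a reticulation) fills in a point the paper leaves implicit.
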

\begin{proof}
In every tree-child network $N\in {\cal TC}(n,k-1)$, there are 
$2n+k-2$ tree edges and $n+k-2$ tree nodes. 

By the tree-child property, each parent of a reticulation node is a tree node, and its other child is also a tree node or a leaf. However, such a parent is not a free tree node, because one of its children is a reticulation node. Hence the $2(k-1)$ tree edges from these parents to the siblings of the reticulation nodes are not free.

Moreover, the $k-1$ tree edges leaving the reticulation nodes are not free, since their tail endpoints are reticulation nodes rather than tree nodes. The unique edge leaving the root is also not free. Altogether, this accounts for
\[
2(k-1)+(k-1)+1=3k-2
\]
non-free tree edges. Therefore, the number of free tree edges is
\[
(2n+k-2)-(3k-2)=2(n-k).
\]
Since exactly two free tree edges leave each free tree node, the number of free tree nodes is \(n-k\).
\end{proof}

The following fact is straightforward.

\begin{proposition}
\label{prop12}
Let $N$ be a tree-child network in ${\cal TC}(n,k-1)$, and let 
$e'=(u,v)$ and $e''=(p,q)$ be two distinct tree edges in $N$.
Let $M$ be the network generated from $N$ by inserting an edge from $e'$ to $e''$. Then the following statements hold.

\noindent{\rm (1)} The network $M$ is tree-child if and only if the following conditions hold:
\begin{itemize}
    \item $e''$ is a free tree edge;
    \item $e'$ does not lie below the node $q$.
\end{itemize}

\noindent{\rm (2)} Assume that $N$ is normal. The network $M$ is normal if and only if the following conditions hold:
\begin{itemize}
    \item $e''$ is a free tree edge;
    \item $e'$ does not lie below the node $p$, which also implies that $e'\neq e''$;
    \item $e'$ does not lie above the node $p$.
\end{itemize}
\end{proposition}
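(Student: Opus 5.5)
We examine the local structure of $M$ directly. Write the subdivision nodes as $x$ on $e'$ (new edges $(u,x),(x,v)$) and $y$ on $e''$ (new edges $(p,y),(y,q)$), with the inserted edge $(x,y)$. We use the tree-child characterization that every tree node and every reticulation has a child that is a tree node or leaf; equivalently, no node has two reticulation children and no reticulation has a reticulation child. Since $N$ is tree-child, every violation in $M$ must involve a node whose children changed. Those nodes are $u$, $p$, $x$ and $y$, together with the reticulation status of $y$ and $q$.

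For part (1), we check each node in turn.
\begin{itemize}
\item The node $y$ is the new reticulation, and its only child is $q$. So we need $q$ to be a tree node or leaf, i.e.\ $e''$ is a tree edge, which holds by hypothesis.
\item The node $x$ is a tree node with children $v$ and $y$, so it needs $v$ to be a non-reticulation. This holds because $e'$ is a tree edge.
\item The node $p$ now has children $y$ (a reticulation) and its old other child $w$. So $p$ must be a tree node whose other child $w$ is not a reticulation. If $p$ is a reticulation, it would have the reticulation child $y$, which is forbidden. Hence $p$ must be a free tree node, i.e.\ $e''$ is free.
\item The node $u$ keeps a child of the same type (it is $x$ instead of $v$), so no violation arises there.
\end{itemize}
Acyclicity of $M$ (that $y$ is not an ancestor of $x$) amounts to $e'$ not lying below $q$. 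The case $e'=e''$ makes $y$'s parent $x$ and gives an acyclic parallel-type configuration, so we treat it separately. Conversely, if $e''$ is free and $e'$ is not below $q$, the checks above show that $M$ is tree-child.

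For part (2), tree-childness is handled as in part (1). Normality additionally forbids a directed path between the two parents of any reticulation, and it forbids directed paths that are newly created between the parents of existing reticulations. The parents of $y$ are $x$ and $p$.
\begin{itemize}
\item A path from $x$ to $p$ exists iff $v$ is above $p$ or $v=p$, i.e.\ $e'$ lies above $p$.
\item A path from $p$ to $x$ exists iff $u$ is below $p$ or $u=p$, i.e.\ $e'$ lies below $p$.
\end{itemize}
This gives the second and third conditions.

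The main obstacle is showing that adding $(x,y)$ creates no new ancestor relation between the two parents $a,b$ of an old reticulation $r$. Any new path must use $(x,y)$, so it has the form $a\leadsto u\to x\to y\to q\leadsto b$, and hence $u$ is below $a$ and $q$ is above $b$. We plan to exclude this using the normality of $N$ together with the freeness of $p$. Since both children of $p$ are non-reticulations, $p$ is not a parent of $r$, so $b\ne p$. If $b$ is below $q$, then $b$ is below $p$. We expect to derive a contradiction from the ordering of $a$, $u$, $p$ and $b$ combined with the second and third conditions, perhaps by using the relative position of $p$ and $a$. This case analysis is the step we anticipate needing the most care, and if it fails we may have to restrict which old paths are considered.
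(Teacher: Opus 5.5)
The gap is exactly the step you flagged, and it cannot be closed, because the ``if'' direction of (2) is false as written. Your analysis of part (1), and of the new reticulation $y$ in part (2), is sound. There is one small slip: if $u=p$, the children of $p$ in $M$ are $x$ and $y$, not $w$ and $y$, but then $e'=(p,w)$ being a tree edge already makes $p$ free.

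The trouble is that new paths $a\to\cdots\to u\to x\to y\to q\to\cdots\to b$ between the two parents of an old reticulation do occur under the three stated conditions. Here is a counterexample:
\begin{itemize}
\item Let the root's child be $s$, with children $a$ and $p$.
\item Let $a$ have children $r$ and leaf $1$, and let $p$ have children $b$ and leaf $2$.
\item Let $b$ have children $r$ and leaf $3$, and let $r$ have child leaf $4$.
\end{itemize}
This $N$ is normal, since the parents $a,b$ of $r$ are incomparable, and $p$ is free. Take $e''=(p,b)$ and $e'=(a,1)$. Then $a$ is not below $p$ and $1$ is not above $p$, so all three conditions hold, and $M$ is tree-child. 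Yet $a\to x\to y\to b$ is a directed path between the two parents of $r$, so $M$ is not normal. The hypotheses say nothing about where $a$ sits relative to $p$ (here they are siblings). That is why your planned ordering argument on $a,u,p,b$ has nothing to work with.

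Your argument does correctly prove the ``only if'' half of (2). It also proves a corrected ``if'' half: $M$ is normal if and only if the three listed conditions hold and, in addition, $N$ has no reticulation with one parent $a$ such that $u$ equals or lies below $a$ and other parent $b$ equal to or below $q$. Your path analysis shows that every new path between parents of an old reticulation has exactly this form, so excluding it suffices.

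The paper gives no proof of this proposition; it calls the fact straightforward. The count in Proposition~\ref{prop14} treats every insertion satisfying the three conditions as producing a normal network. So these additional bad insertions would have to be subtracted there as well.
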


\begin{proposition}  
\label{prop13}
Let ${\rm low}_N(e)$ denote the number of tree edges below $q$ for a free tree edge $e=(p,q)$ in a tree-child network $N$. Then, for $1\leq k<n$,
\begin{eqnarray}
\label{eqn11}
\vert {\cal TC}(n,k)\vert=
\frac{2(2n+k-3)(n-k)\vert {\cal TC}(n,k-1)\vert
-\displaystyle\sum_{N\in {\cal TC}(n,k-1)}
\sum_{e\in {\rm FE}(N)}{\rm low}_N(e)
}{2k}.
\end{eqnarray}
\end{proposition}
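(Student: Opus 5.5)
We prove Proposition~\ref{prop13} by double counting the pairs $(N,\sigma)$, where $N\in{\cal TC}(n,k-1)$ and $\sigma$ is an edge insertion producing a tree-child network $M\in{\cal TC}(n,k)$.

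\textbf{Insertions per $N$.} Fix $N\in{\cal TC}(n,k-1)$. By Proposition~\ref{prop12}(1), an insertion from $e'$ to $e''$ yields a tree-child network exactly when $e''=(p,q)$ is a free tree edge and $e'$ does not lie below $q$.
\begin{itemize}
\item The tail edge $e'$ must be a tree edge: if $e'$ were a reticulation edge, the new node $p'$ on it would have the reticulation as a child and the new reticulation as its other child, which violates the tree-child property.
\item So $e'$ ranges over the $2n+(k-1)-1=2n+k-2$ tree edges of $N$.
\item The forbidden choices for $e'$ are the ${\rm low}_N(e'')$ tree edges below $q$. If $e'=e''$ itself is allowed, we must account for it separately. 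When $e'=e''$, the insertion subdivides one edge twice, giving a node $w$ with children $w'$ and $w'$ in parallel, which is not a valid network. So we should exclude it, and if ${\rm low}$ excludes only the edges strictly below $q$, the count of admissible tails for $e''$ is $(2n+k-2)-1-{\rm low}_N(e'')=2n+k-3-{\rm low}_N(e'')$.
\end{itemize}
By Proposition~\ref{prop11}, $N$ has $2(n-k)$ free tree edges. Summing over them, the number of valid insertions from $N$ is
\[
2(n-k)(2n+k-3)-\sum_{e\in{\rm FE}(N)}{\rm low}_N(e).
\]
Summing over all $N\in{\cal TC}(n,k-1)$ gives the numerator of~(\ref{eqn11}).

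\textbf{Preimages per $M$.} Fix $M\in{\cal TC}(n,k)$. We show $M$ arises in exactly $2k$ different ways, with ``different'' in the sense of Section~3.
\begin{itemize}
\item In a tree-child network every parent of a reticulation is a tree node, so each reticulation forms its own cluster and $c=k$. The edges from tree nodes into reticulations are exactly the $2k$ reticulation edges.
\item For each reticulation edge $(p,r)$, deleting it and contracting $p$ and $r$ gives a network $N$, and $M$ is obtained from $N$ by inserting an edge between the two resulting edges.
\item Each such $N$ is a BPN. No edge of $M$ is indispensable, since a type-3, 4 or 5 motif requires a tree node with two reticulation children or a reticulation whose parent is a reticulation, both forbidden in tree-child networks.
\item Each such $N$ is tree-child. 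The parent $p$ had a tree or leaf child, which now hangs from $p$'s parent. The other parent of $r$ keeps a non-reticulation child after $r$ is contracted into its child, which is a tree node or a leaf.
\item The $2k$ operations are pairwise different. Two of them could coincide only if they produce the same $N$ with the same edge pair. Then the resulting insertions are identical, so they recover the same labelled edge of $M$. Since the operation remembers which edge was deleted, the counts agree with the labelled enumeration, and Theorem~\ref{theorem111}'s conjugacy issue does not arise.
\item More concretely, type-1 and type-2 motifs cannot occur in tree-child networks: type-1 would give $p'$ two reticulation children, and type-2 has a reticulation $r_3$ with a reticulation parent. Hence Theorem~\ref{theorem111} gives exactly $k+c=2k$ ways.
\end{itemize}

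\textbf{Conclusion.} Equating the two counts gives $2k\,|{\cal TC}(n,k)|$ equal to the numerator of~(\ref{eqn11}), which is the claim. The main obstacle is the bookkeeping in the first count: we must pin down exactly what ${\rm low}_N(e)$ counts (whether it includes $e$ itself) and check that the self-insertion $e'=e''$ and edges above $p$ are treated consistently with Proposition~\ref{prop12}(1). We will verify the constant $2n+k-3$ on a small case, for example $n=2$, $k=1$.
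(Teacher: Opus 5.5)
Your proof is correct and uses the same double count as the paper: $2n+k-3-{\rm low}_N(e)$ admissible tail edges for each of the $2(n-k)$ free tree edges of $N\in{\cal TC}(n,k-1)$, against exactly $2k$ edge-insertion preimages for each network in ${\cal TC}(n,k)$; the paper only asserts the $2k$, so your check that every deletion stays tree-child is a useful addition. One caution about that justification: a tree-child network can contain a triangle $g\to p$, $g\to r$, $p\to r$, whose edge $(g,r)$ is indispensable under the paper's definition, so ``no edge of $M$ is indispensable'' is false as stated. What you need still holds directly, for two reasons. A failed contraction would create a parallel edge, which forces a tree node with two reticulation children or a reticulation with a reticulation child. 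And a tree-child network has no nontrivial leaf-fixing automorphism: walk back from a leaf along a tree path, where every node after the first has a unique parent. Together these give $2k$ pairwise different insertions from the $2k$ reticulation edges.
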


\begin{proof}
By Proposition~\ref{prop11}, each tree-child network 
$N\in {\cal TC}(n,k-1)$ contains $2n+k-2$ tree edges, of which $2(n-k)$ are free tree edges. 

Let $e=(p,q)$ be a free tree edge of $N$. For each tree edge $(u,v)$ that is distinct from $e$ and is not below $q$ in $N$, Proposition~\ref{prop12} implies that inserting an edge from $(u,v)$ to $e$ produces a tree-child network in ${\cal TC}(n,k)$. 
For this fixed free tree edge $e$, there are
\[
2n+k-3-{\rm low}_N(e)
\]
such choices of $(u,v)$. Therefore, the number of tree-child networks generated from $N$ by inserting an edge into a free tree edge is
\[
\sum_{e\in {\rm FE}(N)}
\left(2n+k-3-{\rm low}_N(e)\right)=
2(2n+k-3)(n-k)-\sum_{e\in {\rm FE}(N)}{\rm low}_N(e).
\]

Summing over all networks $N\in {\cal TC}(n,k-1)$ gives
\[
2(2n+k-3)(n-k)\vert {\cal TC}(n,k-1)\vert
-\sum_{N\in {\cal TC}(n,k-1)}
\sum_{e\in {\rm FE}(N)}{\rm low}_N(e)
\]
edge-insertion constructions that produce tree-child networks in ${\cal TC}(n,k)$.

However, the same network in ${\cal TC}(n,k)$ may be obtained from networks in ${\cal TC}(n,k-1)$ in more than one way. More precisely, each tree-child network in ${\cal TC}(n,k)$ can be generated from networks in ${\cal TC}(n,k-1)$ by edge insertion in exactly $2k$ distinct ways. Dividing by $2k$ gives Equality~\eqref{eqn11}.
\end{proof}

Using an argument analogous to the proof of Proposition~\ref{prop13}, we obtain the following identity for normal networks.

\begin{proposition}
\label{prop14}
Let ${\rm top}_{N}(e)$ denote the number of tree edges that lie above $p$ for a free tree edge $e=(p,q)$ in a tree-child network $N$, and let ${\rm low}_N(e)$ be defined as in Proposition~\ref{prop13}. 
Let ${\cal NN}(n,k)$ denote the set of normal networks with $k$ reticulations on a fixed set of $n$ taxa. 
Then
\begin{equation}
\label{eqn13}
\begin{aligned}
\vert {\cal NN}(n,k)\vert
={}&
\frac{(2n+k-4)(2n-2k)\vert {\cal NN}(n,k-1)\vert}{2k} \\
&-
\frac{
\displaystyle\sum_{N\in {\cal NN}(n,k-1)}
\sum_{e\in {\rm FE}(N)}
\left[
{\rm top}_N(e)+{\rm low}_N(e)+{\rm low}_N(f(e))
\right]
}{2k},
\end{aligned}
\end{equation}
where \(f(e)\) denotes the other free tree edge outgoing from the same start endpoint \(p\) of \(e\).
\end{proposition}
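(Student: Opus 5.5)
We follow the double-counting scheme of Proposition~\ref{prop13}, with Proposition~\ref{prop12}(2) in place of part (1). For each $N\in{\cal NN}(n,k-1)$ we count ordered pairs $(e',e'')$ of tree edges of $N$ such that inserting an edge from $e'$ to $e''$ yields a normal network $M\in{\cal NN}(n,k)$. We then sum over $N$ and divide by the multiplicity with which each $M$ arises.

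\textbf{Counting admissible insertions.} By Proposition~\ref{prop12}(2), the target $e''=(p,q)$ must be a free tree edge. By Proposition~\ref{prop11} there are exactly $2(n-k)=2n-2k$ of these, since normal networks are tree-child. The source $e'$ ranges over the $2n+k-2$ tree edges of $N$, minus those excluded by the two conditions. Condition (a) excludes $e'$ lying below $p$; this includes $e''$ itself and the sibling free edge $f(e)$. Condition (b) excludes $e'$ lying above $p$. The edges below $p$ are $e$, $f(e)$, and the tree edges below $q$ and below the other child $q'$ of $p$.

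We fix the convention that "above $p$" counts the edges on directed paths into $p$, including the edge entering $p$, and that this set is disjoint from the "below" set. Then the number of admissible $e'$ is
\[
(2n+k-2)-2-{\rm top}_N(e)-{\rm low}_N(e)-{\rm low}_N(f(e)) = (2n+k-4)-\bigl[{\rm top}_N(e)+{\rm low}_N(e)+{\rm low}_N(f(e))\bigr].
\]
One subtlety must be checked: the sets below $q$ and below $q'$ may overlap, because reticulations can have both parents below $p$. Normality of $N$ rules out one parent being an ancestor of the other, but not both lying below a common $p$. So we must either show that normality of $M$ makes the correct exclusion exactly the union, and match ${\rm low}$ to that convention, or interpret ${\rm top}+{\rm low}+{\rm low}$ as the precise count in the intended sense. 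We expect the statement intends the count of the union, with the convention absorbing any overlap. We would verify this against Figure~\ref{free_edge}-type small examples and, if needed, adjust the reading of ${\rm low}_N(f(e))$ to count edges below $q'$ not already below $q$.

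Summing over $e\in{\rm FE}(N)$ gives
\[
(2n+k-4)(2n-2k)-\sum_{e\in{\rm FE}(N)}\bigl[{\rm top}_N(e)+{\rm low}_N(e)+{\rm low}_N(f(e))\bigr].
\]
Summing over $N\in{\cal NN}(n,k-1)$ then gives the total number of (network, insertion) pairs producing normal networks.

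\textbf{Multiplicity.} It remains to show that each $M\in{\cal NN}(n,k)$ arises in exactly $2k$ distinct ways. We argue this exactly as for tree-child networks in Proposition~\ref{prop13}:
\begin{itemize}
\item In a tree-child network every reticulation has two tree-node parents, so there are exactly $2k$ edges from tree nodes to reticulations. Each reticulation is its own cluster ($c=k$, giving $k+c=2k$), and none of the motifs of Figure~\ref{fig:4types} occurs, since each involves a tree node with two reticulation children or stacked reticulations.
\item Deleting any of the $2k$ edges and contracting yields a network $N$ that is still normal. Removing an arc cannot create a path between the parents of a remaining reticulation, and the tree-child property survives because the freed parent now has a tree-node or leaf child.
\item The $2k$ deletions give distinct (network, edge pair) operations. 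Conjugate edges would require a nontrivial automorphism, which fixes leaves; in tree-child networks every node is determined by its tree-path leaf descendants, which forces the automorphism to be trivial.
\end{itemize}
Dividing by $2k$ yields \eqref{eqn13}.

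\textbf{Main obstacle.} The hardest step is the exclusion count in the first part. We must verify precisely which $e'$ violate normality, for example whether $e'$ parallel to $p$ but sharing a descendant reticulation is fine, and reconcile this with the definitions of ${\rm top}$ and ${\rm low}$ so that no double counting occurs.
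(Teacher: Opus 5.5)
Your scheme is the one the paper intends: count the admissible insertions into each $N\in{\cal NN}(n,k-1)$ via Proposition~\ref{prop12}(2), then divide by $2k$. Your multiplicity step is sound. Deleting any of the $2k$ reticulation arcs of a normal network leaves a normal network, and tree-child networks have no nontrivial leaf-fixing automorphisms. The gap is the per-edge exclusion count, and it cannot be repaired; the paper's one-line ``analogous to Proposition~\ref{prop13}'' has the same defect.

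\textbf{The overlap does not disappear.} With ${\rm low}_N$ as fixed in Proposition~\ref{prop13}, the sum ${\rm low}_N(e)+{\rm low}_N(f(e))$ counts twice every tree edge lying below both children $q,q'$ of $p$. This happens whenever a reticulation has one parent below $q$ and the other below $q'$, which normality allows. Take the network in ${\cal NN}(4,1)$ where the root's child $z$ has children $\alpha,\beta$, with $\alpha\to r,\ell_1$, $\beta\to r,\ell_2$, $r\to w$, $w\to\ell_3,\ell_4$. The free edge $e=(z,\alpha)$ gives $(2n+k-4)-[{\rm top}_N(e)+{\rm low}_N(e)+{\rm low}_N(f(e))]=6-(1+4+4)=-3$. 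The true number of admissible sources is $0$. Re-reading ${\rm low}_N(f(e))$ as ``edges below $q'$ but not below $q$'' changes the statement rather than proving it.

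\textbf{Proposition~\ref{prop12}(2) fails in the ``if'' direction.} You use it as a black box, but the new arc $(x,y)$ creates paths $a\to\cdots\to u\to x\to y\to q\to\cdots\to b$. So $M$ is not normal whenever an old reticulation has one parent $a$ that is an ancestor of (or equal to) $u$ and its other parent $b$ below (or equal to) $q$. This holds even when $e'$ is incomparable with $p$. Take $z\to\gamma,\beta$, $\gamma\to\alpha,\ell_3$, $\alpha\to r,\ell_2$, $\beta\to r,\ell_4$, $r\to\ell_1$. Inserting from $(\beta,\ell_4)$ into the free edge $(\gamma,\alpha)$ meets all three conditions of Proposition~\ref{prop12}(2). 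Yet it creates the path $\beta\to x\to y\to\alpha$ between the two parents of $r$.

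\textbf{The two errors do not cancel, and \eqref{eqn13} is false.} Take $(n,k)=(4,2)$. The $54$ networks of ${\cal NN}(4,1)$ fall into four shapes with $6,12,24,12$ labelings; the two examples above represent the shapes with $6$ and $24$ labelings. For these shapes, $\sum_{e\in{\rm FE}(N)}[{\rm top}_N(e)+{\rm low}_N(e)+{\rm low}_N(f(e))]$ equals $26,20,22,24$ respectively. So the right-hand side of \eqref{eqn13} is $(24\cdot 54-1212)/4=21$. A direct structural count gives $|{\cal NN}(4,2)|=48$, and so does the correct insertion count $192/4$.

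A correct version must subtract three things for each free edge:
\begin{itemize}
\item ${\rm top}_N(e)$;
\item the size of the union of the tree-edge sets below $q$ and $q'$, not the sum of their sizes;
\item the sources $e'$ that close a path between the two parents of an existing reticulation.
\end{itemize}
Any lower bound for $|{\cal NN}(n,k)|$ derived this way must control that last term.
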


\begin{proposition}
\label{prop15}
Let $N\in {\cal NN}(n,k-1)$. Then the following inequality and equality hold:
\begin{eqnarray}
   \sum_{e\in {\rm FE}(N)}{\rm top}_{N}(e) 
   \leq 
   \sum_{e\in {\rm FE}(N)}{\rm low}_{N}(e) 
   + 2(3k-2)(n-k),
    \label{eqn14}\\
   \sum_{e\in {\rm FE}(N)}
   \left[{\rm low}_N(e)+{\rm low}_N(f(e))\right]
   =
   2\sum_{e\in {\rm FE}(N)}{\rm low}_N(e),
   \label{eqn15}
\end{eqnarray}
where $f(e)$ denotes the other free tree edge with the same start endpoint as $e$.
\end{proposition}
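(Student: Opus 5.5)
Both parts of Proposition~\ref{prop15} will be proved by double counting pairs of edges. Throughout, a tree edge $g=(a,b)$ lies above $p$ when $b$ is $p$ or an ancestor of $p$. A tree edge lies below $q$ when its tail is $q$ or a descendant of $q$.

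\textbf{Equality (\ref{eqn15}).} This is immediate. Every free tree node emits exactly two free tree edges, so $f$ is a fixed-point-free involution on ${\rm FE}(N)$, and in particular a bijection. Hence $\sum_{e\in {\rm FE}(N)}{\rm low}_N(f(e))=\sum_{e\in{\rm FE}(N)}{\rm low}_N(e)$, and adding $\sum_{e}{\rm low}_N(e)$ to both sides gives (\ref{eqn15}).

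\textbf{Inequality (\ref{eqn14}).} Write the left side as the number of pairs
\[
\mathcal{A}=\{(e,g): e=(p,q)\in{\rm FE}(N),\ g \text{ a tree edge above } p\}.
\]
Write $\sum_{e}{\rm low}_N(e)$ as the number of pairs
\[
\mathcal{B}=\{(e',g'): e'=(p',q')\in{\rm FE}(N),\ g' \text{ a tree edge below } q'\}.
\]
Split $\mathcal{A}=\mathcal{A}_{\rm free}\cup\mathcal{A}_{\rm nonfree}$ according to whether $g$ is a free tree edge. For $(e,g)\in\mathcal{A}_{\rm free}$ with $g=(a,b)$, the node $b$ is $p$ or an ancestor of $p$. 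The edge $e$ starts at $p$, so $e$ is a tree edge below $b$, and $e\neq g$ by acyclicity. Therefore $(g,e)\in\mathcal{B}$. The map $(e,g)\mapsto(g,e)$ is injective, so $|\mathcal{A}_{\rm free}|\le|\mathcal{B}|=\sum_{e}{\rm low}_N(e)$.

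For $\mathcal{A}_{\rm nonfree}$, apply Proposition~\ref{prop11}: $N\in{\cal NN}(n,k-1)\subseteq{\cal TC}(n,k-1)$ has exactly $2(n-k)$ free tree edges and $3k-2$ non-free tree edges. Hence $|\mathcal{A}_{\rm nonfree}|\le 2(n-k)(3k-2)$. Adding the two bounds gives (\ref{eqn14}).

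\textbf{Main obstacle.} The only delicate point is keeping the conventions for ``above $p$'' and ``below $q$'' consistent with Propositions~\ref{prop13} and~\ref{prop14}. The injection needs $e$ to count as ``below'' the head of $g$ even when that head equals $p$, the tail of $e$. If the paper's convention for ${\rm low}$ excluded this case, I would instead map $(e,g)$ to $(g,e)$ only when $b\neq p$. The extra pairs with $b=p$ would then number at most one per free edge $e$ (the edge entering $p$). These could be absorbed, but that would slightly change the constant. I therefore expect the intended reading to be the inclusive one, which gives exactly the stated bound $2(3k-2)(n-k)$.
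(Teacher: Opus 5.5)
Your proposal is correct and matches the paper's proof. The paper likewise splits ${\rm top}_N(e)$ according to whether the upper edge is free. It bounds the non-free contribution by $|{\rm FE}(N)|\,|{\rm NE}(N)|=2(n-k)(3k-2)$ via Proposition~\ref{prop11}, bounds the free--free pairs by $\sum_e {\rm low}_N(e)$ through the same swap of roles, and gets (\ref{eqn15}) from $f$ being a fixed-point-free involution. On your convention worry: the paper's relation $e_1\twoheadrightarrow e_2$ (``$e_2$ lies below the head of $e_1$'') is the inclusive reading you assumed, so the constant needs no adjustment.
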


\begin{proof}
Let $N\in {\cal NN}(n,k-1)$. 
Recall that ${\rm TE}(N)$ and ${\rm FE}(N)$ denote the sets of tree edges and free tree edges of $N$, respectively. 
Let ${\rm NE}(N)$ denote the set of non-free tree edges. 
By Proposition~\ref{prop11}, we have 
\[
|{\rm FE}(N)|=2(n-k)
\qquad\mbox{and}\qquad
|{\rm NE}(N)|=3k-2.
\]

For two tree edges $e_1=(u,v)$ and $e_2$, write $e_1\twoheadrightarrow e_2$ if $e_2$ lies below the endpoint $v$ of $e_1$. 
Since ${\rm TE}(N)$ is the disjoint union of ${\rm FE}(N)$ and ${\rm NE}(N)$, we have
\begin{align*}
\sum_{e\in {\rm FE}(N)}{\rm top}_{N}(e) 
&=
\sum_{e\in {\rm FE}(N)}
\left|\{e'\in {\rm TE}(N): e'\twoheadrightarrow e\}\right| \\
&=
\sum_{e\in {\rm FE}(N)}
\left|\{e'\in {\rm FE}(N): e'\twoheadrightarrow e\}\right| \\
&\quad+
\sum_{e\in {\rm FE}(N)}
\left|\{e'\in {\rm NE}(N): e'\twoheadrightarrow e\}\right| \\
&\leq
\sum_{e\in {\rm FE}(N)}
\left|\{e'\in {\rm FE}(N): e'\twoheadrightarrow e\}\right|
+
|{\rm FE}(N)|\,|{\rm NE}(N)| \\
&=
\sum_{e\in {\rm FE}(N)}
\left|\{e'\in {\rm FE}(N): e'\twoheadrightarrow e\}\right|
+
2(n-k)(3k-2) \\
&\leq
\sum_{e\in {\rm FE}(N)}{\rm low}_{N}(e)
+
2(n-k)(3k-2).
\end{align*}
This proves Inequality~\eqref{eqn14}.

For every free tree edge $e$, the edge $f(e)$ is the other free tree edge with the same tail as $e$. 
Thus the map $f:{\rm FE}(N)\to {\rm FE}(N)$ is an involution without fixed points; that is,
\[
f(f(e))=e
\qquad\mbox{and}\qquad
f(e)\neq e.
\]
Hence
\[
\sum_{e\in {\rm FE}(N)}{\rm low}_N(f(e))
=
\sum_{e\in {\rm FE}(N)}{\rm low}_N(e).
\]
Therefore,
\[
\sum_{e\in {\rm FE}(N)}
\left[{\rm low}_N(e)+{\rm low}_N(f(e))\right]
=
2\sum_{e\in {\rm FE}(N)}{\rm low}_N(e),
\]
which proves Equality~\eqref{eqn15}.
\end{proof}

Applying Inequalities~(\ref{eqn14}) and (\ref{eqn15}) to Equation~(\ref{eqn13}) and using Equation~(\ref{eqn11}),
we obtain:
\begin{eqnarray*}
\vert {\cal NN}({n,k})\vert
&\geq &\frac{2(2n+k-4)(n-k)\vert {\cal NN}(n,k-1)\vert}{2k} \nonumber \\
&& -\frac{2(3k-2)(n-k)\vert {\cal NN}(n,k-1)\vert+3\sum_{N\in {\cal NN}(n, k-1)}\sum_{e\in {\rm FE}(N)}{\rm low}_N(e)}{2k}\\
&=& \frac{(2n-2k-2)(n-k)\vert {\cal NN}(n,k-1)\vert}{k} - \frac{3\sum_{N\in {\cal NN}(n, k-1)}\sum_{e\in {\rm FE}(N)}{\rm low}_N(e)}{2k}\\
&\geq& \frac{(2n-2k-2)(n-k)\vert {\cal NN}(n,k-1)\vert}{k} - \frac{3\sum_{N\in {\cal TC}(n, k-1)}\sum_{e\in {\rm FE}(N)}{\rm low}_N(e)}{2k}.\\
&=& \frac{(2n-2k-2)(n-k)\vert {\cal NN}(n,k-1)\vert}{k} \\
 && - 3\left(\frac{(2n+k-3)(n-k)\vert {\cal TC}(n,k-1)\vert}{k} -\vert {\cal TC}({n,k})\vert \right).
\end{eqnarray*}
Dividing by ${\cal TC}(n, k)$ on both sides of the above inequality, we have:
\begin{eqnarray*}
\frac{\vert {\cal NN}({n,k})\vert}{\vert {\cal TC}({n,k})\vert} &\geq& 
\frac{(2n-2k-2)(n-k)\vert {\cal NN}(n,k-1)\vert}{k\vert {\cal TC}({n,k})\vert} -\frac{3(2n+k-3)(n-k)\vert {\cal TC}(n,k-1)\vert}{k\;\vert{\cal TC}({n,k})\vert}+3\\
&=& \frac{(2n-2k-2)(n-k)}{k}\cdot \frac{\vert {\cal NN}(n,k-1)\vert}{\vert {\cal TC}({n,k-1})\vert} \cdot \frac{\vert {\cal TC}(n,k-1)\vert}{\vert {\cal TC}({n,k})\vert}
\\ 
&& -\frac{3(2n+k-3)(n-k)}{k}\cdot \frac{\vert {\cal TC}(n,k-1)\vert}{\vert{\cal TC}({n,k})\vert}+3\\
\end{eqnarray*}
Proposition~\ref{prop10} implies the following equality:
\[
\frac{|{\cal TC}(n,k-1)|}{|{\cal TC}(n,k)|}
=
\frac{k}{2n(n-k+1)}
\left[
1+\frac{\sqrt{\pi}}{2\sqrt n}
+o\left(\frac1{\sqrt n}\right)
\right].
\]
for $1\leq k=o(\sqrt{n})$ (Proposition~B1 in the appendix). 
Therefore, using 
 \[\frac{(n-k-1)(n-k)}{n(n-k+1)}
=\left(1-\frac{k}{n}\right)\left(1-\frac{2}{n-k+1}\right)\ge 1-\frac{k+2}{n-k+1} \]
and 
{\[\frac{(2n+k-3)(n-k)}{2n(n-k+1)}
= \left(1+\frac{k-3}{2n}\right)\left(1-\frac{1}{n-k+1}\right)\le1+\frac{k-3}{2n}\]
for $1\le k<n$ , }
we further have: 
\begin{eqnarray*}
\frac{\vert {\cal NN}({n,k})\vert}{\vert {\cal TC}({n,k})\vert} 
&\geq & \frac{(n-k-1)(n-k)}{n(n-k+1)}\cdot \frac{\vert {\cal NN}(n,k-1)\vert}{\vert {\cal TC}({n,k-1})\vert} \cdot \left[
1+\frac{\sqrt{\pi}}{2\sqrt n}
+
o\!\left(\frac{1}{\sqrt{n}}\right)
\right]
\\ 
&& -\frac{3(2n+k-3)(n-k)}{2n(n-k+1)}\cdot  \left[
1+\frac{\sqrt{\pi}}{2\sqrt n}
+
o\!\left(\frac{1}{\sqrt{n}}\right)
\right]+3\\
&\geq & \frac{\vert {\cal NN}(n,k-1)\vert}{\vert {\cal TC}({n,k-1})\vert} \left(1-\frac{k+2}{n-k+1} \right)\left[
1+\frac{\sqrt{\pi}}{2\sqrt n}
+
o\!\left(\frac{1}{\sqrt{n}}\right)
\right] + 3\\
&& -3\left(1+\frac{k-3}{2n} \right)\cdot \left[
1+\frac{\sqrt{\pi}}{2\sqrt n}
+o\!\left(\frac{1}{\sqrt{n}}\right)
\right]\\
&=&\frac{\vert {\cal NN}(n,k-1)\vert}{\vert {\cal TC}({n,k-1})\vert} \left[
1+\frac{\sqrt{\pi}}{2\sqrt n}
+
o\!\left(\frac{1}{\sqrt{n}}\right)
\right] -\frac{3\sqrt{\pi}}{2\sqrt{n}} +o\left(\frac{1}{\sqrt{n}}\right).
\end{eqnarray*}
This implies:
\[ \frac{\vert {\cal NN}({n,k})\vert}{\vert {\cal TC}({n,k})\vert}
\geq
1-\frac{\sqrt{\pi}k}{\sqrt n}
+
o\!\left(\frac{k}{\sqrt n}\right),
\]
shown in Proposition B2 in the appendix. 
Therefore, we have the theorem.
\begin{theorem}
\label{thm4}
Let ${\cal NN}(n,k)$ denote the set of normal networks with $k$ reticulations on a set of $n$ taxa. 
If $1\leq k=o(\sqrt{n})$, then
\[
\vert {\cal NN}(n,k)\vert
\sim
\binom{n}{k}\frac{2^{n+k-1/2}\,n^{n-1+k}}{e^{n}}.
\]
\end{theorem}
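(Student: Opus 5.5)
The plan is a sandwich argument. Normal networks are tree-child networks, which in turn are binary phylogenetic networks, so
\[
|{\cal NN}(n,k)|\le|{\cal TC}(n,k)|\le|{\cal P}(n,k)|.
\]
For the upper bound I would use Theorem~\ref{thm3}, or Proposition~\ref{prop10} directly. Either gives $|{\cal NN}(n,k)|\le|{\cal TC}(n,k)|\sim\binom{n}{k}2^{n+k-1/2}n^{n+k-1}e^{-n}$ when $k=o(\sqrt n)$.

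For the lower bound I would use the inequality just derived. Set $\rho_k=|{\cal NN}(n,k)|/|{\cal TC}(n,k)|$. Since every tree-child network without reticulations is a tree, and hence normal, we have $\rho_0=1$. The derived recursion reads
\[
\rho_k\ge\rho_{k-1}\Bigl(1+\tfrac{\sqrt\pi}{2\sqrt n}+o(n^{-1/2})\Bigr)-\tfrac{3\sqrt\pi}{2\sqrt n}+o(n^{-1/2}).
\]
Because $\rho_{k-1}\le 1$, the factor $(1+\tfrac{\sqrt\pi}{2\sqrt n})$ contributes at most $+\tfrac{\sqrt\pi}{2\sqrt n}$. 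I would write $\rho_{k-1}=1-\delta_{k-1}$ and obtain
\[
\delta_k\le\delta_{k-1}\Bigl(1+\tfrac{\sqrt\pi}{2\sqrt n}+o(n^{-1/2})\Bigr)+\tfrac{\sqrt\pi}{\sqrt n}+o(n^{-1/2}).
\]
Iterating this $k$ times gives
\[
\delta_k\le\Bigl(\tfrac{\sqrt\pi}{\sqrt n}+o(n^{-1/2})\Bigr)\sum_{j<k}\Bigl(1+O(n^{-1/2})\Bigr)^j .
\]
Since $k=o(\sqrt n)$, the geometric factor is $e^{O(k/\sqrt n)}=1+o(1)$. Hence $\delta_k\le\sqrt\pi\,k/\sqrt n+o(k/\sqrt n)=o(1)$, and so $\rho_k\to 1$.

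Combining the two bounds gives $|{\cal NN}(n,k)|\sim|{\cal TC}(n,k)|\sim\binom{n}{k}2^{n+k-1/2}n^{n+k-1}e^{-n}$, which is the claim.

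The main obstacle is keeping the $o(n^{-1/2})$ error terms uniform in the step index $j\le k$ while $k$ itself grows with $n$. The per-step errors come from Proposition~B1, the ratio $|{\cal TC}(n,j-1)|/|{\cal TC}(n,j)|$, and the elementary bounds $(k+2)/(n-k+1)=o(n^{-1/2})$ and $(k-3)/(2n)=o(n^{-1/2})$. I would first check that Proposition~\ref{prop10} supplies each of these uniformly for all $j\le k$. The supporting facts are that $k^2/n\to0$ and that the implied constants in Proposition~\ref{prop10} do not depend on $j$. Given that, a sum of $k$ errors, each $o(n^{-1/2})$ uniformly, totals $o(k/\sqrt n)$, which is what Proposition~B2 requires.
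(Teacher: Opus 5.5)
Your argument is the paper's own. The upper bound is the sandwich $|{\cal NN}(n,k)|\le|{\cal TC}(n,k)|$ with Proposition~\ref{prop10}. The lower bound is the ratio recursion displayed just before the theorem, iterated from $\rho_0=1$. Your bookkeeping with $\delta_k=1-\rho_k\in[0,1]$ is Proposition~B2 in complementary form and is fine.

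One correction to your last paragraph: the per-step precision $o(n^{-1/2})$ must come from Proposition~B1, through its difference estimate for $F_{n,j}-F_{n,j-1}$, not from Proposition~\ref{prop10}. Dividing the expansions of Proposition~\ref{prop10} at $j-1$ and $j$ leaves an $O(j^2/n)$ error, which is $o(n^{-1/2})$ only for $j=o(n^{1/4})$. Uniformity in $j\le k$ is then automatic, by applying B1 along a worst-case sequence $j(n)\le k(n)$.

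The genuine gap is in the recursion you import, and you share it with the paper. It comes from Equation~\eqref{eqn13}, which counts the insertions permitted by Proposition~\ref{prop12}(2). The ``if'' direction of Proposition~\ref{prop12}(2) is false, because an insertion can create a shortcut for a reticulation that already exists. Take the normal network consisting of the root edge into $a$ and the edges $(a,y)$, $(a,p)$, $(y,x)$, $(y,u)$, $(u,v_1)$, $(u,v_2)$, $(p,z)$, $(p,\ell)$, $(z,x)$, $(z,m)$, $(x,w)$; its only reticulation is $x$. The edge $e''=(p,z)$ is free, and $e'=(u,v_1)$ lies neither below nor above $p$. Yet inserting an edge from $e'$ to $e''$ (new nodes $p'$ on $e'$, $q'$ on $e''$) produces the path $y\to u\to p'\to q'\to z$ between the two parents of $x$. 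The result is tree-child but not normal.

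To see what this costs, let $B(N)$ count the insertions into $N\in{\cal NN}(n,k-1)$ that meet the three conditions but create such a shortcut. These are the insertions where, for the parents $y,z$ of some existing reticulation, the tail of $e'$ is $y$ or below $y$, and $z$ is the head of $e''$ or below it. Let $S(N)=\sum_{e\in{\rm FE}(N)}[{\rm top}_N(e)+{\rm low}_N(e)+{\rm low}_N(f(e))]$. What is actually true is
\[
2k\,|{\cal NN}(n,k)|\ \ge\ (2n+k-4)(2n-2k)\,|{\cal NN}(n,k-1)|-\sum_{N\in{\cal NN}(n,k-1)}\bigl(S(N)+B(N)\bigr).
\]
Any overlap between the edge sets below the two children of $p$ only helps this inequality. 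Nothing in Propositions~\ref{prop13}--\ref{prop15} controls $\sum_N B(N)$, so the lower bound on $\rho_k$ you start from is not established.

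To repair it you need $\sum_{N\in{\cal NN}(n,j-1)}B(N)=o(n^{3/2})\,|{\cal TC}(n,j-1)|$ uniformly for $j\le k$. Then the extra term shifts each step of the recursion by only $o(n^{-1/2})$, and your iteration goes through verbatim. Heuristically $B(N)$ is of order $n$ per reticulation: the tree edges below $y$ times the free edges above $z$, each typically of order $\sqrt n$. That gives $O(kn)=o(n^{3/2})$, which is why the theorem is plausible. But that estimate is a missing ingredient, not a consequence of anything proved so far.
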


\section{Conclusions}

It remains unclear how the techniques developed for the asymptotic counting of BPNs in 
\cite{chang2024,fuchs2024asymptotic,HaoYu2026_JCB,mansouri2022counting} can be applied when $k$ is non-constant but satisfies $k=o(\sqrt{n})$. 
In this paper, we studied the generation of binary phylogenetic networks by edge insertion.
By bounding the contribution of networks with exceptional local structures and combining these bounds with known asymptotic formulas for tree-child networks, we derived the same asymptotic counting result for BPNs, tree-child networks, and normal networks with $k$ reticulations on $n$ taxa when $k=o(\sqrt{n})$.
This range is essentially tight for our asymptotic formula, as the result does not extend in general to the regime $k\geq \sqrt{n}$.

This study raises the following research problem:
\begin{quote}
 Does the same asymptotic counting result hold for  the class of galled networks when $1\leq k=o(\sqrt{n})$?
\end{quote}
Here, a \emph{galled network} is a BPN in which the two parents of each reticulation node lie in  the same tree component. We note that the same asymptotic counting result  holds for galled networks 
in the more restricted range 
$1\leq k=o\left(n^{1/3}\right)$
(see the appendix, where its proof is obtained by modifying the argument in \cite{chang2024}).

\section*{Acknowledgments}
This work was financially supported by the Singapore MOE Academic Research
Fund Tier 1 [A-8001951-00-00].
The authors thank Michael Fuchs for his comments on Proposition~4.1 and Mike Steel for helpful communication regarding their work on normal networks.



\clearpage
\appendix

\markboth{Appendix}{Appendix} 

\begin{center} {\large  {\bf APPENDIX} }
\end{center}

\section*{A. Proof of Proposition 4.1}

 We  have
\begin{equation}\nonumber
    \vert{\cal TC}(n, k)\vert 
=
(n-k+1)\vert{\cal TC}(n, k-1)\vert 
+
\frac{n(2n+k-3)}{n-k}\vert{\cal TC}(n-1, k)\vert ,
\end{equation}
for $1\le k\le n-1$, and 
\[
\vert{\cal TC}(n, 0)\vert =(2n-3)!!=\frac{(2n-2)!}{2^{n-1}(n-1)!},
\qquad
\vert{\cal TC}(k, k)\vert =0.
\]

We extend the asymptotic count of ${\cal TC}(n, k)$ for constant $k$ in Pons and Batle's paper to the range $k=o(\sqrt{n})$. By convention, we set $0!!=(-1)!!=1$.
\\

\noindent {\bf Proposition.}
 For \(1\leq k=o(\sqrt n)\), 

\begin{eqnarray*}
\vert{\cal TC}(n, k)\vert 
&=& 
\binom{n}{k}\sqrt2 e^{-n}(2n)^{n+k-1} \\
&& \times \Bigg[
1
-\frac{\sqrt{\pi}k}{2\sqrt n}
+
\frac{14k^2-26k+11}{24n} 
-\frac{\sqrt{\pi}k(31k^2-93k+70)}{192n^{3/2}} \\
& & 
+
\frac1{n^2}
\left(
\frac{725}{6048}k^4
-\frac{599}{1008}k^3
+\frac{1579}{1512}k^2
-\frac{179}{224}k
+\frac{265}{1152}
\right) 
+
O\left(\frac{k^5}{n^{5/2}}\right)
\Bigg].
\end{eqnarray*}
\begin{proof} 
Define
\[
A_{n,k}
=
\frac{ \vert{\cal TC}(n, k)\vert }
{\binom{n}{k}(2(n+k)-3)!!},
\]
for any $1\leq k<n$.

We first derive the recurrence for \(A_{n,k}\).  By definition, 
\[
\vert{\cal TC}(n, k-1)\vert 
=
\binom{n}{k-1}(2(n+k)-5)!!A_{n,k-1},
\]
and we have
\[
\begin{aligned}
\frac{(n-k+1)\vert{\cal TC}(n, k-1)\vert }
{\binom{n}{k}(2(n+k)-3)!!}
&=
(n-k+1)
\frac{\binom{n}{k-1}}{\binom{n}{k}}
\frac{(2(n+k)-5)!!}{(2(n+k)-3)!!}
A_{n,k-1}  
&=
\frac{k}{2(n+k)-3}A_{n,k-1}.
\end{aligned}
\]
Similarly, by definition, 
\[
\vert{\cal TC}(n-1, k)\vert 
=
\binom{n-1}{k}(2(n+k)-5)!!A_{n-1,k},
\]
and therefore 
\[
\begin{aligned}
\frac{
n(2n+k-3)\vert{\cal TC}(n-1, k)\vert 
}
{(n-k) \binom{n}{k}(2(n+k)-3)!!}
&=
\frac{2n+k-3}{2(n+k)-3}A_{n-1,k}.
\end{aligned}
\]
Hence, we obtain:
\[
A_{n,k}
=
\frac{k}{2(n+k)-3}A_{n,k-1}
+
\frac{2n+k-3}{2(n+k)-3}A_{n-1,k}.
\tag{E1}
\]
The two coefficients in Equation~(E1) are non-negative and satisfy
\[
\frac{k}{2(n+k)-3}
+
\frac{2n+k-3}{2(n+k)-3}
=
1.
\]
Thus Equation~(E1) is a weighted-average recurrence.

For \(i\ge0\), define
\[
H_i(n,k)
=
\binom{k}{i}
\frac{(2(n+k)-i-3)!!}{(2(n+k)-3)!!},
\]
where \(\binom{k}{i}=0\) if \(i>k\). We claim that each \(H_i\) satisfies the same recurrence as \(A_{n,k}\), namely
\[
H_i(n,k)
=
\frac{k}{2(n+k)-3}H_i(n,k-1)
+
\frac{2n+k-3}{2(n+k)-3}H_i(n-1,k).
\tag{E2}
\]
Indeed,
\[
H_i(n,k-1)
=
\binom{k-1}{i}
\frac{(2(n+k)-i-5)!!}{(2(n+k)-5)!!},
\]
and
\[
H_i(n-1,k)
=
\binom{k}{i}
\frac{(2(n+k)-i-5)!!}{(2(n+k)-5)!!}.
\]
Hence the right-hand side of (E2) is
\[
\left[
\frac{k}{2(n+k)-3}\binom{k-1}{i}
+
\frac{2n+k-3}{2(n+k)-3}\binom{k}{i}
\right]
\frac{(2(n+k)-i-5)!!}{(2(n+k)-5)!!}.
\]
Since
\[
k\binom{k-1}{i}
=
(k-i)\binom{k}{i},
\]
the sum in the square bracket becomes
\[
\frac{(k-i)+(2n+k-3)}{2(n+k)-3}\binom{k}{i}
=
\frac{2(n+k)-i-3}{2(n+k)-3}\binom{k}{i}.
\]
Therefore the right-hand side of (E2) equals
\[
\binom{k}{i}
\frac{2(n+k)-i-3}{2(n+k)-3}
\frac{(2(n+k)-i-5)!!}{(2(n+k)-5)!!}
=
\binom{k}{i}
\frac{(2(n+k)-i-3)!!}{(2(n+k)-3)!!}
=
H_i(n,k).
\]
This proves (E2).

Now define
\[
S_4(n,k)
=
H_0(n,k)
-H_1(n,k)
+\frac13H_2(n,k)
+\frac{17}{8}H_3(n,k)
-\frac{283}{63}H_4(n,k).
\tag{E3}
\]
Because every \(H_i\) satisfies (E1), the linear combination \(S_4\) also satisfies (E1).

We now show that \(S_4(n,k)\) approximates \(A_{n,k}\) with a rigorous error bound. Let
\[
E_{n,k} = A_{n,k}-S_4(n,k).
\]
Since both \(A_{n,k}\) and \(S_4(n,k)\) satisfy (E1), the error satisfies
\[
E_{n,k} = \frac{k}{2(n+k)-3}E_{n,k-1} + \frac{2n+k-3}{2(n+k)-3}E_{n-1,k}.
\tag{E4}
\]

We control \(E_{n,k}\) by
\[
H_5(n,k)
=
\binom{k}{5}
\frac{(2(n+k)-8)!!}{(2(n+k)-3)!!}.
\]
By (E2), \(H_5\) satisfies the same recurrence:
\[
H_5(n,k)
=
\frac{k}{2(n+k)-3}H_5(n,k-1)
+
\frac{2n+k-3}{2(n+k)-3}H_5(n-1,k).
\tag{E5}
\]

We first check the boundary. We can verify that
\[
E_{n,k}=0, 
\qquad
0\le k\le4.
\tag{E6}
\]

On the diagonal \(n=k\), we have \(A_{k,k}=0\), and therefore
\[
E_{k,k}=-S_4(k,k).
\]
For each fixed \(i\le4\), the standard double-factorial ratio estimate gives
\[
H_i(k,k)
=
\binom{k}{i}
\frac{(4k-i-3)!!}{(4k-3)!!}
=
O(k^{i/2}).
\]
Thus
\[
S_4(k,k)=O(k^2).
\tag{E7}
\]
On the other hand,
\[
H_5(k,k)
=
\binom{k}{5}
\frac{(4k-8)!!}{(4k-3)!!}
=O\left(k^5k^{-5/2}\right)
= O\left(k^{5/2}\right).
\tag{E8}
\]
It follows that there exists a constant \(C>0\) such that
\[
|E_{k,k}|
\le
C H_5(k,k)
\qquad(k\ge5).
\tag{E9}
\]
Together with (E6), this gives the necessary boundary control.

We prove by induction on \(n+k\) that
\[
|E_{n,k}|
\le
C H_5(n,k), \; 0\le k\le n.
\tag{E10}
\]
The cases \(0\le k\le4\) follow from (E6), as \(H_5(n,k)=0\). The diagonal cases \(n=k\) follow from (E9). Now suppose \(5\le k\le n-1\), and assume that (E10) has been proved for \((n,k-1)\) and \((n-1,k)\). By (E4), the non-negativity of the coefficients, and the induction hypothesis,
\[
\begin{aligned}
|E_{n,k}|
&\le
\frac{k}{2(n+k)-3}|E_{n,k-1}|
+
\frac{2n+k-3}{2(n+k)-3}|E_{n-1,k}| \\
&\le
C\left[
\frac{k}{2(n+k)-3}H_5(n,k-1)
+
\frac{2n+k-3}{2(n+k)-3}H_5(n-1,k)
\right].
\end{aligned}
\]
By (E5), the bracket is exactly \(H_5(n,k)\). Hence
\[
|E_{n,k}|\le C H_5(n,k).
\]
This proves (E10).

Since
\[
H_5(n,k)
=
\binom{k}{5}
\frac{(2(n+k)-8)!!}{(2(n+k)-3)!!}
=
O\left(\frac{(k+1)^5}{(n+k)^{5/2}}\right),
\]
we have
\[
A_{n,k}
=
S_4(n,k)
+
O\left(\frac{(k+1)^5}{(n+k)^{5/2}}\right).
\tag{E11}
\]

Since \(k=o(\sqrt n)\),  using Stirling's approximation, we have uniformly in this range,
\[
H_0(n,k)=1,
\]
\[
H_1(n,k)
=
\frac{\sqrt{2\pi} \;k}{2}\frac{1}{\sqrt{2n}}
+
\frac{\sqrt{2\pi}\; k(5-4k)}{8}\left(\frac{1}{\sqrt{2n}}\right)^3
+
{O\left(\frac{k^3}{n^{5/2}} \right)},
\]
\[
H_2(n,k)
=
\binom{k}{2}\left(\frac{1}{\sqrt{2n}}\right)^2
-
(2k-3)\binom{k}{2}\left(\frac{1}{\sqrt{2n}}\right)^4
+
{O\left(\frac{k^4}{n^3}\right)},
\]
\[
H_3(n,k)
=
\frac{\sqrt{2\pi}}{2}\binom{k}{3}\left(\frac{1}{\sqrt{2n}}\right)^3
+
{O\left(\frac{k^4}{n^{5/2}}\right)},
\]
and
\[
H_4(n,k)
=
\binom{k}{4}\left(\frac{1}{\sqrt{2n}}\right)^4
+
{O\left(\frac{k^5}{n^3}\right)}.
\]
Substituting these expansions into (E3) and using (E11), we get
\[
\begin{aligned}
A_{n,k}
={}&
1
-\frac{\sqrt{\pi}}{2}\left(\frac{k}{\sqrt{n}}\right)
+\frac13\binom{k}{2}\left(\frac{1}{\sqrt{2n}}\right)^2 \\
&+
\frac{\sqrt{\pi}}{4}
\left[
k^2-\frac{5k}{4}+\frac{17}{8}\binom{k}{3}
\right]\left(\frac{1}{\sqrt{n}}\right)^3 \\
&-
\left[
\frac{2k-3}{12}\binom{k}{2}
+\frac{283}{252}\binom{k}{4}
\right]\left(\frac{1}{\sqrt{n}}\right)^4 \\
&+
{O\left(\frac{k^5}{n^{5/2}}\right)}.
\end{aligned}
\tag{E12}
\]

Next we expand the double-factorial factor. Stirling's formula gives
\[
(2(n+k)-3)!!
=
\frac{\sqrt{2} (2(n+k))^{n+k-1}}{e^{n+k}}
\left[
1+\frac{11}{24(n+k)}
+\frac{265}{1152(n+k)^2}
+
O\left(\frac{1}{(n+k)^{3}}\right)
\right].
\tag{E13}
\]
Note that
\[
(2(n+k)-3)!!
=
\sqrt2 e^{-n}(2n)^{(n+k)-1}D_{n,k},
\]
where
\[
D_{n,k}
=
e^{-k}
\left(1+\frac{k}{n}\right)^{(n+k)-1}
\left[
1+\frac{11}{24(n+k)}
+\frac{265}{1152(n+k)^2}
+
O((n+k)^{-3})
\right].
\]
Expanding in powers of \((2n)^{-1/2}\), uniformly for \(k=o(\sqrt n)\), gives
\[
\begin{aligned}
D_{n,k}
={}&
1
+
\left(k^2-2k+\frac{11}{12}\right)\left(\frac{1}{\sqrt{2n}}\right)^2 \\
&+
\left(
\frac12k^4
-\frac83k^3
+\frac{59}{12}k^2
-\frac{11}{3}k
+\frac{265}{288}
\right)\left(\frac{1}{\sqrt{2n}}\right)^4 \\
&+
{O\left(\frac{k^5}{n^{5/2}}\right)}.
\end{aligned}
\tag{E14}
\]

Combining
\[
\vert{\cal TC}(n, k)\vert 
=
\binom{n}{k}(2(n+k)-3)!!A_{n,k}
\]
with (E12) and (E14), we obtain the formula in the proposition.
\end{proof}

\section*{B: Two Facts Used in Analysis of Normal Networks}

\noindent {\bf Proposition B1}. For each $k$ such that 
$1\leq k=o(\sqrt{n})$, 
\[
\frac{|{\cal TC}(n,k-1)|}{|{\cal TC}(n,k)|}
=
\frac{k}{2n(n-k+1)}
\left[
1+\frac{\sqrt{\pi}}{2\sqrt n}
+o\left(\frac1{\sqrt n}\right)
\right].
\]
\begin{proof}
Recall that in Appendix A we defined 
\[
A_{n,k}
=
\frac{\vert {\cal TC}(n, k)\vert}
{\binom{n}{k}(2n+2k-3)!!},
\]
and
\[
H_i(n,k)
=
\binom{k}{i}
\frac{(2n+2k-i-3)!!}{(2n+2k-3)!!}.
\]
Now we define 
\[F_{n,k}=A_{n,k}-\Big(H_0(n,k)-H_1(n,k)\Big).\]

By an argument similar to the one used in Appendix A, we obtain:

\[
|F_{n,k}|
\leq C H_2(n,k),
\tag{E15}
\]
for some constant $C>0$, and  that \(F_{n,k}\) satisfies the same recurrence
\[
F_{n,k}
=
\frac{k}{2n+2k-3}F_{n,k-1}
+
\frac{2n+k-3}{2n+2k-3}F_{n-1,k}.
\tag{E16}
\]
We first prove a difference estimate for the error term. We shall prove that, uniformly for \(k=o(\sqrt n)\),
\[
F_{n,k}-F_{n,k-1}
=
o\left(\frac1{\sqrt n}\right).
\tag{E17}
\]

We first prove a stronger estimate. We claim that there exists a constant
\(D>0\) such that, for \(1\leq k\leq n-1\),
\[
\begin{aligned}
\left|F_{n-1,k}-F_{n,k-1}\right|
\leq D\Big\{
& H_2(n-1,k)-H_2(n,k-1) \\
&+H_4(n-1,k)-H_4(n,k-1)
\Big\}.
\end{aligned}
\tag{E18}
\]

Applying (E16) to \(F_{n-1,k}\) and \(F_{n,k-1}\),
we obtain:
\[
\begin{aligned}
F_{n-1,k}-F_{n,k-1}
={}&
\frac{k-1}{2n+2k-5}
\left(F_{n-1,k-1}-F_{n,k-2}\right) \\
&+
\frac{2n+k-5}{2n+2k-5}
\left(F_{n-2,k}-F_{n-1,k-1}\right).
\end{aligned}
\tag{E19}
\]
Note that the two coefficients are nonnegative and their sum is less than \(1\) in (E19).

Similarly, since each \(H_i(n,k)\) satisfies the same recurrence, we have
\[
\begin{aligned}
H_i(n-1,k)-H_i(n,k-1)
={}&
\frac{k-1}{2n+2k-5}
\left(H_i(n-1,k-1)-H_i(n,k-2)\right) \\
&+
\frac{2n+k-5}{2n+2k-5}
\left(H_i(n-2,k)-H_i(n-1,k-1)\right).
\end{aligned}
\tag{E20}
\]
Moreover, by the definition of \(H_i\),
\[
H_i(n-1,k)-H_i(n,k-1)
=
\binom{k-1}{i-1}
\frac{(2n+2k-i-5)!!}{(2n+2k-5)!!}.
\tag{E21}
\]
In particular, these quantities are nonnegative for \(i=2\) and \(i=4\).

We now prove (E18) by induction on \(n+k\). If \(k=0\) or \(k=1\), then
\(F_{n,k}=0\), and hence (E18) holds.

Next consider the boundary case \(n=k+1\). By (E15),
\[
\begin{aligned}
\left|F_{k,k}-F_{k+1,k-1}\right|
&\leq
\vert F_{k,k}|+|F_{k+1,k-1}\vert  \\
&\leq
C H_2(k,k)+C H_2(k+1,k-1).
\end{aligned}
\]
Using the standard double-factorial ratio estimate, both terms on the
right-hand side are \(O(k)\). On the other hand, by (E21),
\[
H_4(k,k)-H_4(k+1,k-1)
=
\binom{k-1}{3}
\frac{(4k-7)!!}{(4k-3)!!}=O( k).
\]
Thus, after increasing \(C\) if necessary, (E18) holds on the boundary
\(n=k+1\). The finitely many small values of \(k\) are absorbed into the
same constant.

Now suppose \(2\leq k\leq n-2\), and assume that (E18) has already been
proved for the two pairs appearing on the right-hand side of (E19). By
(E19), the nonnegativity of the coefficients, and the induction hypothesis,
we obtain
\[
\begin{aligned}
\left|F_{n-1,k}-F_{n,k-1}\right|
\leq{}&
\frac{k-1}{2n+2k-5}
\left|F_{n-1,k-1}-F_{n,k-2}\right| \\
&+
\frac{2n+k-5}{2n+2k-5}
\left|F_{n-2,k}-F_{n-1,k-1}\right| \\
\leq{}&
D\frac{k-1}{2n+2k-5}
\Big\{
H_2(n-1,k-1)-H_2(n,k-2) \\
&\hspace{3.5cm}
+H_4(n-1,k-1)-H_4(n,k-2)
\Big\} \\
&+
D\frac{2n+k-5}{2n+2k-5}
\Big\{
H_2(n-2,k)-H_2(n-1,k-1) \\
&\hspace{3.5cm}
+H_4(n-2,k)-H_4(n-1,k-1)
\Big\}.
\end{aligned}
\]
By (E20), the last expression is at most
\[
D\Big\{
H_2(n-1,k)-H_2(n,k-1)
+
H_4(n-1,k)-H_4(n,k-1)
\Big\}.
\]
This finishes the proof of (E18).

Now return to \(F_{n,k}-F_{n,k-1}\). From (E16), we have
\[
\begin{aligned}
F_{n,k}-F_{n,k-1}
&=
\frac{k}{2n+2k-3}F_{n,k-1}
+
\frac{2n+k-3}{2n+2k-3}F_{n-1,k}
-
F_{n,k-1}  \\
&=
\frac{2n+k-3}{2n+2k-3}
\left(F_{n-1,k}-F_{n,k-1}\right).
\end{aligned}
\tag{E22}
\]
Combining (E18) and (E22), we get
\[
\begin{aligned}
\left|F_{n,k}-F_{n,k-1}\right|
\leq D\Big\{
& H_2(n-1,k)-H_2(n,k-1) \\
&+H_4(n-1,k)-H_4(n,k-1)
\Big\}.
\end{aligned}
\tag{E23}
\]
Using (E21), we have
\[
H_2(n-1,k)-H_2(n,k-1)
=
(k-1)\frac1{2n+2k-5}
=
O\left(\frac{k}{n+k}\right),
\tag{E24}
\]
and
\[
H_4(n-1,k)-H_4(n,k-1)
=
\binom{k-1}{3}
\frac{(2n+2k-9)!!}{(2n+2k-5)!!}
=
O\left(\frac{k^3}{(n+k)^2}\right).
\tag{E25}
\]
Since \(k=o(\sqrt n)\), we have
\[
\frac{k+1}{n+k}
=
o\left(\frac1{\sqrt n}\right),
\]
and
\[
\frac{(k+1)^3}{(n+k)^2}
=
\frac1{\sqrt n}
\left(\frac{k+1}{\sqrt n}\right)^3
O(1)
=
o\left(\frac1{\sqrt n}\right).
\]
Therefore
\[
F_{n,k}-F_{n,k-1}
=
o\left(\frac1{\sqrt n}\right)
\]
uniformly for \(k=o(\sqrt n)\).

Now, we are ready to prove Proposition B1. By definition,
\[
|{\cal TC}(n,k)|
=
\binom{n}{k}(2n+2k-3)!!A_{n,k}.
\]
Therefore, 
\[
\begin{aligned}
\frac{|{\cal TC}(n,k-1)|}{|{\cal TC}(n,k)|}
&=
\frac{\binom{n}{k-1}}{\binom{n}{k}}
\frac{(2n+2k-5)!!}{(2n+2k-3)!!}
\frac{A_{n,k-1}}{A_{n,k}}  \\
&=
\frac{k}{(n-k+1)(2n+2k-3)}
\frac{A_{n,k-1}}{A_{n,k}}.
\end{aligned}
\tag{E26}
\]

We now estimate \(A_{n,k-1}/A_{n,k}\). From the simplified approximation,
\[
A_{n,k}
=
H_0(n,k)-H_1(n,k)+F_{n,k}
=
1-H_1(n,k)+F_{n,k}.
\]
By the difference estimate proved above,
\[
F_{n,k}-F_{n,k-1}
=
o\left(\frac1{\sqrt n}\right)
\tag{E27}
\]
uniformly for \(k=o(\sqrt n)\).

Next,
\[
H_1(n,k)
=
k
\frac{(2n+2k-4)!!}{(2n+2k-3)!!}.
\]
Using the standard double-factorial estimate,
\[
\frac{(2n+2k-4)!!}{(2n+2k-3)!!}
=
\frac{\sqrt{\pi}}{2\sqrt{n+k}}
+
O((n+k)^{-3/2}),
\]
we obtain
\[
H_1(n,k)-H_1(n,k-1)
=
\frac{\sqrt{\pi}}{2\sqrt n}
+
o\left(\frac1{\sqrt n}\right).
\tag{E28}
\]
Hence
\[
\begin{aligned}
A_{n,k-1}-A_{n,k}
&=
H_1(n,k)-H_1(n,k-1)
+
F_{n,k-1}-F_{n,k} \\
&=
\frac{\sqrt{\pi}}{2\sqrt n}
+
o\left(\frac1{\sqrt n}\right).
\end{aligned}
\tag{E29}
\]

Moreover, from
\[
A_{n,k}
=
1-H_1(n,k)+F_{n,k}
\]
and
\[
H_1(n,k)=O\left(\frac{k}{\sqrt n}\right)=o(1),
\qquad
F_{n,k}=O\left(\frac{(k+1)^2}{n}\right)=o(1),
\]
we have
\[
A_{n,k}=1+o(1).
\tag{E30}
\]
Therefore
\[
\frac1{A_{n,k}}=1+o(1).
\tag{E31}
\]
Combining (E29) and (E31), we get
\[
\frac{A_{n,k-1}}{A_{n,k}}
=
1+
\frac{A_{n,k-1}-A_{n,k}}{A_{n,k}}
=
1+
\frac{\sqrt{\pi}}{2\sqrt n}
+
o\left(\frac1{\sqrt n}\right).
\tag{E32}
\]

It remains to expand the first fraction on the right-hand side of (E26).  Since \(k=o(\sqrt n)\),
\[
\frac1{2n+2k-3}
=
\frac1{2n}
\left[
1+o\left(\frac1{\sqrt n}\right)
\right].
\tag{E33}
\]
Substituting (E32) and (E33) into (E26), we obtain
\[
\frac{|{\cal TC}(n,k-1)|}{|{\cal TC}(n,k)|}
=
\frac{k}{2n(n-k+1)}
\left[
1+\frac{\sqrt{\pi}}{2\sqrt n}
+o\left(\frac1{\sqrt n}\right)
\right].
\]
This completes the proof.
\end{proof}

\noindent {\bf Proposition B2.}
Let $a(k)$ be a non-negative function of the integer $k$ such that $a(0)=1$ and
\[
a(k)\geq a(k-1)\left(1+\frac{\sqrt{\pi}}{2\sqrt{n}}\right)
 - \frac{3\sqrt{\pi}}{2\sqrt{n}} + o\left(\frac{1}{\sqrt{n}}\right),
\]
where the $o(1/\sqrt n)$ term is uniform for $1\leq k=o(\sqrt n)$. Then
\[
a(k)\geq 1-\frac{\sqrt{\pi}k}{\sqrt n}
+
o\!\left(\frac{k}{\sqrt n}\right)
\]
for each $k=o(\sqrt{n})$.

\begin{proof}
Set
\[
s_n=\frac{\sqrt{\pi}}{2\sqrt n}.
\]
Then the recurrence can be written as
\[
a(k)\geq (1+s_n)a(k-1)-3s_n+r_n,
\]
where
\[
r_n=o\left(\frac{1}{\sqrt n}\right)
\]
uniformly for $1\leq k=o(\sqrt n)$.

Subtracting \(3\) from both sides gives
\[
a(k)-3
\geq
(1+s_n)\bigl(a(k-1)-3\bigr)+r_n.
\]
Iterating this inequality from \(1\) to \(k\), and using \(a(0)=1\), we obtain
\[
a(k)-3
\geq
(1+s_n)^{k-1}(a(1)-3)
+
r_n\sum_{j=0}^{k-2}(1+s_n)^j.
\]
Since \(a(0)-3=-2\), this becomes
\[
a(k)
\geq
3-2(1+s_n)^{k-1}
+
r_n\sum_{j=0}^{k-2}(1+s_n)^j.
\]

Because \(k=o(\sqrt n)\), we have \(ks_n=o(1)\). Hence
\[
(1+s_n)^{k-1}
=
1+(k-1)s_n+o\left(\frac{k}{\sqrt n}\right).
\]
Also,
\[
\sum_{j=0}^{k-2}(1+s_n)^j
=
O(k),
\]
and therefore
\[
r_n\sum_{j=0}^{k-2}(1+s_n)^j
=
o\left(\frac{k}{\sqrt n}\right).
\]
It follows that
\[
a(k)
\geq
3-2\left[
1+(k-1)s_n+o\left(\frac{k}{\sqrt n}\right)
\right]
+
o\left(\frac{k}{\sqrt n}\right).
\]
Thus
\[
a(k)
\geq
1-2(k-1)s_n
+
o\left(\frac{k}{\sqrt n}\right).
\]
Since
\[
2s_n=\frac{\sqrt{\pi}}{\sqrt n},
\]
replacing $(k-1)$ with $k$, we obtain
\[
a(k)
\geq
1-\frac{\sqrt{\pi}(k-1)}{\sqrt n}
+
o\left(\frac{k}{\sqrt n}\right) 
\geq 1-\frac{\sqrt{\pi}k}{\sqrt n}
+
o\left(\frac{k}{\sqrt n}\right).
\]
This proves the proposition.
\end{proof}

\section*{C. Asymptotic Analysis of Galled Networks}

\subsection*{C1: Three Lemmas} 

The following result is known in the literature.\\

\noindent {\bf Lemma C1.}
{\it 
For $k>0$,
\[
\sum_{i=0}^{\infty} \binom{2i+k}{i}\frac{1}{(2i+k)2^{2i}} = \frac{2^k}{k}.
\]
}

\begin{proof}
Use the identity
\[
\frac{1}{2i+k} = \int_0^1 t^{2i+k-1}\,dt.
\]
Then
\[
\begin{aligned}
S 
&= \sum_{i=0}^{\infty} \binom{2i+k}{i}\frac{1}{(2i+k)4^i} 
= \int_0^1 t^{k-1} \sum_{i=0}^{\infty} \binom{2i+k}{i} \left(\frac{t^2}{4}\right)^i dt.
\end{aligned}
\]

Using the generating function
\[
\sum_{i=0}^{\infty} \binom{2i+k}{i} x^i
=
\frac{1}{\sqrt{1-4x}}
\left(\frac{1-\sqrt{1-4x}}{2x}\right)^k,
\]
with $x = \frac{t^2}{4}$, we obtain
\[
S = \int_0^1
\frac{t^{k-1}}{\sqrt{1-t^2}}
\left(\frac{2}{1+\sqrt{1-t^2}}\right)^k dt.
\]

Let $t=\sin\theta$. Then
\[
S = \int_0^{\pi/2}
\sin^{k-1}\theta
\left(\frac{2}{1+\cos\theta}\right)^k d\theta.
\]

Now set $\theta = 2u$. Since $1+\cos\theta = 2\cos^2 u$ and $\sin\theta = 2\sin u \cos u$, we get
\[
\begin{aligned}
S
&= 2^k \int_0^{\pi/4}
\frac{\sin^{k-1}u}{\cos^{k+1}u}\,du 
= 2^k \int_0^{\pi/4} \tan^{k-1}u\,\sec^2 u\,du \\
&= 2^k \left[\frac{\tan^k u}{k}\right]_0^{\pi/4}
= \frac{2^k}{k}.
\end{aligned}
\]
\end{proof}


\noindent {\bf Lemma C2.}
{\it 
For any $t\geq 2k$, 
\[
\sum_{i=t}^{\infty}\binom{2i+k}{i}\frac{1}{(2i+k)2^{2i}}
\leq \frac{2^k}{\sqrt{\pi t}}.
\]
}

\begin{proof} 
Since \[ \sqrt{2\pi m} \left(\frac{m}{e}\right)^m 
\leq m! \leq \sqrt{2\pi m} \left(\frac{m}{e}\right)^m e^{\frac{1}{12m},}\]
for any integer $m>0$, 
\begin{eqnarray*}
\frac{(2i+k-1)!}{i!(i+k)!} &\leq & \frac{
\sqrt{2\pi(2i+k)}\left(\frac{2i+k}{e}\right)^{2i+k} e^{\frac{1}{12(2i+k)}}
}{
\sqrt{2\pi i}(2i+k)\left(\frac{i}{e}\right)^i
\cdot
\sqrt{2\pi(i+k)}\left(\frac{i+k}{e}\right)^{i+k}
}\\
&=& \frac{
(2i+k)^{2i+k-1/2} e^{\frac{1}{12(2i+k)}}
}{
\sqrt{2\pi}\cdot i^{i+1/2}
\cdot (i+k)^{i+k+1/2}
}\\
&=& \frac{2^{2i+k-1}
(i+k/2)^{2i+k-1/2} }
{
\sqrt{\pi}\cdot i^{i+1/2}
\cdot (i+k)^{i+k+1/2}
}e^{\frac{1}{12(2i+k)}} \\
&=& \frac{2^{2i+k-1}\left(1+\frac{k}{2i}\right)^{i-1}
\left(1-\frac{k}{(2i+2k)}\right)^{i+k+1/2}}
{
\sqrt{\pi}\cdot i^{3/2}
}e^{\frac{1}{12(2i+k)}}\\
&\leq&  \frac{2^{2i+k-1}e^{\frac{k}{2}-\frac{k}{2i}}
e^{-\frac{k}{2}-\frac{k}{4(i+k)}}
}{
\sqrt{\pi}\cdot i^{3/2}
} e^{\frac{1}{12(2i+k)}} \\
&=&  \frac{2^{2i+k-1}
}{
\sqrt{\pi}\cdot i^{3/2}
} e^{\frac{1}{12(2i+k)}-\frac{k}{2i}-\frac{k}{4(i+k)}} \\
&\leq & \frac{2^{2i+k-1}
}{\sqrt{\pi}\cdot i^{3/2}
} \left( 1+\frac{1}{12(2i+k)}-\frac{k}{2i}-\frac{k}{4(i+k)} +
\left(\frac{1}{12(2i+k)}-\frac{k}{2i}-\frac{k}{4(i+k)}\right)^2\right)\\
&\leq & \frac{2^{2i+k-1}
}{\sqrt{\pi}\cdot i^{3/2}
} \left( 1- \frac{3k}{4(i+k)}
+ \left(\frac{k}{i}\right)^2\right)\\
&\leq & \frac{2^{2i+k-1}
}{\sqrt{\pi}\cdot i^{3/2}}, 
\end{eqnarray*}
where the second and third inequalities follow from the fact
$1+m\leq e^m \leq 1+m+m^2$ for $1\geq m\geq -1$;
and the last inequality is derived from the fact that
$$ \frac{3k}{4(i+k)} \geq \left(\frac{k}{i}\right)^2$$
for any $i \geq 2k$.


Therefore, for $t\geq 2k$,
\[
\sum_{i=t}^{\infty}\frac{(2i+k-1)!}{i!(i+k)!2^{2i}}
\leq
\frac{2^{k-1}}{\sqrt{\pi}} 
\sum_{i=t}^{\infty} i^{-3/2}
=
\frac{2^k}{\sqrt{\pi t}}.
\]
\end{proof}

\noindent {\bf Lemma C3.}
{\it 
For any $i>0$ such that $i+k< \sqrt{n/2}$, 
\begin{eqnarray*}
\frac{(2n-2i-2)!(n-k)!}{(n-k-i-1)!(n-k-i)!} 
\geq \frac{2^{2n-3/2}n^{n+k-1}}{2^{2i}e^n} 
\left(1 -
\frac{i(4k-3)+(2k-1)^2}{2(n-i-1)}
 - \frac{7k+i+2}{6n}\right)
\end{eqnarray*}
}
\begin{proof} 
Let the left-hand side of the Inequality in the lemma be A.
By Inequality~(\ref{factorial_to_power}), 
  \begin{eqnarray*}
A&= &  \frac{(n-k-i)(2n-2i-2)!(n-k)!}{(n-k-i)!(n-k-i)!}\\
&\geq & \frac{(n-k)^{n-k+1/2}2^{2n-2i-3/2}(n-i-1)^{2n-2i-3/2}}{e^{n+k-2} (n-k-i)^{2n-2k-2i} }e^{-\frac{1}{6(n-k-i)}}\\
&=& \frac{(n-k)^{n-k+1/2}2^{2n-2i-3/2}(n-i-1)^{2n-2k-2i}(n-i-1)^{2k-3/2}}{e^{n+k-2} (n-k-i)^{2n-2k-2i} }e^{-\frac{1}{6(n-k-i)}}\\
&=& \frac{2^{2n-2i-2}n^{n+k-1}\left(1-\frac{k}{n}\right)^{n-k+1/2}\left(1-\frac{i+1}{n}\right)^{2k-3/2}}{e^{n+k-2}\left(1-\frac{k-1}{n-i-1}\right)^{2n-2k-2i}}e^{-\frac{1}{6(n-k-i)}}\\
&\geq & \frac{2^{2n-2i-2}n^{n+k-1}\left(1-\frac{k}{n}\right)^{n-k+1/2}\left(1-\frac{i+1}{n}\right)^{2k-3/2}}{e^{n+k-2}e^{-2(k-1)+\frac{2(k-1)^2}{n-i-1}}}e^{-\frac{1}{6(n-k-i)}}\\
&=& \frac{2^{2n-2i-2}n^{n+k-1}\left(1-\frac{k}{n}\right)^{n-k+1/2}\left(1-\frac{i+1}{n}\right)^{2k-3/2}}{e^{n-k}}e^{-\frac{2(k-1)^2}{n-i-1}-\frac{1}{6(n-k-i)}}.\\
\end{eqnarray*}
Note that
\[
   e^{-x/(1-x)} \leq 1-x \leq e^{-x}
   \tag{E34}
\]
for any  $0\leq x\leq 1$.

Applying the first inequality in (E34) to 
$1-k/n$ and $1-(i+1)/n$, we obtain: 
\begin{eqnarray*}
A &\geq & \frac{2^{2n-2i-2}n^{n+k-1}
e^{-k-\frac{k}{2(n-k)}}e^{-\frac{(i+1)(4k-3)}{2(n-i-1)}}}{e^{n-k}}e^{-\frac{2(k-1)^2}{n-i-1}-\frac{1}{6(n-k-i)}}\\
&= & \frac{2^{2n-2i-3/2}n^{n+k-1}}{e^{n}} e^{-\frac{2(k-1)^2}{n-i-1}-\frac{(i+1)(4k-3)}{2(n-i-1)}-\frac{1}{6(n-k-i)}-\frac{k}{2(n-k)}}\\
&\geq & \frac{2^{2n-2i-3/2}n^{n+k-1}}{e^{n}} \left(1
 -\frac{2(k-1)^2}{n-i-1}-\frac{(i+1)(4k-3)}{2(n-i-1)}-\frac{1}{6(n-k-i)}-\frac{k}{2(n-k)} \right)\\
&\geq & \frac{2^{2n-2i-3/2}n^{n+k-1}}{e^{n}} \left(1 -
\frac{i(4k-3)+(2k-1)^2}{2(n-i-1)}
 - \frac{7k+i+2}{6n}\right),
\end{eqnarray*}
where the last inequality is derived from 
the following facts:
\[\frac{k}{2(n-k)}\leq \frac{k}{n} \]
for any $k<n/2$, and 
\[ \frac{1}{n-i-1}\leq \frac{i+2}{n}\]
for any $i$ and $k$ such that $k+i+1<n$.
\end{proof}

\subsection*{C2. Proof of Main Result}

Recall that a {galled network} is a BPN in which the two parents of each reticulation node lie in  the same tree-component.

Let $\Sigma$ be a set of $n$ taxa. To obtain a lower bound, we consider tree-child networks on $\Sigma$ in which the parents of each reticulation node are found in the top component. We use ${\cal COM}(A, k)$ to denote the set of such networks with $k$ reticulations on a fixed set $A$ of n taxa.

Let $\Sigma$ be a set of $n$ taxa and $L=\{\ell_1, ..., \ell_k\}$ such that $\Sigma\cap L=\emptyset$. For each subset $\Sigma'\subseteq \Sigma$, we use $X(\Sigma', L)$ to denote the set of  networks in ${\cal COM}(\Sigma'\cup L, k)$ such that the children of the $k$ reticulation nodes are leaves $\ell_1, \ell_2, ..., \ell_k$. We also use ${\cal F}(\Sigma\setminus \Sigma', k)$ to denote the set of forests on $\Sigma\setminus \Sigma'$ each consisting of $k$ subtrees.

Clearly, a BPN on $\Sigma$ in ${\cal COM}(n, k)$ can be uniquely  obtained from a BPN in  $X(\Sigma', L)$  by replacing each leaf $\ell_i$ with a subtree of a forest in ${\cal F}(\Sigma\setminus \Sigma', k)$. Since each forest in ${\cal F}(\Sigma\setminus \Sigma', k)$ contains  at least $k$ taxa, 
$1\leq i=\vert \Sigma'\vert \leq n-k$.  Therefore, 
By Propositions~\ref{proposition4} and \ref{proposition5_complex}, we have: 
\begin{eqnarray*}
     \vert {\cal COM}(\Sigma, k)\vert &= & \sum^{n-k}_{i=1} {n\choose i} \frac{(2n-2i-k-1)!}{2^{n-i-k}(n-i-k)!(k-1)!}\frac{(2i+2k-2)!}{2^{i+k-1}(i-1)!} \;\;\;\;\; \mbox{($i$:  no. of taxa in $\Sigma'$)} \nonumber\\
    &= & \frac{n!}{2^{n-1}(k-1)!} \sum^{n-k}_{i=1} \frac{(2n-2i-k-1)!}{(n-i-k)!(n-i)!}\frac{(2i+2k-2)!}{(i-1)!i!} \nonumber \\ 
     &= & \frac{n!}{2^{n-1}(k-1)!} \sum^{n-k-1}_{j=0} \frac{(2n-2j-k-3)!}{(n-j-k-1)!(n-j-1)!}\frac{(2j+2k)!}{(j)!(j+1)!} \;\;\;
     \mbox{(set $i=1+j$)} \nonumber \\  
     &= & {n\choose k}\frac{k}{2^{n-1}} \sum^{n-k-1}_{b=0} \frac{(2b+k-1)!}{(b)!(b+k)!}\frac{(2n-2b-2)!(n-k)!}{(n-b-k-1)!(n-b-k)!},
     \;\;\;\;\;\;\;\;\; (E35)  \nonumber
\end{eqnarray*}
where the last equality is obtained by setting
 $b=n-1-j-k$, which is the number of taxa in the forest minus $k$.
\\

\noindent {\bf Theorem C1}
{\it  
Let  ${\cal COM}(\Sigma, k)$ be the set of such networks with $k$ reticulations on a fixed set $\Sigma$  of n taxa in which the two parents of every reticulation node are in the top component. Then,
for any $k= o(n^{1/3})$, 
\begin{eqnarray*}
  \vert {\cal COM}(\Sigma, k) \vert \geq 
  {n\choose k} \frac{2^{n+k-1/2}n^{n+k-1}}{e^{n}}\left(1 - o(1)\right).
 \end{eqnarray*}
 }
\begin{proof} 
By 
(E35) and Lemma C3, 
\begin{eqnarray*}
  \vert {\cal COM}(n, k) \vert &\geq &
   {n\choose k}\frac{k}{2^{n-1}} \sum^{\lfloor n^{2/3} \rfloor}_{b=0} \frac{(2b+k-1)!}{(b)!(b+k)!}\frac{(2n-2b-2)!(n-k)!}{(n-b-k-1)!(n-b-k)!}\\
   &\geq & {n\choose k}\frac{2^{n-1/2}n^{n+k-1}k}
   {e^n}\left(\sum^{\lfloor n^{2/3} \rfloor}_{b=0} \frac{(2b+k-1)!}{(b)!(b+k)!}\frac{1}{2^{2b}}\right)
   \left(1-O(n^{-2/3})-O(n^{-1/3})\right).
\end{eqnarray*}

By Lemma~C1 and Lemma~C2, given above,
we further have:

\begin{eqnarray*}
  \vert {\cal COM}(n, k) \vert 
  &\geq & {n\choose k}\frac{2^{n-1/2}n^{n+k-1}k}
   {e^n}\left(\frac{2^k}{k}-\sum^{\infty}_{b=\lfloor n^{2/3} \rfloor+1 } \frac{(2b+k-1)!}{(b)!(b+k)!}\frac{1}{2^{2b}}\right)
   \left(1-O(n^{-1/3})\right)\\
    &\geq & {n\choose k}\frac{2^{n-1/2}n^{n+k-1}k}
   {e^n}\left(\frac{2^k}{k}-\frac{2^k}{k}\frac{k}{\sqrt{\pi}n^{1/3}}\right)
   \left(1-O(n^{-1/3})\right)\\
    &= & {n\choose k}\frac{2^{n+k-1/2}n^{n+k-1}}
   {e^n}\left(1-\frac{k}{\sqrt{\pi}n^{1/3}}\right)
   \left(1-O(n^{-1/3})\right)
\end{eqnarray*}
This concludes the proof.
\end{proof}


\end{document}